\documentclass[12pt]{article}
\usepackage[utf8]{inputenc}
\usepackage[a4paper, total={6.5in, 9.5in}]{geometry}

\usepackage{amssymb}
\usepackage{amsmath} 
\usepackage{amsfonts}
\usepackage{eurosym}
\usepackage{geometry}
\usepackage{ulem}
\usepackage{graphicx}
\usepackage{caption}
\usepackage{color}
\usepackage{setspace}
\usepackage{comment}
\usepackage{footmisc}
\usepackage{natbib}
\usepackage{pdflscape}
\usepackage{subfigure}
\usepackage{array}
\usepackage{amsthm} % to set up the proof format
\usepackage{cleveref} % to ref equations \Cref{}
\usepackage{mathrsfs} 
\usepackage{verbatim}
\usepackage{mathtools}
\usepackage{tikz}
\usepackage{verbatim}

\newtheorem{theorem}{Theorem}
\newtheorem{assumption}{Assumption}

\newtheorem{lemma}{Lemma}
\newtheorem{proposition}{Proposition}
\newtheorem{corollary}{Corollary}

\newtheorem{definition}{Definition}

\newtheorem*{observation}{Observation}

\normalem

%\singlespacing
\onehalfspacing

\begin{document}

\begin{titlepage}
\title{Private Information Acquisition and Preemption: a Strategic Wald Problem
%\thanks{abc}
}
\author{Guo Bai
\thanks{
PhD student, Department of Economics, University College London.
Address: Drayton House, 30 Gordon Street, London, WC1H 0AX.
Email: guo.bai.15@ucl.ac.uk.
I am grateful to Martin Cripps, Konrad Mierendorff and Deniz Kattwinkel for supervision and support. 
} 
%\and 
%John Smith\thanks{abc}
}
\date{\today}

\maketitle

\begin{abstract}
\noindent 
This paper studies a dynamic information acquisition model with payoff externalities.
Two players can acquire costly information about an unknown state before taking a safe or risky action. 
Both information and the action taken are private.
The first player to take the risky action has an advantage but whether the risky action is profitable depends on the state.
The players face the tradeoff between being first and being right.
In equilibrium, for different priors, there exist three kinds of randomisation: when the players are pessimistic, they enter the competition randomly; 
when the players are less pessimistic, they acquire information and then randomly stop;
when the players are relatively optimistic, they randomly take an action without acquiring information.
\\
\vspace{0in}\\
%\noindent\textbf{Keywords:} 
\\
%\vspace{0in}\\
%\noindent\textbf{JEL Codes:} 
\bigskip
\end{abstract}
\setcounter{page}{0}
\thispagestyle{empty}
\end{titlepage}
\pagebreak 
\newpage

%\begin{abstract}
%abstract
%\end{abstract}

\section{Introduction}
\label{sec:intro}

Research and development (R\&D) of a new technology is often competitive as there sometimes exists a first-mover advantage.
The first company to verify the feasibility of the technology and conduct mass production generates more sales than any late competitor.
A company's R\&D hence affects its competitor's profit and vice versa.
However, 
companies may not be able to see its competitor's breakthrough or their start of mass production.
A company may fall behind without even noticing it.
Similarly, the company may not see its competitor's breakdown or their silent exit from competition.
Then, the company's further R\&D will lead to either a `win' in a doomed-to-fail project or a belated exit.

A similar story also applies to other economic or social activities.
For example, researchers compete for novel results.
It takes time for a paper to become public after the researchers find a profound result. 
Moreover, negative results are often not reported, hence not observed.
It is difficult for the researchers to know other researchers' private findings. 
However, 
when other researchers publish their findings first, the novelty disappears.
Other researchers' early publication will decrease the researchers' credit received from the similar findings. 
By the time the researchers observe the publication, it is already too late, and
the cost associated with the research has been incurred already.
Not being able to see other researchers' progress makes it harder for the researchers to evaluate the potential credit they would get from the project.

Another example is market entry.
If a company is the first to enter a new market, they may capture a higher market share.
But before entry, they need to investigate whether the market is of high or low demand.
This investigation takes time and the preparation of entering a new market is also not \textit{immediately} observable.
The company may decide to enter a new market with the belief that they will be the first, but later realise that their competitor has started the whole process earlier than themselves.
Their competitor will then be the pioneer and the company can only follow.
At the time the company observes the competing company's decision, it is too late to reverse their own decision.
Because of the delay of observing the competitor's action, entering a new market is essentially a private action.

The objective of this paper is to understand the tradeoff between being first and being right when information acquisition takes time and opponent's action taken is not immediately observable.
I focus on understanding the players' equilibrium strategy in the presence of payoff externalities while shutting down the information externalities.
The payoff externalities stem from the first-mover advantage.
One's early success reduces the opponent's payoff from late success.
The absence of information externalities is because of the unobervable actions.
There is no information spillover because they do not get any direct or indirect information from each other.

%\paragraph{Model}
To model the interactions described above, 
this paper studies a dynamic model with private information acquisition and payoff externalities.
Both players can acquire costly information about an unknown payoff-relevant state before taking an irreversible action.
The unknown state is either high or low and is constant over time.
If the state is high, it is optimal for the player to take the risky action, and
if the state is low, it is optimal to take the safe action.
Competition is modelled by the payoff externality.
The first player who stops acquiring information and takes the risky action gets a higher payoff in both states.
For example,
two companies competing in the R\&D of a new self-driving car technology.
The unknown payoff-relevant state is the feasibility of this technology.
The R\&D is abstracted as paying a fixed cost to acquire information about the technology.
The risky action is to start the mass production of the self-driving cars and the safe action is to exit.
The first to start the mass production will launch the car early and consequently gain a greater market share.
Whether a company starts mass production is not immediately observable.

Both players have access to conditionally independent signals from an identical information source.
These generate Poisson breakthroughs and breakdowns.
A breakthrough is good news that reveals the high state and
a breakdown is bad news that reveals the low state.
I also assume that no news is good news.
In the self-driving car example, 
this information structure describes the incremental development with potential breakthroughs and breakdowns.
The company makes incremental improvement, and at the same time, they can come across conclusive evidence that proves or disproves the feasibility of the technology.

Both the information acquisition and the action taken are assumed to be private.
This describes an anonymous information acquisition environment: a player does not see her competitor's signals or actions, but she is aware of the competitor's existence.
This shuts down the information externality where the player learns something from the competitor's action and focuses on the competition-generated payoff externality.

My setup is related to the bandit problem, the classical framework studying R\&D races, but is different in the following aspects.
In the bandit problem with the safe and the risky arms, pulling the safe arm is similar to taking the \textit{safe action} in this model,
while pulling the risky arm is similar to \textit{acquiring information}.
The payoff generated by the risky arm depends on the nature of the risky arm, which is similar as the \textit{state} in this model.
The difference is that in the bandit problem, the player pulling the risky arm gets \textit{information} about the arm from the \textit{payoff} generated by the arm.
A high payoff from the risky arm indicates a good arm.
In other words, the payoff \textit{is} the information.
In contrast, in my model, I separate the information and the payoff from an action.
The breakthroughs, breakdowns, or the lack thereof, only contain \textit{information} about the state. 
If the players want to use the information, they need to take an action.
Acquiring information itself does not give the player any payoff.
Instead, it incurs a positive information cost.
The R\&D is modelled as a costly activity that contains information about the feasibility of the technology, but does not give the company direct return.
The company gets return from the R\&D only if they take actions using the information generated.

This separation of information and the payoffs allows for the discussions not only on the timing to stop the R\&D but also the timing to start the R\&D.
Take the self-driving car competition example again. 
Suppose the feasibility of the R\&D is known, but the profitability is unknown, 
e.g. the demand for self-driving cars is uncertain.
The companies are interested in when or whether to start the R\&D.
Doing R\&D is only profitable if the demand is high.
Early start of the R\&D means early success and higher profit.
Then, companies can pay a cost to do online surveys to better estimate the demand.
Companies doing online surveys is another example that can be modelled as the costly information acquisition.
The risky action in this setting is to start the R\&D on self-driving cars and the safe action is to abandon this self-driving car project.

% results

The main result in this paper shows that in equilibrium, players use random stopping strategy if they acquire information.
This is significantly different from the single DM case.
In the single DM case, the deterministic cutoff strategy is the optimal one: to acquire information if the belief is in the intermediate range and to take an action if the belief is extreme.
In my model with competition and first-mover advantage, for relatively pessimistic priors, 
the players use the random stopping strategy if they acquire information.
That is, they acquire information up to some point and then randomly stop and take the risky action.
This is a result of the interaction of the learning motive and the preemption motive.
At the early stage of information acquisition, players are uncertain about the state and hence have stronger learning motives.
In addition, since they are relatively pessimistic about the state, they also believe that the probability of being preempted is low. 
The value associated with taking the risky action is large if the state happens to be high.
At this point,
acquiring information is a strategic complement.
If the opponent has not stopped and taken the risky action, then, the gain from acquiring information is high and the player is willing to acquire more information.
The more information the opponent is expected to acquire, the more information the player is willing to acquire.
After the player has been acquiring information for some time, she becomes more optimistic about the state (because no news is good news). 
However, the value associated with taking the risky action conditional on the high state becomes lower.
This is because the probability of the opponent taking the risky action increases.
The player then has a stronger preemption motive to stop acquiring information and to preempt the opponent.

When the prior is relatively optimistic, there exists an equilibrium where the players randomise between the safe and risky action immediately without acquiring information.
The existence of this equilibrium is a result of the strong preemption motive and the weak learning motive.
Players undercut the time at which the other takes risky action until there is no room for further preemption.
However, the players are not optimistic enough to take the risky action with probability one.
With a positive probability of taking the safe action, the player secures a zero payoff.
At the same time, the opponent's learning is deterred.
An interesting result is that even when the information cost is zero, this equilibrium still exists.
This is because in my model, the total information cost consists of the exogenous information cost $c$ and an endogenous information cost from being preempted.
The endogenous information cost is determined by the opponent's strategy.
When the exogenous information cost vanishes, the endogenous information cost does not vanish and hence the total information cost is still positive.
This positive total information cost gives rise to this equilibrium where the players take immediate action at time zero without acquiring information.

For sufficiently (but not extremely) pessimistic prior, there exists an asymmetric equilibrium where one player takes the safe action immediately and the opponent acquires information.
Information acquisition here has strategic substitutes' property where one player acquires information only if the opponent does not.
The intuition is when the prior is sufficiently pessimistic, it takes longer for the belief to drift to the random stopping point.
The expected information cost is hence higher.
The value associated with taking the risky action must be high enough to compensate for the higher information cost so that the player is willing to acquire information.
When the opponent drops out, the value associated with the risky action is the highest and when the opponent participates in the competition, this value decreases.
In symmetric equilibrium, the player randomises between these two roles where she mixes between the random stopping strategy and the immediate safe action at time zero.

% equilibrium outcome -- in both states

\paragraph{Related literature}
% only say things that distinguish my model from the others'.

I study \textit{strategic} dynamic information acquisition in an optimal stopping framework introduced by 
\cite{Wald1945, Wald1947}.
The single decision maker's optimal stopping problem has been studied in both drift-diffusion models and Poisson models.
\cite{Fudenberg2018} study the relationship between decision time and accuracy of the action in a modified drift-diffusion model.
\cite{Ke2019} and \cite{Nikandrova2018} investigate how decision maker optimally acquire information about two different alternatives.
The former considers a drift-diffusion model and the latter considers a Poisson model.
\cite{Che2019} and \cite{Mayskaya2020} study how a decision maker optimally choose the bias of the information source in models with Poisson signals.
My model adopts the Poisson signal structure and more importantly, considers the \textit{strategic} interaction between players.
One way of thinking about it is that instead of having the cost of information as an exogenous parameter, it is now endogenous which depends on the opponent's strategy.

The most closely related papers are 
\cite{Shahanaghi2022}, 
\cite{Ozdenoren2021} 
and
\cite{Bobtcheff2021}.
In the first two papers, the irreversible actions are observable. 
In \cite{Bobtcheff2021}, the safe action (`exit' in their paper) is observable with a positive probability.
I shut down the information externalities generated by observable actions to focus on the payoff externalities.
\cite{Shahanaghi2022}
discusses a dynamic preemption model with costless information where players have accuracy incentives and credibility concerns.
In equilibrium, players randomise between acquiring more information and taking an action due to the credibility concern and the observability of actions.
Preemption motives propagate the ex post errors in the actions.
This is different from my paper where
the randomisation in equilibrium is mainly due to the preemption motive.
The player in my model has no incentive to randomise if they were playing alone, which is not the case in 
\cite{Shahanaghi2022}.
\cite{Ozdenoren2021} 
studies a discrete-time experimentation model where players have incentives to preempt.
It shows that preemption motives caused by payoff externalities lead to less experimentation. 
With a discrete-time setting, they do not have an equilibrium in random stopping.
\cite{Bobtcheff2021}
investigates how publicity of actions affects the players' equilibrium strategy and payoff in a preemption game. 
In their model, similar to mine, the players have both the learning motive and the preemption motive.
The difference is that the risky action (`investment' in their model) is always observable while the safe action (`exit' in their model) can be private.
They argue that private signals and potential private actions create a winner's curse where the first risky action taker believes the opponent may have received a private breakdown and hence exited. 
To compensate, the player acquires more information to be more certain about the project before taking the risky action.
The main difference between this paper and mine is the unobservability of the irreversible action and the presence of breakthrough.
In my model, both the risky action, safe action and the signals are private.
Not being able to observe the risky action creates a  stronger preemption motive.
The players in my model tend to acquire less information because of this.

Another relevant strand of literature studies static information acquisition before a game 
(
see \cite{Hellwig2009}, \cite{Yang2015}, \cite{Han2018} and \cite{Denti2019}
).
Those papers discuss information acquisition before a coordination game.
When the players play a game with strategic complements, information choices exhibit strategic complementarity as well. 
In my model, information has the features of a strategic complement but could also be a strategic substitute.

This paper is also related to strategic experimentation literature that model similar R\&D races as a bandit problem (see \cite{Bolton1999StrategicExperimentation}, \cite{Keller2005StrategicBandits}, and \cite{Keller2010StrategicBandits}).
Besides the differences in the model setup introduced earlier,
there are three other differences.
First, strategic experimentation literature investigates the free-ride problem when information is a public good.
This is mainly due to the observability of the actions as well as sometimes the signals themselves.
In my model, information is private and information externality does not exist. 
There is no free ride.
Second, there is no exploration-exploitation tradeoff in my model because the players only take the action once and it is irreversible.
The players must stop acquiring information and take the risky action in order to exploit the outcome generated by the risky arm. 
Third, strategic experimentation literature studies the intensity of the experimentation.
In my model, I assume the players choose between to acquire information or not, but not the intensity at each time instant.
Since bang-bang solution is normally the optimal strategy in bandit problems, this simplification appears to be reasonable.

A distinct but related group of literature is about the equivalence between static and dynamic information acquisition.
\cite{Hebert2019} studies a dynamic rational inattention model and shows that the belief dynamics generated can resemble either diffusion processes or processes with large jumps.
\cite{Morris2019} studies what kind of static models with costly information acquisition has a sequential sampling foundation. 
In my model, because of the preemption feature,
dynamic information acquisition is intrinsic as a player's payoff depends on the order that they act.

%\textbf{Bandit and common learning literature.}

% outline of the paper

\section{The model}
\label{sec:model}

\subsection{Model setup}

There are two players $i \in \{1,2\}$.
At the beginning of the game, an unknown, fixed, payoff-relevant state $\omega \in \{H,L\}$ is drawn.
At any time $t\in[0,\infty)$, each player can take an irreversible action $x \in \{S(afe), R(isky)\}$, or delay and acquire information about the state.
The irreversible action gives the player an one-off payoff at the moment she takes the action.
Action $S$ yields a payoff which is normalised to be $0$.
Action $R$ payoff depends on the state and whether the player is the first or second to take $R$.
In state $\omega$,
the first player to take $R$ gets 
$
u_{\omega}\in \mathbb{R}
$
and the second gets 
$
u_{\omega}-\bar{\triangle}_{\omega}.
$
If the players take $R$ simultaneously, the payoff is
$
u_{\omega}-\underline{\triangle}_{\omega}.
$
At each time, if the player delays her action and acquires information, she incurs a positive information flow cost 
$
c
$
per unit of time.
I assume no time discounting.
Payoffs satisfy the following two assumptions.

\begin{assumption}
\label{ass:first>simul>second}
$
\bar{\triangle}_{\omega}>\underline{\triangle}_{\omega}>0
$
for 
$\omega \in \{H,L\}$.
\end{assumption}

\Cref{ass:first>simul>second} says that first $R$ taker gets a higher payoff than the second $R$ taker in both states.
If the players take $R$ simultaneously, the payoff is in between.

\begin{assumption}
\label{ass:RactionisbetterinstateH}
$
u_{L}
<
0
<
u_{H}-\bar{\triangle}_{H}.
$
\end{assumption}

\Cref{ass:RactionisbetterinstateH} says that $R$ yields a higher payoff than $S$ in state $H$ and a lower payoff in state $L$.
Furthermore, being the second to take $R$ in state $H$ is still better than taking $S$.
Players' incentive to be the first is increasing in the value of $\bar{\triangle}_{H}$.
The difference between the first and second $R$ taker payoff describes the intensity of the competition.
\Cref{ass:RactionisbetterinstateH} describes a gentle competition in the sense that being the second to take $R$ in state $H$ is not too bad.
This assumption is dropped in \Cref{subsec:competition}, 
where the competition is more intense as the second $R$ taker gets a lower payoff than taking $S$.

Before taking the irreversible action, players have access to costly information about the state.
Information is modelled using Poisson signals.
If a player acquires information for a short time period 
$
dt>0
$, 
then, in state 
$H$ ($L$, resp), 
she receives an 
$H$-state ($L$-state, resp)
revealing signal with rate $adt$ ($bdt$, resp).
Player $i$'s belief $p_{t}^{i}$ is the probability that the state is $H$.
I assume the players have a common prior $p_0$ and that they observe neither the opponent's action nor their signals.
At each time $t$, players update their beliefs using Bayes' rule.
When a player acquires information, her belief jumps to $0$ after receiving an $L$-state revealing signal and jumps to $1$ after receiving an $H$-state revealing signal.
In the absence of the revealing signal, player $i$'s belief evolves according to
\begin{align}
\label{equ:belief}
\frac{dp_{t}^{i}}{dt}=(b-a)p_{t}^{i}(1-p_{t}^{i}).
\end{align}
In the following part of the paper, I assume $b>a>0$.
In the absence of a revealing signal, the player's belief drifts up.
This is the `no news is good news' environment.

\subsection{Strategies and equilibrium}
\label{sub:strategiesandequilibrium}

If a player receives an $H$-state ($L$-state, resp) revealing signal, it is optimal to stop acquiring the signal and take $R$ ($S$, resp) regardless of the opponent's behaviour.
Therefore, it is sufficient to describe players' strategy conditional on no arrival of a revealing signal.
A pure strategy (defined below) specifies the time, $T^{i}$, at which the player stops and which action, $x^{i}$, they take in the absence of the revealing signals.

\begin{definition}
\label{def:purestrategy}
Player $i$'s \textit{pure strategy $s^{i}$} is defined as 
$
\left(
T^{i},x^{i}
\right)
\in 
\mathbb{R}_{+} \times \{R,S\}.
$
\footnote{
The notation $\mathbb{R}_{+}$ denotes the set of non-negative real numbers. }
\end{definition}

A mixed strategy (defined below) specifies the probability that the player stops before time $t$ \textit{conditional on no revealing signal}.

\begin{definition}
\label{def:mixedstrategy}
Player $i$'s \textit{mixed strategy $\gamma^{i}$} is defined as two non-decreasing measurable functions
$
\left(\rho^{i},\sigma^{i}\right)
$
where
$
\rho^{i}:\mathbb{R}_{+}\rightarrow\left[0,1\right]
$
and 
$
\sigma^{i}:\mathbb{R}_{+}\rightarrow\left[0,1\right]
$
satisfy
$\rho^{i}\left(t\right)+\sigma^{i}\left(t\right) \leq 1$
for $\forall t \in \mathbb{R}_{+}$.
The first element 
$
\rho^{i}
$ 
is the probability that player $i$ stops and takes $R$ before or at time $t$ \textit{conditional on no revealing signal}.
The second element
$
\sigma^{i}
$
is the probability that player $i$ stops and takes $S$ before or at time $t$ \textit{conditional on no revealing signal}.
\end{definition}

Before defining the equilibrium, I first write down the player's expected payoff.
Player $i$'s expected payoff from taking action $x$ at time $t$ in state $\omega$,
$
\mathbb{E}^{\gamma^{j}}\left[u_{\omega}^{x} \mid \omega,t \right],
$
depends on player $j$'s strategy $\gamma^{j}$.
The randomness of the payoff from taking $R$ comes from both player $j$'s strategy and the randomness of the signal.
For example, the opponent using a pure strategy 
$
(0,R)
$
induces degenerate conditional distributions at each time $t$ such that player $i$'s action-$R$ payoff in state $\omega$ is $u_{\omega}-\bar{\triangle}_{\omega}$ with probability one.
Let 
$$
U_{x}^{i,\gamma^{j}}
\left(t\right)
:=
\mathbb{E}_{\omega \mid t}
\mathbb{E}^{\gamma^{j}}\left[u_{\omega}^{x} \mid \omega,t \right]
$$
be player $i$'s expected payoff from taking action $x$ at time $t$,
where 
the expectation $\mathbb{E}_{\omega \mid t}$ is taken over the distribution of the state given player $i$'s time $t$ belief.
Then, player $i$'s payoff from taking action $x^{i}$ at time $T^{i}$
is
\begin{align}
\label{eq:time0valuefunction}
    \intop_{0}^{T^{i}}\pi_{t}\left(p_{t}^{i}a
    \mathbb{E}^{\gamma^{j}}\left[u_{H}^{R} \mid H,t \right]
    %+\left(1-p_{t}^{i}\right)bu^{S}
    -c\right)dt
    +
    \pi_{T^{i}}
    U_{x^{i}}^{i,\gamma^{j}}\left(t\right)
\end{align}
where 
$
\pi_{t}=p_{0}e^{-at}+\left(1-p_{0}\right)e^{-bt}
$
is the probability of no revealing signal up to time $t$.
Let 
$
U^{i,\gamma^{j}}
\left(t\right)
:=
\max_{x^{i}}
U_{x^{i}}^{i,\gamma^{j}}
\left(t\right)
$ 
be player $i$'s payoff associated with taking the optimal irreversible action $x^{i}$ at time $t$.
The following defines the perfect Bayesian equilibrium in pure strategies.
The perfect Bayesian equilibrium in mixed strategies can be defined in a similar manner.

\begin{definition}
\label{def:pbe}
A perfect Bayesian equilibrium in pure strategies is a strategy profile 
$
\left(s^{i},s^{-i}\right)
$
and beliefs 
$
\left(\left(p_{t}^{i}\right)_{t\in\left[0,T^{i}\right]},\left(p_{t}^{-i}\right)_{t\in\left[0,T^{-i}\right]}\right)
$
such that for 
$i\in \left\{ 1,2\right\} $,
\begin{enumerate}
\item 
$
    x^{i}
    \in
    \arg\max_{\tilde{x}^{i}}
    U_{\tilde{x}^{i}}^{i,s^{-i}}
    \left(T^{i}\right);
$
\item
$
T^{i}
\in
\arg\max_{\tilde{T}^{i}}\intop_{0}^{\tilde{T}^{i}}\pi_{t}\left(p_{t}^{i}a
\mathbb{E}^{s^{-i}}
\left[u_{H}^{R} \mid H,t \right]
%+\left(1-p_{t}^{i}\right)bu^{S}
-c\right)dt
+
\pi_{\tilde{T}^{i}}
U^{i,s^{-i}}
\left(\tilde{T}^{i}\right)
$
;
\item 
The belief $p_{t}^{i}$ evolves according to (\ref{equ:belief}).
\end{enumerate}
\end{definition}

I seek symmetric equilibria in both pure and mixed strategies. 
In the following discussion, the superscript $i$ representing the player is ignored.

The player's strategy and the random arrival of the signal jointly determine the distribution over the action taken and timing. 
Before proceeding, I define the following two conditional probabilities that characterise the distribution over action taken and timing.
Let
$
F_{\omega}\left(t\right)
$
(
$
G_{\omega}\left(t\right)
$,
resp
)
be the probability that the player takes 
$
R
$ 
(
$
S,
$
resp
)
before or at time $t$ in state $\omega$.
Then,
$
1-
F_{\omega}\left(t\right)
-
G_{\omega}\left(t\right)
\geq
0
$
is the probability that player $i$ continues acquiring information at time $t$ in state $\omega$.

\section{An illustrative example}
\label{sec:example}

In this section, I use a simple two-period model to illustrate the tradeoff between information acquisition and preemption.
I show how players' \textit{learning motives} and \textit{preemption motives} depend on the opponent's strategy and the prior.
In equilibrium, mixed strategies create endogenous randomnesses that either deter learning or prevent the opponent from preemption.

At time $t=0$, players can choose to acquire a signal at cost $c>0$ or to take one of the actions.
At time $t=1$, players have to take one of the actions. 
If a player acquires a signal, in state $H$ ($L$, resp), a revealing signal arrives with probability $a$ ($b$, resp), and the belief $p_1$ jumps to $1$ ($0$, resp).
If she does not receive the revealing signal, then, her belief is updated to
$
\frac{p_{1}}{1-p_{1}}=\frac{1-a}{1-b}\frac{p_{0}}{1-p_{0}}
$.
Acquiring a signal at time $0$ allows the player to learn the state and hence take the `correct' action ($R$ in state $H$ and $S$ in state $L$).
This gives the player the `learning motive'.
Not acquiring a signal, however, secures the player the first prize.
This gives the player the `preemption motive'.

The \textit{learning motive} is stronger when the prior is in the intermediate range and when the opponent takes $S$ at time $0$.
When the prior is in the intermediate range, the player is uncertain and hence has stronger incentives to learn.
When the opponent takes $S$ at time $0$, the player is the single decision maker in this game. 
The payoff associated with taking the correct action is the highest and hence the value associated with information is higher at each prior.
The \textit{preemption motive} is stronger when the opponent acquires a signal at time $0$.
This is because when the opponent acquires a signal, by taking  $R$ at time $0$, the player can secure herself the first $R$ taker payoff, while acquiring a signal at time $0$ decreases the probability of being the first  $R$ taker and hence decreases her expected payoff from taking  $R$.
The information becomes less valuable.
When the opponent takes  $R$ at time $0$, the player's incentive to preempt is less strong.
It is then optimal for the player to acquire a signal for a larger range of priors.
This is because she cannot preempt her opponent only to match their action.
This matching reduces the payoff from acting at time $0$ and correspondingly increases the payoff of waiting.

\begin{figure}
\includegraphics[width=\textwidth]{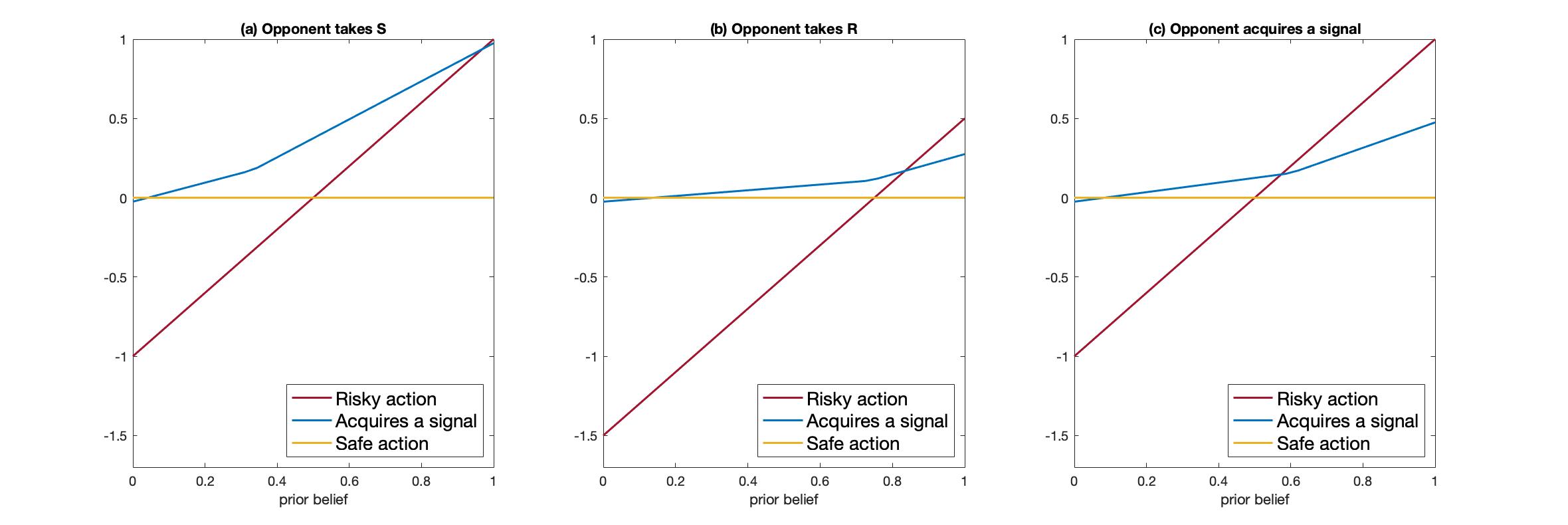}
\caption[A player's payoffs given opponent's different strategies]%
{
The player's payoffs given opponent's different strategies \par \small
Notes: This graph is drawn given the following parameter values: 
$u_{H}=1$,$u_{L}=-1$, $\bar{\triangle}_{\omega}=0.7$, $\underline{\triangle}_{\omega}=0.5$, $a=0.6$, $b=0.8$, $c=0.025$.
In all three panels, the red line represents the player's payoff from taking $R$ at time $0$, the blue line represents the player's payoff from acquiring a signal at time $0$, and the yellow line represents the player's payoff from taking $S$ at time $0$.
}
\label{fig:two_period_example}
\centering
\end{figure}

\Cref{fig:two_period_example} plots her payoffs associated with taking immediate actions and acquiring a signal:
Panel (a) is the player's payoffs when the opponent takes $S$ at time $0$, (b) is when the opponent takes $R$ at time $0$ and (c) is when the opponent acquires a signal at time $0$.
For the opponent's different strategies,
qualitatively, 
the player's best responses have similar properties: to take an action at time $0$ when the prior belief is extreme and to acquire a signal when the prior belief is in the intermediate range.
The difference is the range of the priors at which her best response is to acquire a signal.
This range of priors is largest when the opponent takes  $S$ at time $0$ and is smallest when the opponent acquires a signal.

Next, I briefly discuss the possible equilibria: the equilibrium where players preempt $R$, the equilibrium where players acquire a signal with positive probability, and the equilibrium where players randomise between two actions without acquiring a signal.

When the prior is sufficiently high, the preemption motive dominates and hence there is unravelling. 
The symmetric equilibrium is such that both players take immediate  $R$ at time $0$ without acquiring any signal.
The weak learning motive and the strong preemption motive work in the same direction which pushes the players to take $R$ immediately.
For intermediate priors, there exists an asymmetric equilibrium where one player acquires a signal and the opponent takes $R$ at time $0$.
In symmetric equilibrium, the players randomise over these two roles.
This is referred to as a \textit{random stopping strategy}: the player randomises between stopping (to take $R$) at time $0$ and time $1$.
By using this strategy, the player creates an endogenous uncertainty that prevents the opponent from preempting.
When a player acquires a signal with probability one, her opponent has incentives to preempt and it is easy for them to do so.
But when a player randomises, not only the value associated with preemption is reduced, it is also harder for her opponent to preempt because of the endogenous randomness.
For some intermediate priors, 
there exists another symmetric equilibrium where players randomise between two immediate actions without acquiring any signal.
This kind of randomisation deters learning.
For those intermediate priors, players are uncertain about the state.
By simply randomising between $R$ and  $S$ at time $0$, the player `hedges' against the uncertainty without paying extra information cost and at the same time, reduces her opponent's value associated with learning.
\footnote{The player taking $R$ at time $0$ with a positive probability reduces her opponent's expected payoff from taking  $R$ at time $1$.}
When this value is sufficiently low, the opponent's learning is deterred.

This two-period example shows the most important tradeoff in the model: the incentive to learn and the fear of being preempted.
However, it can only discuss \textit{whether or not} the players acquire information, but not \textit{how much} information they get.
To understand the optimal quantity of information the players acquire before taking an action, the dynamic model with multiple periods is of interest.
Next, I analyse the dynamic model introduced in \Cref{sec:model}.

%%%%%%%%%%%%%%%%%%%
%%%%%%%%%%%%%%%%%%%
%%%%%%%%%%%%%%%%%%%

\section{Equilibrium analysis}
\label{sec:equilibrium_analysis}

 \subsection{Single decision maker benchmark
}
\label{subsec:singleDM}

Before getting into the detail of the game, as a benchmark, I first consider the model with one single decision maker (DM).
This single DM model is a well-studied sequential sampling model due to \cite{Wald1945,Wald1947} where the DM chooses an optimal stopping time based on the samples she has observed.

The optimal stopping rule depends on the cost of information and the belief.
If the cost of information is too high, then, it is optimal to take an immediate action based on her prior.
In this case, learning about the unknown state does not give the DM sufficiently high benefit to compensate for the high cost.
When the cost is sufficiently small, it is optimal to acquire information for intermediate beliefs and to take an immediate action for extreme beliefs.
The DM's optimal policy at each time $t$ only depends on the current belief $p_{t}$ but not time $t$ itself. 
This is because all the past information is summarised by the belief at time $t$ and the information cost in the past is sunk.
The DM acquires information if the marginal cost $c$ is smaller than the marginal benefit.
The marginal benefit is higher when the DM is uncertain about the state.
Hence, it is optimal to take $S$ if the belief is sufficiently small, to acquire information if the belief is in an intermediate range, and to take $R$ if the belief is sufficiently big.
This is summarised in the following proposition.

\begin{proposition}
\label{prop:singledmresult}
There exist cutoffs
$
\underline{p}<\bar{p} 
$
and 
$\bar{c}$
such that 
when $c\leq \bar{c}$, given belief $p$, the DM's optimal policy is 
%\begin{itemize}
 %   \item 
to take $S$ if $p\leq\underline{p}$;
 %   \item 
to acquire information if $\underline{p}<p<\bar{p}$;
 %   \item 
and to take $R$ if $p \geq \bar{p}$.
%\end{itemize}
\end{proposition}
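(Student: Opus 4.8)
The plan is to set up the single-DM value function as the solution to an optimal stopping problem and characterize it via a standard verification argument. First I would define, for a belief $p$, the value $V(p)$ as the supremum over stopping times $T$ of the payoff in \eqref{eq:time0valuefunction} with $\gamma^j$ replaced by the degenerate "no opponent" case, so that $U^R(t) = p_t u_H + (1-p_t) u_L$ (the first-mover payoff, since with one DM she is always first) and $U^S(t) = 0$. Since the information cost is sunk and all past signals are summarized by the current belief, the problem is Markov in $p$: I would argue that $V$ depends on $p$ alone, not on calendar time, and that $V(p) = \max\{0,\ p u_H + (1-p)u_L,\ W(p)\}$ where $W(p)$ is the value of acquiring information for at least an instant. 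By Assumption~\ref{ass:RactionisbetterinstateH}, $p u_H + (1-p) u_L < 0$ for small $p$ and $>0$ for $p$ near $1$, and it is affine and increasing in $p$.

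The key steps, in order: (1) Write the dynamic programming / HJB equation. On the continuation region, the Bellman equation reads $0 = -c\,dt + p_t a\,dt\,(u_H - V(p_t)) + (1-p_t) b\,dt\,(0 - V(p_t)) + V(p_t + dp_t) - V(p_t)$, which using \eqref{equ:belief} becomes an ODE $c = pa(u_H - V(p)) - (1-p) b V(p) + (b-a) p(1-p) V'(p)$; here I use that after an $H$-signal the DM takes $R$ and gets $u_H$, after an $L$-signal she takes $S$ and gets $0$. (2) Establish that $V$ is continuous, convex, and nondecreasing in $p$ (convexity from the usual "information has positive value" / supremum-of-affine argument; monotonicity since $u_H, u_L$ payoffs and the flow structure are monotone in $p$). (3) Convexity plus the boundary behavior of the two action payoffs forces the continuation region to be an interval $(\underline p, \bar p)$: where the affine stopping payoff $\max\{0, pu_H+(1-p)u_L\}$ lies strictly below $V$ we continue, and convexity of $V$ together with $V$ lying above a piecewise-affine function means the contact set is the complement of an interval. (4) Impose value-matching and smooth-pasting at $\underline p$ and $\bar p$ to pin down the cutoffs, and define $\bar c$ as the cost at which the continuation interval degenerates to a point (i.e., $\underline p = \bar p$); for $c \le \bar c$ the interval is nonempty, while for $c$ large the benefit of the ODE solution never exceeds $\max\{0, pu_H+(1-p)u_L\}$ and the DM acts immediately everywhere. (5) Finish with a verification lemma: the candidate $V$ satisfies the HJB inequality globally, hence dominates the payoff of every strategy, and is attained by the cutoff rule, so that rule is optimal.

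I expect the main obstacle to be step (3)–(4): rigorously showing that the continuation region is exactly one interval rather than, a priori, a more complicated set, and that the cutoffs are well-defined and ordered $\underline p < \bar p$. This needs the convexity of $V$ together with a careful comparison against the \emph{upper envelope} of the two affine stopping payoffs (which has a kink at the belief where $pu_H+(1-p)u_L = 0$); one must check that $V$ can touch this envelope only on an up-set and a down-set, using that the ODE solution is smooth and strictly convex on the continuation region while the envelope is piecewise affine. The monotonicity and convexity claims themselves, and the verification argument, are routine once the ODE is in hand; I would also note in passing that $\bar c > 0$ because at $c=0$ acquiring information for an instant yields a strictly positive marginal benefit at any interior belief, so the continuation interval is nonempty for all small $c$ by continuity.
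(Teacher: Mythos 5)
Your setup, ODE, and overall verification plan match the paper's proof in spirit, but step (4) contains a genuine error: you cannot impose smooth pasting at \emph{both} boundaries. The belief dynamics here are first order --- in the absence of a revealing signal the belief drifts deterministically \emph{upward} (no news is good news, $b>a$), with jumps to $0$ or $1$ --- so the general solution of the continuation ODE has a single free constant $K$. Value matching and smooth pasting at $\bar p$ already pin down $\bar p$ and $K$, and then value matching alone determines $\underline p$; adding smooth pasting at $\underline p$ gives four conditions for three unknowns and is generically inconsistent. Economically, smooth pasting fails at $\underline p$ because the belief never approaches the lower boundary from inside the continuation region (it only moves away from it), and indeed the paper's candidate value function satisfies $\frac{dV^{L}}{dp}\big|_{\underline p} > \frac{dU_{S}}{dp}\big|_{\underline p}$, i.e.\ it has a kink at $\underline p$. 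This also undercuts your step (5): the verification is not ``routine,'' because the candidate is nondifferentiable at $\underline p$, and the paper handles this by checking that the candidate is a \emph{viscosity} solution of the HJB equation, verifying the sub/superdifferential inequalities at the kink (the superdifferential there is empty and the subdifferential condition is checked using monotonicity of the Hamiltonian in the derivative argument). Your proof would need either this viscosity argument or a direct comparison/martingale verification that does not presuppose differentiability at $\underline p$.

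Two smaller points. First, your definition of $\bar c$ (the cost at which the continuation interval degenerates to a point) differs from the paper's, which takes $\bar c = b\frac{(-u_L)u_H}{u_H-u_L}$, the cost below which $\bar p$ exceeds the belief $\hat p$ at which $R$ and $S$ give equal payoff; you would need to show your threshold is well defined and delivers the same conclusion. Second, your convexity route to showing the continuation region is a single interval is a legitimate alternative to the paper's explicit construction (the paper instead solves the ODE in closed form and verifies the ordering of the cutoffs directly), but the ``touching only on an up-set and a down-set'' claim needs the boundary values $V(0)=0$ and $V(1)=u_H$ spelled out, since a convex $V$ can in general exceed a single affine function on two disjoint rays.
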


In the single DM case, the optimal strategy is deterministic: 
the cutoff where the DM stops acquiring information is a constant and is uniquely pinned down by the parameter values. 
When she stops at the cutoff, the optimal action is $R$ because she is convinced that the state is more likely to be $H$.
When the DM uses the optimal strategy, 
if she acquires information, she only makes mistakes in the low state: she correctly takes $R$ in high state with probability one and incorrectly takes $R$ in the low state with a positive probability.
This is because when acquiring information, she either receives a revealing signal and then stop or she acquires information until the belief drifts up to the upperbound $\bar{p}$.
In the high state, the DM takes the correct action $R$ in both events.
In the low state, she takes the incorrect risky action if she stops at the upperbound $\bar{p}$.

%The single DM's problem is a special case of the player's problem in the two-player game when the opponent takes immediate $S$.

\subsection{Properties of equilibrium}

Now I turn to the model with two players.
I establish the properties of the equilibrium strategy.
I explain why there are \textit{no jumps} in the player's equilibrium strategy at all $t>0$.
If there \textit{is} a jump in the player's strategy, it only happens at $t=0$.
This property is driven by preemption and the unobservability of actions.

\begin{lemma}
\label{lemma:rhocontinuous}
In equilibrium, 
$\rho\left(t\right)$,
the probability that the player stops and takes $R$ before or at time $t$ conditional on no revealing signal,
is continuous at all $t>0$.
\end{lemma}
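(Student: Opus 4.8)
The plan is to argue by contradiction: suppose $\rho$ has a jump at some time $t_0 > 0$, meaning the player takes the risky action $R$ with strictly positive probability exactly at $t_0$ (conditional on no revealing signal). I will show this gives the opponent a profitable deviation that destroys the equilibrium, by exploiting the unobservability of actions and the first-mover advantage. The key economic observation is that an atom of $R$-play at $t_0$ by the player means that, from the opponent's perspective, stopping and taking $R$ at time $t_0 - \varepsilon$ rather than at $t_0$ discretely raises the probability of being the \emph{first} to take $R$ (jumping from whatever the tie/second-mover mix gives to the strictly higher first-mover payoff on the mass of the atom), while changing the information cost and the belief only by $O(\varepsilon)$. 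Since by \Cref{ass:first>simul>second} the first-mover payoff strictly exceeds both the simultaneous and the second-mover payoff in each state, and by \Cref{ass:RactionisbetterinstateH} the relevant $R$-payoffs are bounded away from the $S$-payoff in the good state, this discrete gain dominates the vanishing cost/belief effect for small $\varepsilon$.

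Concretely, the steps I would carry out are: (i) Write the opponent's payoff from the strategy ``stop and take $R$ at time $\tau$'' as a function of $\tau$ in a left-neighbourhood of $t_0$, using the payoff expression in \eqref{eq:time0valuefunction} together with the decomposition of $U_R^{i,\gamma^j}(\tau)$ into the first-mover, simultaneous, and second-mover pieces weighted by the opponent's (here, the jumping player's) stopping distribution $\rho$. (ii) Observe that the map $\tau \mapsto \rho(\tau^-)$ is left-continuous and that $\rho(t_0) - \rho(t_0^-) =: \delta > 0$ is the atom. (iii) Show that the $R$-payoff from stopping at $t_0 - \varepsilon$ exceeds that from stopping at $t_0$ (or at any $\tau \ge t_0$ near $t_0$) by at least $\delta \cdot \big(\text{first-mover premium}\big) - O(\varepsilon)$, where the premium is a strictly positive constant built from $\bar\triangle_\omega, \underline\triangle_\omega$ and the current belief, because moving earlier converts second-mover / tie outcomes on the atom into first-mover outcomes. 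Here one must be a little careful about whether the atom produces ties or strict precedence; the cleanest route is to note that stopping at $t_0 - \varepsilon$ strictly precedes the entire atom, so all the mass $\delta$ is captured as a strict first-mover gain. (iv) Conclude that the opponent strictly prefers some $\tau < t_0$ to stopping at or after $t_0$, so in equilibrium the opponent puts no stopping mass in $[t_0 - \varepsilon_0, t_0]$ for some $\varepsilon_0 > 0$; in a symmetric equilibrium the same applies to the player, contradicting that the player has an atom at $t_0$. (One should also handle the non-symmetric bookkeeping inside the lemma's proof even though symmetry is imposed later, since the lemma is stated for ``the player'' generically — but invoking symmetry at the end is the simplest closure.)

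The main obstacle I expect is making step (iii) fully rigorous when the opponent's own strategy near $t_0$ is itself atomic or has complicated support: I need to rule out the degenerate case where an atom of the opponent cancels the gain, and I need a clean lower bound on the first-mover premium that is uniform as $\varepsilon \to 0$. This is handled by noting the premium is continuous in the belief $p_{t_0}$ and strictly positive there (using \Cref{ass:RactionisbetterinstateH} to keep $U_R$ bounded away from $U_S = 0$ in the high state, and \Cref{ass:first>simul>second} for the gap $\bar\triangle_\omega - \underline\triangle_\omega > 0$ and $\underline\triangle_\omega > 0$), so for $\varepsilon$ small enough the $O(\varepsilon)$ terms — the integrated flow cost $\int_{t_0-\varepsilon}^{t_0}\pi_t(\cdot)\,dt$ and the belief drift from \eqref{equ:belief} — cannot offset it. A secondary subtlety is that the deviation ``stop at $t_0 - \varepsilon$ and take $R$'' must be optimal relative to \emph{also} possibly taking $S$ there; but since we are comparing the $R$-branch against itself at two nearby times, the optimality of $R$ versus $S$ is not needed — we only need that the best response value $U^{i,s^{-i}}$ weakly exceeds the $R$-branch value, which is automatic, combined with the fact that an atom in $\rho$ at $t_0$ requires $R$ to be (weakly) optimal on that atom, hence the $R$-branch comparison is the binding one. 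I would close by remarking that the identical argument rules out atoms in the ``stop and take $R$'' behaviour at every $t_0 > 0$, while at $t_0 = 0$ there is no earlier time to deviate to, which is exactly why jumps can persist at $t = 0$.
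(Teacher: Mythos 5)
Your proposal is correct and follows essentially the same route as the paper: the paper's proof also argues that an atom of $R$-stopping at some $\tau>0$ lets the stopping mass be profitably shifted to $\tau-\triangle$, since the limiting gain from waiting the extra instant is strictly negative (of order $M_{\omega}\bar{\triangle}_{\omega}$ plus the tie-breaking term) while the learning benefit and flow cost over $[\tau-\triangle,\tau]$ vanish, contradicting equilibrium. The only cosmetic difference is framing --- the paper moves the deviating player's own atom earlier given the opponent plays the atom strategy, whereas you phrase it as the opponent undercutting and then invoke symmetry --- which amounts to the same argument.
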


In the absence of the revealing signal, the player does not take $R$ with positive probability mass at any time $t>0$.
This implies that the single DM's \textit{deterministic} optimal strategy is not part of the equilibrium in the two-player game.
The reason is twofold.
First, the jump at any $t>0$ in the opponent's strategy gives the player an opportunity to preempt and hence a profitable deviation.
If the opponent takes $R$ at some $\tau>0$ with a positive probability, 
then,
the player is betteroff stopping at $\tau-dt$
where $dt>0$ is infinitesimal.
At time $\tau-dt$, if the player continues acquiring information for $dt$ longer, 
she gains nothing
but loses the probability of being the first $R$ taker.
Second, the private action and the private breakdown give rise to \textit{winner's curse}.
It is more likely for the player to be the first $R$ taker in state $L$ than in state $H$ because the opponent might have received the breakdown and dropped out.
In other words, it is more likely to `win' in state $L$.
If the opponent takes $R$ at some $\tau>0$ with a positive probability, then,
conditional on no revealing signal and `winning', the value associated with taking $R$ at time $\tau$ has a downward jump.
It is then not a best response for the player to take $R$ at time $\tau$.

In equilibrium, $\rho$ can have a jump at $t=0$.
This means that in the absence of the revealing signal, the player takes $R$ with positive probability mass only at time zero.
This is a result of preemption.
The unravelling comes from the player undercutting the time of the opponent taking $R$ until time zero at which there is no more room for preemption.

%The second element of the player's strategy, $\sigma$, is not necessarily continuous in equilibrium.

\subsection{Equilibrium strategies}
\label{subsec:equilibrium_strategy}

This section introduces five strategies that appear in equilibrium.

\paragraph{The Immediate Action Strategy}
The \textit{Immediate $R$} (\textit{Immediate $S$}, resp) \textit{Strategy} is the pure strategy $\left(0,R\right)$ ($\left(0,S\right)$, resp) where the player does not acquire information and takes $R$ ($S$, resp) at time $0$.
%When the prior is such that the players are relatively certain about the state, or when the cost of signal is sufficiently high, there exists the equilibrium where both players use the \textit{Immediate action $R/S$ Strategy}.

%The \textit{Immediate Mix with No Learning Strategy} is a mixed strategy $\left(\rho^{IM},\sigma^{IM}\right)$ 
%where
%$\rho^{IM}\left(0\right)=1$
%and 
%$\sigma^{IM}\left(0\right)\in (0,1)$.
%This is a strategy such that the player does not acquire the signal and randomises between actions $R$ and action $S$ with probability $\sigma^{IM}\left(0\right)$ and $1-\sigma^{IM}\left(0\right)$ at time $0$.

\paragraph{The Immediate Mix Strategy}
This is a strategy where the player does not acquire information.
She randomly takes $R$ and $S$ at time zero.
The following is the formal definition.
The \textit{Immediate Mix Strategy} is a mixed strategy $\left(\rho^{IM},\sigma^{IM}\right)$ 
where
$
\rho^{IM}\left(t\right)
=
\rho^{IM}\left(0\right)\in (0,1)
$
for 
$
\forall t>0,
$
$
\sigma^{IM}\left(t\right)
=
\sigma^{IM}\left(0\right)\in (0,1)
$
for 
$
\forall t>0,
$
and 
$
\rho^{IM}\left(0\right)+\sigma^{IM}\left(0\right)=1.
$

%The \textit{Random Stopping Strategy} is a mixed strategy $\left(\rho^{RS},\sigma^{RS}\right)$ where 
%\begin{align*}
%\rho^{RS}\left(t\right)
%=
%\begin{cases}
%0 
%& t\leq\bar{T}^{RS}\\
%\Gamma^{RS}\left(t\right) 
%& \bar{T}^{RS}<t<\tilde{T}^{RS}\\
%1 
%& t\geq\tilde{T}^{RS}
%\end{cases}
%\end{align*}
%is continuous 
%and 
%$\Gamma^{RS}:(\bar{T}^{RS},\tilde{T}^{RS}) \rightarrow (0,1)$
%is an increasing function.
%The second element $\sigma^{RS}$ satisfies 
%$
%\sigma^{RS}\left(t\right)=1
%$
%for $t>\bar{T}^{RS}$.
%This is a strategy such that in the absence of the revealing signal, the time at which the player stops is randomised, and
%conditional on stopping, action $R$ is taken with probability one.

\paragraph{The Random Stopping Strategy}
This is a strategy where the player acquires information for a period of time and then stops with a positive rate at each time.
The action associated with stopping is $R$.
The following is the definition.
The \textit{Random Stopping Strategy} is a mixed strategy 
$
\left(\rho^{RS},\sigma^{RS}\right)
$
such that 
$
\rho^{RS}\left(\cdot\right),
$
the probability of taking $R$ in the absence of the revealing signal,
weakly increases and 
$
\sigma^{RS},
$
the probability of taking $S$ in the absence of the revealing signal,
equals zero for all $t\geq 0$.
The first element
$
\rho^{RS}\left(\cdot\right)
$
satisfies
$
\rho^{RS}\left(0\right)=0
$
and 
$
\rho^{RS}\left(t\right)=1
$
for 
$
t\geq\bar{T}^{RS}
$
where $\bar{T}^{RS}>0$.
When using this strategy, conditional on arriving at time $t$ 
\footnote{That is, the player receives no revealing signal and the player's strategy does not prescribe stopping.}
, the player stops and takes action $R$ with rate 
$
\frac{\frac{d\rho^{j}\left(t\right)}{dt}}{1-\rho^{j}\left(t\right)}.
$
She only stops and takes action $S$ after observing a breakdown.

\paragraph{The Mixed Learning Strategy}
This is a strategy where the player mixes between the \textit{Immediate $S$ Strategy} and the \textit{Random Stopping Strategy} at time $0$.
The following is the definition.
The \textit{Mixed Learning Strategy} is a mixed strategy 
$\left(\rho^{ML},\sigma^{ML}\right)$
such that 
$
\rho^{ML}\left(\cdot\right)
$
weakly increases and 
$
\sigma^{ML}\left(t\right)
=
\beta 
\in
\left(0,1\right)
$
for $\forall t$.
The probability that the player stops and takes $R$ before time $t$ 
(i.e.
$
\rho^{ML}\left(\cdot\right)
$
)
satisfies
$
\rho^{ML}\left(0\right)=0
$
and 
$
\rho^{ML}\left(t\right)=1-\beta
$
for 
$
t\geq\bar{T}^{ML}
$
where $\bar{T}^{ML}>0$.

\subsection{Main results}
\label{subsec:main_result}

The theorem below lists the symmetric equilibria given different priors.
Let 
$
\frac{p^{M}}{1-p^{M}}
:=
\frac{-\left(u_{L}-\underline{\triangle}_{L}\right)}{u_{H}-\underline{\triangle}_{H}}
$,
$
\frac{p^{L}}{1-p^{L}}
:=
\frac{-u_{L}}{u_{H}}
$
and 
$
\frac{\tilde{p}}{1-\tilde{p}}
:=\frac{-bu_{L}-c}{a\bar{\triangle}_{H}+c}
$
be three prior cutoffs.
Both $p^{M}$ and $p^{L}$ are positive as $u_{L}$ is negative.
A sufficiently small information cost $c$ guarantees that 
$
p^{M}<\tilde{p}
$

\begin{theorem}
\label{thm:mainresult}
When $b<2a$ and $c$ is sufficiently small, there exist cutoffs 
$
\underline{p}<p^{*}<p^{L}<p^{M}<\tilde{p}
$
such that:

If $p_{0}\leq \underline{p}$, there exists an equilibrium where both players use the Immediate $S$ Strategy;

If $\underline{p}<p_{0}<p^{*}$, there exists an equilibrium where both players use the Mixed Learning Strategy;

If $p^{*}\leq p_{0}<\tilde{p}$, there exists an equilibrium where both players use the Random Stopping Strategy;

If $p^{L}<p_{0}<p^{M}$, there exists an equilibrium where both players use the Immediate Mix Strategy;

If $p_{0}\geq p^{M}$, there exists an equilibrium where both players use the Immediate $R$ Strategy.
\end{theorem}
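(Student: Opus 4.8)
The plan is to construct, for each of the five prior intervals, the (symmetric) candidate strategy profile explicitly and then verify the conditions of \Cref{def:pbe} in their mixed-strategy form. By \Cref{lemma:rhocontinuous}, in any equilibrium the stopping distribution conditional on no revealing signal has an atom at most at $t=0$, so every candidate below is a pure immediate action, an atom at $t=0$, or an atomless density on an interval $[0,\bar T]$; for the mixed pieces the equilibrium requirement reduces to (i) indifference of the player across all stopping times and actions in the support of her own strategy and (ii) no strictly profitable deviation to an off-support stopping time or action, including deviations that acquire information. Symmetry then lets me check only one player.

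The technical core is a \emph{random-stopping building block}. Fix an opponent who places mass $\alpha\in[0,1)$ on \emph{Immediate $S$} at $t=0$ and otherwise follows an atomless, weakly increasing $\rho$ with $\rho(0)=0$. Using \eqref{eq:time0valuefunction}, I would write the player's payoff from stopping and taking $R$ at time $T$, where $\mathbb{E}^{\gamma^{j}}\left[u_{H}^{R}\mid H,t\right]$ and $U_{R}^{i,\gamma^{j}}(t)$ are evaluated with the winner's-curse-corrected probabilities of being first, simultaneous, or second (the opponent is more likely to have already dropped out when the state is $L$). Setting the $T$-derivative of this payoff to zero yields a first-order ODE for $\rho(\cdot)$ whose coefficients depend on $p_{0}$, $\alpha$, the Poisson rates, and $c$. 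I would then show that, when $b<2a$ and $c$ is small, the solution with $\rho(0)=0$ is strictly increasing and reaches $1$ at a finite time $\bar T$, and that stopping at any interior $t$ weakly dominates both stopping after $\bar T$ and taking $S$ before a breakdown. With $\alpha=0$ this object is the \emph{Random Stopping Strategy}; with $\alpha=\beta>0$ it is the random-stopping branch of the \emph{Mixed Learning Strategy}.

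Given the building block, the cases are handled as follows. For $p_{0}\le\underline p$ the candidate is \emph{Immediate $S$}: against an opponent who plays $(0,S)$ the player is a single decision maker, so \Cref{prop:singledmresult} gives that immediate $S$ is optimal exactly when the belief is at most the single-DM lower cutoff, which identifies $\underline p$. For $p_{0}\ge p^{M}$ the candidate is \emph{Immediate $R$}: against $(0,R)$ the player is a simultaneous $R$-taker, and her expected payoff $p_{0}(u_{H}-\underline{\triangle}_{H})+(1-p_{0})(u_{L}-\underline{\triangle}_{L})$ is nonnegative precisely when $p_{0}\ge p^{M}$; the remaining check is that acquiring information is unprofitable, which uses the strict preemption loss $\bar{\triangle}_{\omega}-\underline{\triangle}_{\omega}>0$ suffered on any delay rather than smallness of $c$. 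For $p^{L}<p_{0}<p^{M}$ the candidate is \emph{Immediate Mix}: the player's payoff from $R$ at $0$ is strictly decreasing in the opponent's atom on $R$, equal to $p_{0}u_{H}+(1-p_{0})u_{L}>0$ when that atom is $0$ (iff $p_{0}>p^{L}$) and to $p_{0}(u_{H}-\underline{\triangle}_{H})+(1-p_{0})(u_{L}-\underline{\triangle}_{L})<0$ when it is $1$ (iff $p_{0}<p^{M}$), so a unique $\rho^{IM}(0)\in(0,1)$ makes the player indifferent between $R$ at $0$ and $S$ at $0$; the substantive check is that learning is deterred, and here the point is that the opponent's atom on \emph{Immediate $R$} imposes a strictly positive \emph{endogenous} information cost, so deterrence survives $c\to0$. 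Finally, for $p^{*}\le p_{0}<\tilde p$ the candidate is the building block with $\alpha=0$, and for $\underline p<p_{0}<p^{*}$ it is the building block with $\alpha=\beta$, where $\beta\in(0,1)$ is chosen so that the value of the random-stopping branch equals the \emph{Immediate $S$} payoff $0$; such a $\beta$ exists on this interval because the branch value is continuous and increasing in $\beta$ (more dropout raises the value of learning) with the appropriate sign at the endpoints, and this also pins down $p^{*}$ and forces the ordering $\underline p<p^{*}<p^{L}<p^{M}<\tilde p$ for small $c$.

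The main obstacle is the building-block step: showing the indifference ODE has a solution that is a \emph{global} best response — monotone in $t$, reaching $1$ in finite time, and beating every deviation, including arbitrary information-acquisition plans whose payoff must be computed under the winner's curse. The hypothesis $b<2a$ is consumed here, keeping the belief drift from overwhelming the learning channel so that $\rho^{RS}$ is increasing and the stopping payoff single-peaked in $T$, while small $c$ is what makes $[p^{*},\tilde p)$ nonempty and keeps $\tilde p$ above $p^{M}$. A secondary obstacle is the uniform ``no deviation to learning'' argument in the \emph{Immediate Mix} and \emph{Immediate $R$} cases, which is exactly where the decomposition of the total information cost into the exogenous $c$ and the endogenous preemption cost does the work.
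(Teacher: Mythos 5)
Your construction for the three no-learning cases (Immediate $S$, Immediate $R$, Immediate Mix) and for the randomise-from-time-zero case matches the paper's Steps 1--4: same indifference condition at $t=0$, same use of the strict loss $\bar{\triangle}_{\omega}-\underline{\triangle}_{\omega}>0$ to deter learning, and the same indifference ODE for $\rho$ with $\frac{d\rho}{dt}>0$ requiring $p_{0}<\tilde p$, monotonicity under $b<2a$, and $\rho=1$ in finite time. The gap is in your ``building block'' for priors below $p^{L}$, i.e.\ exactly the range $[p^{*},p^{L})$ of the Random Stopping case and the whole Mixed Learning interval $(\underline p,p^{*})$. Your block has the opponent randomising (indifference, strictly increasing $\rho$) starting at $t=0$, and you assert that ``stopping at any interior $t$ weakly dominates \dots taking $S$.'' For $p_{0}<p^{L}$ this fails near $t=0$: the payoff from taking $R$ at time $0$ against such an opponent is $p_{0}u_{H}+(1-p_{0})u_{L}<0$, so a player asked to be indifferent between continuing and taking $R$ there strictly prefers to deviate to $S$, and the candidate is not an equilibrium. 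The paper's construction avoids this with a deterministic learning phase: $\rho(t)=0$ for $t\le \hat T$ and randomisation only begins at some $\hat T\in[\hat T_{l},\hat T_{r}]$, where $\hat T_{l}=T_{l}$ is the first time the (first-mover-adjusted) payoff from $R$ reaches zero and $\hat T_{r}=T_{r}$ is the last time learning is strictly preferred given the opponent still learns; this two-phase structure, together with the value function $W$ on $[0,\hat T]$ and the requirement $W(0)\ge 0$, is what actually defines $p^{*}$ (as the fixed point $p^{*}=\underline p^{*}(p^{*})$ with $\underline p^{*}(p_{0})=p^{*}(T_{r})$) and extends the learning equilibrium below $p^{L}$. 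None of this delay-phase machinery appears in your outline, so the statement for $p^{*}\le p_{0}<p^{L}$ and $\underline p<p_{0}<p^{*}$ is not established.

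Relatedly, your Mixed Learning step inherits the same problem: the $\beta$-indifference must be between Immediate $S$ and the \emph{value of learning up to $\hat T^{\beta}$ and then randomising}, computed from the HJB solution $W^{\beta}$ on the pre-randomisation phase (with the first-prize payoff replaced by $\eta^{ML}$ after the shift of time origin to $\hat T^{\beta}$); the paper gets existence of $\beta\in(0,1)$ by an intermediate-value argument comparing $\underline L^{\beta=1}(\hat T^{\beta=1}_{r})<L_{0}<\underline L^{\beta=0}(\hat T^{\beta=0}_{r})$, using $p_{0}>\underline p$ at one end and $p_{0}<\underline p^{*}(p_{0})$ at the other, rather than your claimed monotonicity of the branch value in $\beta$, which you neither prove nor need once the endpoint computations are done. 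So the overall architecture is right for $p_{0}\ge p^{L}$, but the proof as proposed is incomplete precisely where the theorem's content differs most from the single-DM benchmark.
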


The lowest cutoff $\underline{p}$ is the lowerbound in \Cref{prop:singledmresult}, below which the single DM's optimal strategy is to take $S$ immediately.
The cutoff
$p^{*}$ is a fixed point and the detail can be found in \Cref{appen:proofofmainresult}.
The cutoff $p^{L}$ ($p^{M}$, resp) is the belief at which the player is indifferent between taking $S$ and taking immediate $R$ if she is the first $R$ taker (if she and the opponent take $R$ simultaneously, resp).

When the prior is extreme, both players use the \textit{Immediate $R$/$S$ Strategy} in equilibrium.
It seems to be intuitive as the players' learning motive is weak due to the relatively small value of information.
But this is not the whole story.
The subtlety is the role of the preemption motive. 
When the prior is sufficiently small, the preemption motive is absent because $S$ is the optimal action that gives the player the same payoff.
The weak learning motive is indeed the only reason why the players take immediate $S$.
However, when the prior is high, the preemption motive is the main reason why the players take immediate $R$ with a positive probability.
To understand the existence of the equilibria where the players use the immediate $R$ or the immediate mix strategy,
consider a static game where the players do not have access to information
\footnote{This is a game where both players have two actions: action $R$ and action $S$. 
In state $\omega$, if both players take action $R$ (action $S$, resp), the payoff is $u_{\omega}-\underline{\triangle}_{\omega}$ ($0$, resp) for each of them. 
If one player takes action $R$ and the other player takes action $S$, the player who takes action $R$ gets a payoff of $u_{\omega}$ and the player who takes action $S$ gets a payoff of $0$.}.
Consider the sufficiently high prior such that there exists an equilibrium where the players take $R$ with a positive probability in this static game
\footnote{That, when the belief is higher than $p^{L}$.}
.
Then, even if the players are now given access to costly information, it is still optimal to take an immediate action.
This is because when the prior is sufficiently high, the preemption motive is strong and the learning motive is weak.
In the event that the opponent takes immediate $R$, if the player also takes immediate $R$, she then gets the simultaneous-move payoff.
If the player acquires information for an infinitesimal time period $dt$, then, 
the loss from becoming a second $R$ taker is strictly positive but the gain from acquiring information is negligible.
The positive probability of the opponent taking immediate $R$ induces a downward jump in the player's expected payoff from taking $R$.
This deters learning.

\begin{lemma}
\label{lemma:comparisonoftwoupperbounds}
When $c$ is sufficiently small, 
$p^{M}<\bar{p}$.
\end{lemma}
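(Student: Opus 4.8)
The plan is to work with the single decision maker of \Cref{prop:singledmresult} and to show that at the prior $p_{0}=p^{M}$, once $c$ is small enough, taking $R$ immediately is \emph{strictly} suboptimal; since \Cref{prop:singledmresult} says the DM's optimal policy is to take $R$ exactly when $p\geq\bar p$, this forces $p^{M}<\bar p$. I would first record that $p^{M}$ is a fixed constant in $(0,1)$ that does not depend on $c$: by \Cref{ass:first>simul>second} and \Cref{ass:RactionisbetterinstateH} the numerator $-u_{L}+\underline{\triangle}_{L}$ and the denominator $u_{H}-\underline{\triangle}_{H}$ (which exceeds $u_{H}-\bar{\triangle}_{H}>0$) of $p^{M}/(1-p^{M})$ are both strictly positive.

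The heart of the argument is to beat immediate $R$ at $p^{M}$ with an explicit admissible strategy. Taking $R$ at once gives the DM $V_{R}(p)=p\,u_{H}+(1-p)u_{L}$ (alone, she is always the first mover). Against this I would pit the finite-horizon strategy $s^{T}$: acquire information until the first revealing signal or until a fixed deadline $T>0$, whichever is first; take $R$ after a breakthrough, $S$ after a breakdown, and $R$ at $T$ if no signal has arrived. In state $H$ this yields $u_{H}$ with expected information cost at most $c/a$; in state $L$ it yields $0$ with probability $1-e^{-bT}$ (breakdown before $T$) and $u_{L}$ otherwise, with expected cost at most $c/b$. Taking the prior-weighted average and rearranging, the value of $s^{T}$ at $p^{M}$ is at least
\[
V_{R}(p^{M})\;+\;(1-p^{M})\,(1-e^{-bT})\,(-u_{L})\;-\;c\Big(\tfrac{p^{M}}{a}+\tfrac{1-p^{M}}{b}\Big).
\]
Fixing $T$ (any positive value) first, the middle term is a strictly positive constant, so for every $c<\min\{\,\bar c,\ (1-p^{M})(1-e^{-bT})(-u_{L})/(p^{M}/a+(1-p^{M})/b)\,\}$ the strategy $s^{T}$ strictly outperforms immediate $R$ at $p^{M}$. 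Hence immediate $R$ is not optimal there, and by \Cref{prop:singledmresult} this is possible only if $p^{M}<\bar p$.

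The one point requiring care is the order of quantifiers: the deadline $T$ must be chosen and held fixed \emph{before} $c\to 0$, both so that the gain term does not collapse and so that $s^{T}$ remains a genuine (finite) stopping strategy; this is the step I would be most careful to state cleanly. A fully equivalent route is a marginal argument: waiting an extra $dt$ and then taking $R$ changes the DM's expected payoff by $dt\big[(1-p)b(-u_{L})+(u_{H}-u_{L})(b-a)p(1-p)-c\big]+o(dt)$, and the bracket is strictly positive at $p=p^{M}$ for small $c$, again ruling out immediate $R$ at $p^{M}$. Note that the hypothesis $b<2a$ of \Cref{thm:mainresult} is not needed here; only $b>a>0$ together with \Cref{ass:first>simul>second} and \Cref{ass:RactionisbetterinstateH}.
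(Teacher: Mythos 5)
Your proposal is correct, but it takes a genuinely different route from the paper's. The paper proves the lemma by direct comparison of closed forms: from the free-boundary construction behind \Cref{prop:singledmresult} it has $\frac{\bar{p}}{1-\bar{p}}=\frac{-u_{L}-c/b}{c/b}$, while $\frac{p^{M}}{1-p^{M}}=\frac{-\left(u_{L}-\underline{\triangle}_{L}\right)}{u_{H}-\underline{\triangle}_{H}}$ is independent of $c$, so $p^{M}<\bar{p}$ exactly when $\frac{c}{b}<\frac{-u_{L}\left(u_{H}-\underline{\triangle}_{H}\right)}{u_{H}-\underline{\triangle}_{H}-\left(u_{L}-\underline{\triangle}_{L}\right)}$, which gives an explicit cutoff on $c$. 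You instead bypass the closed form for $\bar{p}$ and argue by dominance: at $p_{0}=p^{M}$ the fixed finite-deadline strategy $s^{T}$ strictly beats immediate $R$ once $c$ is small, and since \Cref{prop:singledmresult} prescribes immediate $R$ for all $p\geq\bar{p}$ (where it requires $c\leq\bar{c}$, which you correctly fold into your bound), this forces $p^{M}<\bar{p}$. Your lower bound on the value of $s^{T}$, the quantifier order (fix $T$ first, then shrink $c$), and the remark that $b<2a$ is not needed are all right. What the paper's route buys is an explicit threshold on $c$ in terms of primitives; what yours buys is independence from the particular smooth-pasting formula for $\bar{p}$, so it would survive modifications of the signal structure that change that closed form, at the cost of a $T$-dependent and less sharp bound on $c$.

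One correction to your secondary ``marginal'' route: the per-unit-time gain from delaying $R$ by $dt$ at belief $p$ is $(1-p)b(-u_{L})-c$, not $(1-p)b(-u_{L})+(u_{H}-u_{L})(b-a)p(1-p)-c$. The drift term should not appear: it cancels against the downward revision of the continuation payoff from conditioning on no signal arriving (equivalently, compute state by state: in $H$ the strategy yields $u_{H}$ regardless of the delay, while in $L$ the delay gains $b\,dt\,(-u_{L})$). Since the extra term you added is positive, your bracket overstates the gain, but the correct bracket is still strictly positive at $p=p^{M}$ for small $c$, so the conclusion of that alternative argument also stands.
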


The cutoff $\bar{p}$ is the upperbound in \Cref{prop:singledmresult}: the belief above which the player takes $R$ immediately.
\Cref{lemma:comparisonoftwoupperbounds} says that the range of priors where the players take immediate $R$ is larger when the players have the incentive to preempt.
The preemption motive leads to less information acquisition and more immediate risky action for relatively optimistic prior beliefs.

The intuition of the equilibrium where the players use the \textit{Random Stopping Strategy} can be explained by the tradeoff between learning and the preemption motive.
When the prior is in the lower intermediate range, the player has the learning motive because the value of information is high.
If a player were playing this game alone, the optimal strategy is to acquire information until the belief drifts up to an upperbound and then takes $R$ (see \Cref{prop:singledmresult}).
Before she stops acquiring information, the distribution of the time at which she takes $R$ is the distribution of the Poisson breakthrough, which is continuous.
At the time the belief drifts up to the upperbound, according to the strategy, the player stops and takes $R$.
Therefore, the induced distribution of the time at which she takes $R$ has an atom at the time when the belief drifts up to the upperbound.
If player $i$ were playing against player $j$ who uses this strategy, player $i$ would preempt player $j$.
To avoid being preempted by the opponent, what the player could do is to randomise at which time she stops and takes action $R$.
This randomisation will make the opponent's marginal cost and marginal benefit (explain later) from acquiring information the same and hence eliminate the incentive to preempt.
The indifference condition gives rise to an ODE that the equilibrium strategy satisfies which is concluded in \Cref{lemma:randomisedstoppingequilibriumcondition} in the appendix.
Here in the main text, I give expressions of the marginal cost and marginal benefit from acquiring information and show how they are determined by the opponent's strategy.
Suppose player $j$ uses the mixed strategy 
$
\left(
\rho, \sigma
\right)
$
defined in \Cref{def:mixedstrategy}.
Suppose $\rho$ and $\sigma$ are differentiable.
Then,
\begin{align*}
F_{H} \left(t\right) 
& =
\intop_{0}^{t}
\left[e^{-as}\left(1-\rho \left(s\right)\right)\left(a+\frac{\frac{d\rho \left(s\right)}{ds}}{1-\rho \left(s\right)}\right)\right]ds 
\\
& = 
1-e^{-at}\left(1-\rho \left(t\right)\right) 
\end{align*}
is the probability that player $j$ takes $R$ before time $t$ in state $H$ and
\begin{align*}
F_{L} \left(t\right) 
& =
\intop_{0}^{t}
\left[e^{-bs}\left(1-\rho \left(s\right)\right)\frac{\frac{d\rho \left(s\right)}{ds}}{1-\rho \left(s\right)}\right]ds 
\\
& =
\intop_{0}^{t}\left[e^{-bs}\frac{d\rho \left(s\right)}{ds}\right]ds. \end{align*}
is the probability that player $j$ takes $R$ before time $t$ in state $L$.
At time $t$, player $i$ is indifferent between taking $R$ now and $dt$ later if
\begin{align}
\label{eq:marginalcostequalsmarginalbenefit}
c
+
p_{t}\frac{dF_{H}\left(t\right)}{dt}\bar{\triangle}_{H}
+
\left(1-p_{t}\right)\frac{dF_{L}\left(t\right)}{dt}\bar{\triangle}_{L}
=
\left(1-p_{t}\right)b\left[-u_{L}+F_{L}\left(t\right)\bar{\triangle}_{L}\right].
\end{align}
The left-hand side of (\ref{eq:marginalcostequalsmarginalbenefit}) is the marginal cost of acquiring information for $dt$ longer at time $t$ and the right-hand side is the marginal benefit.
The cost consists of the \textit{direct} information cost $c$ and the \textit{indirect} cost from being preempted. 
Intuitively, in state $\omega$, the opponent preempts the player in this short time period $dt$ with probability 
$
\frac{dF_{\omega}\left(t\right)}{dt}.
$
The benefits of acquiring information comes from the breakdown because it corrects the player's action from $R$ to $S$ in state $L$.
The player gets the payoff of zero instead of the negative payoff associated with $R$ in state $L$.
This indifference condition gives an ODE that the equilibrium strategy satisfies.
\Cref{lemma:randomisedstoppingequilibriumcondition} and \Cref{lemma:RSTequilibriumexistence} in \Cref{appen:proofofmainresult} characterises the conditions that the equilibrium strategy satisfies.

\begin{corollary}
\label{lemma:multipleequilibria} 
When $c$ is sufficiently small, there exist multiple equilibria if $\underline{p}<p_{0}<p^{M}$.
\end{corollary}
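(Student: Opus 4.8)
The proof partitions the range $(\underline{p},p^{M})$ into $(p^{L},p^{M})$ and $(\underline{p},p^{L}]$ and exhibits a second equilibrium on each piece. On $(p^{L},p^{M})$ nothing beyond \Cref{thm:mainresult} is needed: by the ordering $\underline{p}<p^{*}<p^{L}<p^{M}<\tilde{p}$, every such $p_{0}$ lies both in the range $[p^{*},\tilde{p})$, where \Cref{thm:mainresult} gives a symmetric equilibrium in the Random Stopping Strategy, and in the range $(p^{L},p^{M})$, where it gives a symmetric equilibrium in the Immediate Mix Strategy. These are distinct equilibria, since under the Random Stopping Strategy the player acquires information (with probability one, for a strictly positive duration) while under the Immediate Mix Strategy she acquires none.

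For $p_{0}\in(\underline{p},p^{L}]$, \Cref{thm:mainresult} still supplies one symmetric equilibrium --- the Mixed Learning Strategy when $p_{0}<p^{*}$ and the Random Stopping Strategy when $p^{*}\le p_{0}\le p^{L}$ --- so it suffices to produce a distinct, necessarily asymmetric, equilibrium. The candidate, of the kind previewed in the introduction, is the profile in which one player uses the Immediate $S$ Strategy while the other uses the single-decision-maker optimal policy of \Cref{prop:singledmresult}. Optimality of the second player's strategy is immediate: since the first player never takes $R$, the second player faces no payoff externality, so her problem is exactly the Wald problem of \Cref{subsec:singleDM}; and because $\underline{p}<p_{0}\le p^{L}<p^{M}<\bar{p}$, where the last inequality is \Cref{lemma:comparisonoftwoupperbounds}, her optimal policy acquires information rather than acting at time $0$. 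Hence the profile is genuinely asymmetric and therefore distinct from the symmetric equilibrium above.

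It remains to verify that the Immediate $S$ player has no profitable deviation against this opponent. The deviation $(0,R)$ yields $p_{0}u_{H}+(1-p_{0})u_{L}\le 0$ because $p_{0}\le p^{L}$, so only deviations that first acquire information are in question --- and these are not obviously unprofitable, since a one-instant marginal calculation shows that acquiring information briefly and then taking $R$ yields strictly more than taking $R$ at time $0$ whenever $p_{0}<\tilde{p}$, which holds throughout $(\underline{p},p^{L}]$. The deviation must therefore be defeated by the preemption pressure of the opponent's policy, and I would argue this state by state: conditional on the low state, any plan that acquires information has a strictly negative payoff (the flow cost $c$ together with the residual probability of eventually taking $R$ in the low state); conditional on the high state, the opponent takes $R$ no later than the deterministic time at which her belief reaches $\bar{p}$, and strictly earlier on her own breakthrough, which caps the deviating player's high-state continuation value strictly below $u_{H}$ through a winner's-curse discount. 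Recombining across states, one shows that the value of the deviating player's best response is nonpositive exactly for $p_{0}\le p^{L}$, the threshold being the belief at which the first-mover payoff of immediate $R$ changes sign. This uniform bound on the deviation value --- tracking the race for the breakthrough in the high state against the winner's-curse discount in the low state --- is the step I expect to be the main obstacle; everything else is assembled directly from \Cref{thm:mainresult}, \Cref{prop:singledmresult} and \Cref{lemma:comparisonoftwoupperbounds}.
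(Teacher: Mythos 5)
Your first case is fine and is exactly the paper's own reasoning: for $p_{0}\in(p^{L},p^{M})$ the ordering $p^{*}<p^{L}<p^{M}<\tilde{p}$ in \Cref{thm:mainresult} gives both a Random Stopping equilibrium and an Immediate Mix equilibrium, which are evidently distinct. The gap is in your treatment of $(\underline{p},p^{L}]$. The paper does not obtain multiplicity there from an asymmetric profile; it obtains it from the construction behind \Cref{thm:mainresult} itself: in \Cref{lemma:RSTequilibriumexistence} (and its Mixed Learning analogue in Step 6) the time $\hat{T}$ at which randomisation begins may be any point of the non-degenerate interval $\left[\hat{T}_{l},\hat{T}_{r}\right]$, so each admissible prior already supports a continuum of distinct symmetric equilibria differing in $\hat{T}$ (hence in $\rho$). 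Your proposed second equilibrium --- Immediate $S$ against the opponent's single-DM policy --- is established nowhere in the paper, and your own verification of it is precisely the step you leave open.

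That step in fact fails. You claim the Immediate-$S$ player's best deviation value is nonpositive exactly for $p_{0}\leq p^{L}$, but \Cref{ass:RactionisbetterinstateH} makes the second-mover payoff $u_{H}-\bar{\triangle}_{H}$ strictly positive, so no winner's-curse or preemption discount can push the high-state continuation value below it. Concretely, the deviation ``acquire information up to a fixed time $T$, take $R$ on a breakthrough, take $S$ otherwise'' never takes $R$ in state $L$ and pays at most $cT$ in cost, so against the single-DM opponent it is worth at least $p_{0}\left(1-e^{-aT}\right)\left(u_{H}-\bar{\triangle}_{H}\right)-cT$, strictly positive for small $c$ at any prior bounded away from zero --- in particular on $(p^{*},p^{L}]$, since $p^{L}$ is independent of $c$ while $p^{*}$ and $\underline{p}$ vanish with $c$. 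Hence Immediate $S$ is not a best response there and your asymmetric profile is not an equilibrium on much of the range; the belief at which a player is indifferent between $S$ and learning against a learning opponent is of the order of the $p^{*}$-type cutoffs of Step 5, not $p^{L}$, which is the (unrelated) belief at which a first mover's immediate-$R$ payoff changes sign. To cover $(\underline{p},p^{L}]$ you should instead invoke the multiplicity already built into the Random Stopping / Mixed Learning constructions (the freedom in $\hat{T}$), as the paper does.
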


When $\underline{p}<p_{0}<\tilde{p}$, 
in the equilibrium where the players use the Random Stopping Strategy, the players both acquire information up to some time $\hat{T}$ and then randomly stop at each time greater than $\hat{T}$.
The time $\hat{T}$ at which the player starts the randomisation is not unique and this leads to multiple equilibria.
To get the intuition, suppose player $j$ uses the Random Stopping Strategy such that she starts the random stopping at some time $\hat{T}>0$.
To have both players using this strategy as an equilibrium, before time $\hat{T}$, player $i$ must strictly prefer to acquire information given that the opponent acquires information.
That is, given the opponent is still acquiring information, the marginal cost from acquiring information for $dt$ longer must be smaller than the marginal benefit.
As a result, the latest time instant at which the player starts randomising is the first time instant at which the marginal cost of acquiring information equals the marginal benefit given that the opponent acquires information.
Before this time point, if the opponent acquires the information, the marginal benefit of acquiring information exceeds the cost.
The player acquires information.
After this time point, even though the player knows the opponent is still acquiring information, the value associated with information is too small to keep the player engaged.
After time $\hat{T}$, player $i$ is indifferent between taking $R$ at each time $t>\hat{T}$, and she must prefer to randomise instead of taking $S$.
Therefore, the earliest time instant at which the player is willing to start the random stopping is the earliest time instant such that taking $R$ gives her a payoff of at least zero.
Before this time instant, the player is not optimistic enough about the state to take $R$.
After this time instant, the player matches the opponent's action.
If the opponent's strategy is to randomise, the player also randomises.

The intuition behind the multiplicity is the strategic complementarity of information.
Information is a strategic complement at time $t<\hat{T}$ because when the opponent is still acquiring information, the value of information is relatively high.
The player hence is also willing to acquire more information.
If the opponent starts randomising, then, the value associated with acquiring information becomes lower.
The player then matches the time at which the opponent starts randomising.
This strategic complementarity also explains the existence of multiple equilibria when 
$
p_{0} 
\in
\left[
p^{L},p^{M}
\right)
$.
For this range of priors, according to \Cref{thm:mainresult}, 
there exist both kinds of equilibria where the players use the Random Stopping Strategy and where the players use the Immediate Mix Strategy.
In other words, there simultaneously exist the equilibria where the players acquire information and where the players do not acquire information.

The existence of the equilibrium where both players use the \textit{Random Stopping Strategy} requires the player to start acquiring information at time $0$.
If the prior is relatively low, then, the expected duration of acquiring information before stopping is long and the cost of delay is high. 
In addition, the players are pessimistic about the state being high.
Therefore,
the players do not have strong incentives to start acquiring information and hence this symmetric equilibrium where both players acquire information do not exist.
However,
there exists asymmetric equilibrium where one player acquires information and the other takes the immediate safe action.
This is because when the player faces a longer potential delay of action, the expected benefit associated with taking $R$ must be higher to compensate for the expected information cost.
The benefit associated with taking $R$ depends on the opponent's strategy.
If the opponent's takes $S$ at time $0$, then, the player's gain from information is higher.
This higher gain can incentivise the player to start learning at time $0$.
However, if the opponent acquires information, then, the player gains less from taking $R$ because there is a positive probability that she can only get the lower payoff.
This gain may not be sufficient to incentivise the player to start acquiring information at time $0$.
In this case, information is a strategic substitute. 
In the symmetric equilibrium, the players mix between two strategies at time $0$: the Immediate $R$ Strategy and the Random Stopping Strategy.
Compared to the case that the opponent uses Random Stopping Strategy, the opponent takes $R$ with a lower probability.
This in turn increases the player's gain from taking $R$ and hence increases the gain from acquiring information.
Then, the player is incentivised to start learning at a lower prior.

\section{Information cost and the competition}

In this section, I discuss the effects of a vanishing information cost $c$ and the effects of the intensity of competition.

\subsection{Vanishing information cost $c$}

When the information cost vanishes, the players acquire information for a larger range of priors but not for all the priors.
In the equilibrium where the players use the random stopping strategy, they acquire information for a longer duration before randomisation.
When they randomise, they stop and take $R$ at a higher rate.

The cost of acquiring information consists of two parts: the exogenous information cost $c$ and the endogenous cost from being preempted.
This is the main difference between the two-player case and the single DM case.
In the single DM case, since the exogenous information cost $c$ is the total cost of acquiring information,
when $c$ vanishes,
the DM always acquires information until she is certain.
In contrast, 
in the two-player game, the total information cost does not vanish with the exogenous cost $c$.
When the prior is high, the player believes that the state is very likely to be $H$ and infers that the probability of being preempted is high.
Thus, the endogenous cost is high for optimistic priors even when $c$ vanishes.
As a result, when $c$ vanishes, the equilibrium such that both players acquire information exists for a larger range of priors, but does not exist for extremely high priors.
However, when the prior is low, the player is pessimistic about the state and hence she believes that it is not likely that the opponent takes $R$.
In addition, for low priors, if the player does not acquire information, the optimal action is to take $S$, the payoff of which does not depend on the opponent's action.
Thus,
the endogenous information cost is low when the prior is close to zero.
As a result, for low priors, the total information cost vanishes with $c$ and the player is willing to acquire information for low priors.
This property is summarised in the following proposition.

\begin{proposition}
\label{prop:howcutoffschangeswithcost}
The cutoff $\tilde{p}$ decreases in $c$. 
When 
$c \rightarrow 0$,
$\underline{p}\rightarrow 0$ 
and
$
\tilde{p}
\rightarrow
\frac{\frac{b}{a}\frac{-u_{L}}{\bar{\triangle}_{H}}}{1+\frac{b}{a}\frac{-u_{L}}{\bar{\triangle}_{H}}}
\in
\left(0,1\right)
$. 
\end{proposition}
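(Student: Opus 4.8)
The plan is to treat the two assertions separately: everything about $\tilde p$ is immediate from its closed-form definition, whereas $\underline p\to 0$ requires a short comparison argument inside the single-decision-maker problem of \Cref{prop:singledmresult}.

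For $\tilde p$, I would set $r(c):=\dfrac{-bu_L-c}{a\bar{\triangle}_H+c}$, so that $\dfrac{\tilde p}{1-\tilde p}=r(c)$ and $\tilde p=\dfrac{r(c)}{1+r(c)}$. By \Cref{ass:RactionisbetterinstateH} one has $-u_L>0$ and $u_H>0$, hence $-bu_L>0$ and $a\bar{\triangle}_H>0$; thus $r(c)>0$ whenever $c<-bu_L$, which holds for $c$ small, and $\tilde p\in(0,1)$ is well defined. By the quotient rule,
\begin{equation}
r'(c)=\frac{-(a\bar{\triangle}_H+c)-(-bu_L-c)}{(a\bar{\triangle}_H+c)^2}=\frac{bu_L-a\bar{\triangle}_H}{(a\bar{\triangle}_H+c)^2}<0 ,
\end{equation}
and since $x\mapsto x/(1+x)$ is strictly increasing on $(0,\infty)$, $\tilde p$ is strictly decreasing in $c$. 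Letting $c\to 0$ gives $r(c)\to \dfrac{-bu_L}{a\bar{\triangle}_H}=\dfrac{b}{a}\dfrac{-u_L}{\bar{\triangle}_H}\in(0,\infty)$, hence $\tilde p\to \dfrac{\frac{b}{a}\frac{-u_L}{\bar{\triangle}_H}}{1+\frac{b}{a}\frac{-u_L}{\bar{\triangle}_H}}\in(0,1)$, which is the stated limit.

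For $\underline p$, recall from \Cref{prop:singledmresult} that $\underline p\ge 0$ is the largest belief at which the single DM's optimal policy is to take $S$ immediately, so her value function $V$ satisfies $V(p)=0$ for $p\le\underline p$. It thus suffices to find, for each small $c>0$, a threshold $\hat p(c)$ with $\hat p(c)\to 0$ such that $V(p)>0$ whenever $p>\hat p(c)$; this forces $\underline p\le\hat p(c)\to 0$. Fix $p$ and bound $V(p)$ from below by the feasible (generally suboptimal) policy ``acquire information on $[0,1]$; take $R$ upon a breakthrough and $S$ otherwise,'' whose value is
\begin{equation}
-\,c\int_0^1\!\bigl(pe^{-at}+(1-p)e^{-bt}\bigr)\,dt\;+\;u_H\,p\,(1-e^{-a}),
\end{equation}
since in state $H$ a breakthrough arrives in $[0,1]$ with unconditional mass $p(1-e^{-a})$ and yields $R$ with payoff $u_H>0$ (\Cref{ass:RactionisbetterinstateH}), while a breakdown or no signal yields $S$ with payoff $0$. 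As $pe^{-at}+(1-p)e^{-bt}\le 1$, this is at least $-c+u_H\,p\,(1-e^{-a})$, strictly positive whenever $p>\hat p(c):=\dfrac{c}{u_H\,(1-e^{-a})}$. Since $\hat p(c)\to 0$ as $c\to 0$ and $\underline p\ge 0$, this yields $\underline p\to 0$.

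The proposition therefore reduces to one elementary differentiation and one suboptimal-policy comparison, so I do not expect a genuine obstacle. The only care needed is (i) checking $r(c)>0$ so that $\tilde p$ is an admissible probability over the relevant range of $c$, and (ii) confirming that the comparison policy is feasible and correctly valued on the breakdown/no-signal events; since $\underline p$ is only characterized implicitly in \Cref{prop:singledmresult}, it must be bounded above by comparing the optimum to an explicit policy rather than computed in closed form.
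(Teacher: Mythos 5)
Your proposal is correct, and the $\tilde p$ part coincides with the paper's treatment: the paper likewise reads the limit directly off the definition $\frac{\tilde p}{1-\tilde p}=\frac{-bu_L-c}{a\bar{\triangle}_H+c}$ (your explicit quotient-rule computation of monotonicity is if anything more complete than what the paper writes). Where you genuinely diverge is the claim $\underline p\rightarrow 0$. The paper proves it by invoking the closed-form characterisation of $\underline p$ obtained in the proof of \Cref{prop:singledmresult}: value matching and smooth pasting give $\left[u_H-\frac{c}{a}\right]\underline L+K\,\underline L^{\frac{b}{b-a}}=\frac{c}{b}$ with $K=\frac{c}{b}\left(\frac{b}{a}-1\right)\bar L^{1-\frac{b}{b-a}}$, and one then lets $c\rightarrow 0$ (so $\bar p\rightarrow 1$, $K\rightarrow 0$, and $\underline L\rightarrow 0$). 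You instead avoid the closed form entirely and bound the single DM's value from below by the explicit feasible policy ``sample on $[0,1]$, take $R$ on a breakthrough, $S$ otherwise,'' whose value $-c\int_0^1\pi_t\,dt+u_H p\left(1-e^{-a}\right)\geq -c+u_H p\left(1-e^{-a}\right)$ is strictly positive for $p>\hat p(c)=\frac{c}{u_H\left(1-e^{-a}\right)}$, forcing $\underline p\leq\hat p(c)$. This is a legitimate and more elementary route: it does not depend on the viscosity-solution construction, and it even delivers an explicit rate $\underline p=O(c)$, whereas the paper's argument is a one-line corollary of formulas it has already derived. The only point worth making explicit in your write-up is that the comparison argument uses the characterisation ``$V(p)=0$ and immediate $S$ is optimal for $p\leq\underline p$,'' which holds only for $c\leq\bar c$; this is harmless here since the statement concerns $c\rightarrow 0$.
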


For the equilibrium where the players use the random stopping strategy, when $c$ vanishes, the effects are twofold. 
First, the players acquire information for a longer duration before they start the randomisation stage.
Second, when the players just enter the randomisation stage, they stop and take $R$ at a higher rate.
The first is intuitive as when $c$ decreases, information becomes cheaper and hence players are willing to acquire information for a longer period.
The intuition for the second effect is related to the indifference condition when the players randomise.
In equilibrium, the players randomise between acquiring information and taking $R$ because the marginal cost and marginal benefit from acquiring information are the same. 
The marginal cost from acquiring information consists of the exogenous and the endogenous information cost.
When the player just enters the randomisation stage, as shown in (\ref{eq:marginalcostequalsmarginalbenefit}),
the marginal benefit is not affected by the exogenous information cost $c$ and the players' stopping rate.
The marginal benefit depends only on the cumulative density of the opponent taking $R$.
According the equilibrium strategy, since the opponent has not started the random stopping yet, the marginal benefit only depends on the arrival of the revealing signals but not player's stopping rate.
The endogenous information cost however, increases in the opponent's stopping rate.
If the players stop and take $R$ at a higher rate, the endogenous information cost is higher.
As a result, when the players just enters the randomisation stage, if the exogenous information cost $c$ vanishes,
the marginal benefit is unchanged while the marginal cost decreases because of the vanishing information cost $c$.
To have the indifference condition hold,
the players stop and take $R$ at a higher rate to increase the endogenous information cost.

\subsection{Competition}
\label{subsec:competition}

The competition in this game comes from the payoff difference between the first and second $R$ taker.
The intensity of the competition increases in the value of the payoff difference.
In all the previous discussion, the payoff difference satisfies \Cref{ass:RactionisbetterinstateH}.
That is, the second $R$ taker gets a higher payoff than taking $S$ in state $H$.
With this assumption, after the player learns the state is high, taking $R$ is a dominant strategy.
To understand the effect of an intense competition, I drop \Cref{ass:RactionisbetterinstateH} and impose \Cref{ass:secondRtakergetsnegativepayoff}.

\begin{assumption}
\label{ass:secondRtakergetsnegativepayoff}
$
u_{L}
<
u_{H}-\bar{\triangle}_{H}
<
0$.
\end{assumption}

\Cref{ass:secondRtakergetsnegativepayoff} says that in state $H$, the payoff of the second $R$ taker is smaller than the safe action payoff.
This assumption explicitly imposes a `loser gets punished' condition
\footnote{Note that given \Cref{ass:RactionisbetterinstateH}, the player may eventually be punished due to the total information cost. The difference here is that the player explicitly knows that the second prize from taking $R$ is worse than taking $S$.}.
Given \Cref{ass:secondRtakergetsnegativepayoff}, whether the player takes $R$ after learning the state is $H$ depends on her belief on the opponent's strategy.
Take an extreme case as an example.
If the player believes that the opponent takes $R$ immediately,
then, taking $S$ is the optimal action regardless of the state.
Information becomes worthless because learning the state does not change the player's action and $S$ always gives the player a constant payoff.
Then it is never optimal for the player to acquire information.
This deters information acquisition completely.

In a less extreme case, information acquisition is not necessarily deterred at the beginning.
Instead, there exists an upperbound on the duration of information acquisition.
Suppose the opponent uses some well-behaved strategy such that she acquires information at time $0$ with probability one.
Then, 
the player's expected payoff from taking $R$ conditional on state $H$ is the highest at time 0 and then decreases.
This is because the player gets the first $R$ taker payoff for certain at time $0$ and as time passes, the probability of the opponent taking $R$ increases.
Since the second $R$ prize is negative, there exists a time instant at which the player's expected payoff from taking $R$ conditional on state $H$ decreases to zero.
Then, despite learning the state or not, at and after this time instant, the player's optimal action is $S$.
The player hence gains nothing from acquiring information after this time instant.
However, this fact that information becomes worthless after certain time instant does not deter information acquisition at the beginning.
When the player just starts acquiring information,
the probability of being preempted is low. 
She puts a higher weight on the first prize conditional on state $H$.
The negative second prize does not matter too much.
Information acquisition is not deterred as long as the expected payoff from taking $R$ conditional on state $H$ is positive.
Once at the time instant such that $R$ gives the player an expected payoff of zero in state $H$, information acquisition stops and $S$ will be taken.
This puts an upperbound on the duration of information acquisition. 
Since the player takes the safe action that creates no preemption motive when stopping, there is no random stopping in equilibrium.
As a result,
there exists a pure strategy equilibrium where in the absence of the revealing signal, the players acquire information up to some time and then take the safe action.
This result is summarised in the following proposition.

\begin{proposition}
\label{prop:competition}
Suppose the payoff satisfies \Cref{ass:first>simul>second} 
and 
\Cref{ass:secondRtakergetsnegativepayoff}.
There exists a prior cutoff
$
p^{NR}
$
and a non-negative upperbound 
$
T^{PS}
=
\frac{1}{a}
\log
\frac{\bar{\triangle}_{H}}{-\left(u_{H}-\bar{\triangle}_{H}\right)}
$
such that
for the prior
$
p_{0}
\in
\left(
p^{NR},\tilde{p}
\right),
$
there exists a symmetric equilibrium where the players use the pure strategy 
$
\left(
T^{PS}, S
\right).
$
\end{proposition}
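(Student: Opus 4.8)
The plan is to fix one player at the candidate strategy, characterise the other player's best response as the solution of the single-agent optimal-stopping problem in \eqref{eq:time0valuefunction} with the opponent's behaviour plugged in, and check that this best response is again the candidate strategy; symmetry then delivers the equilibrium. Since a pure strategy only specifies off-signal play, I first complete the profile: off-signal, acquire information on $[0,T^{PS})$ and take $S$ at $T^{PS}$; upon an $L$-revealing signal take $S$; upon an $H$-revealing signal at time $\tau<T^{PS}$ take $R$. Post-signal optimality is immediate and uses only the definition of $T^{PS}$: after an $L$-signal the state is known to be $L$, so every $R$-payoff is $\le u_L<0=U_S$ and further acquisition only burns the flow cost, hence $S$ is optimal; after an $H$-signal at $\tau$ the state is known to be $H$, the $R$-payoff equals $V_H(\tau):=u_H-(1-e^{-a\tau})\bar{\triangle}_H$ (derived below), which is strictly positive for $\tau<T^{PS}$ and strictly decreasing, while continuing produces no new information, hence taking $R$ at once is optimal. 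Because the terminal action in the off-signal branch is $S$ --- which carries no preemption externality --- there is no force pushing toward randomised stopping, so a deterministic-time equilibrium is possible here, unlike the $R$-terminal equilibria of \Cref{thm:mainresult}.

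\textbf{Best-response primitives.} Against the profile, the opponent in state $H$ takes $R$ exactly when her own $H$-signal arrives before $T^{PS}$, so she has taken $R$ by time $t$ with probability $1-e^{-a\min\{t,T^{PS}\}}$, and in state $L$ she never takes $R$ (she takes $S$ at $\min\{\tau_L,T^{PS}\}$, where $\tau_L$ is her breakdown time). Hence the player's expected $R$-payoff at time $t$ is $U_R(t)=p_tV_H(t)+(1-p_t)u_L$ with $V_H(t)=u_H-(1-e^{-at})\bar{\triangle}_H$ for $t\le T^{PS}$ and $V_H(t)=0$ for $t\ge T^{PS}$, while $U_S(t)=0$; and the flow integrand in \eqref{eq:time0valuefunction} is $\pi_t\big(p_ta V_H(t)-c\big)$. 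The facts to record for later: $V_H$ is continuous, equals $u_H>0$ at $t=0$, decreases strictly to $0$ at $T^{PS}$ (this is the definition of $T^{PS}$), and is identically $0$ afterwards; consequently $U_R(t)=(1-p_t)u_L<0=U_S(t)$ for every $t\ge T^{PS}$.

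\textbf{Off-signal verification.} It remains to show that $(T^{PS},S)$ maximises $\int_0^{T}\pi_t(p_ta V_H(t)-c)\,dt+\pi_T\max\{0,\,p_TV_H(T)+(1-p_T)u_L\}$ over $(T,x)$, which amounts to excluding three deviations. (i) Terminating with $R$: for $T\ge T^{PS}$ this is dominated because $U_R(T)<0=U_S(T)$; for $T<T^{PS}$ the plan $(T,R)$ is dominated by $(T^{PS},S)$ precisely when $p_0<\tilde p$ --- at $T=0$ the comparison ``acquire one more instant, a breakdown correcting $R$ to $S$, versus take $R$ now'' reduces to $(1-p_0)b(-u_L)>c+p_0a\bar{\triangle}_H$, i.e. $\frac{p_0}{1-p_0}<\frac{-bu_L-c}{a\bar{\triangle}_H+c}=\frac{\tilde p}{1-\tilde p}$ (the same cutoff as in \Cref{thm:mainresult}), and since $U_R$ falls over time (the opponent grows more likely to have preempted, so $V_H$ falls) while the continuation value of ``acquire, then take $S$ at $T^{PS}$'' stays nonnegative, the $T=0$ comparison is the binding one. (ii) Terminating with $S$ before $T^{PS}$: excluded as long as the flow surplus $p_ta V_H(t)-c$ is nonnegative on $[0,T^{PS})$; this is the role of $p^{NR}$, essentially the smallest prior with $p_0a u_H\ge c$ so that the surplus is nonnegative at $t=0$, with a single-crossing argument in $t$ --- $\dot p_t>0$ from \eqref{equ:belief} against the decay of $V_H$ --- propagating nonnegativity up to $T^{PS}$. (iii) Acquiring past $T^{PS}$: pointless because the surplus there equals $-c\pi_{T^{PS}}\le 0$ and stays $\le 0$ afterwards, since $V_H\equiv 0$. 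Combining (i)--(iii), the optimal plan is to acquire until $T^{PS}$ and then take $S$, which is the candidate; the construction is symmetric and the beliefs obey \eqref{equ:belief}, so this is a symmetric perfect Bayesian equilibrium for $p_0\in(p^{NR},\tilde p)$.

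\textbf{Main obstacle.} The hard part is (ii)--(iii): the flow surplus $p_ta V_H(t)-c$ equals $-c$ at $t=T^{PS}$, so for $c>0$ it is strictly negative just below $T^{PS}$ and the truly optimal stopping time is the first $\hat T$ with $p_{\hat T}a V_H(\hat T)=c$, which lies below $T^{PS}$ and increases to $T^{PS}$ as $c\to 0$. The clean statement is therefore the vanishing-cost one ($T^{PS}$ being otherwise the natural upper bound on the acquisition phase that the proposition advertises), and the proof should either be run in the limit $c\to0$ or track $\hat T(c)$; either way one must still pin down $p^{NR}$ and $\tilde p$ so that along $[0,\hat T]$ neither immediate $S$ nor an interior termination with $R$ beats continuing --- the latter requiring, beyond the $t=0$ inequality, a monotonicity/single-crossing argument on $U_R(\cdot)$ against the continuation value. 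The remaining ingredients --- on-signal optimality and belief consistency --- are routine.
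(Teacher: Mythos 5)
Your overall architecture --- fix the opponent at $\left(T^{PS},S\right)$, complete the profile with the obvious on-signal behaviour, and verify that the best reply is again $\left(T^{PS},S\right)$, with $p_{0}<\tilde{p}$ ruling out termination with $R$ and a lower cutoff ruling out early $S$ --- is exactly the route taken in \Cref{appen:competitionproof}. The implementation differs in form: the paper writes the HJB equation (\ref{eq:HJBcompetition}), exhibits a closed-form candidate $W_{L}$ solving the learning ODE on $\left[0,T^{PS}\right)$ with constant $\psi\left(T^{PS}\right)$, sets the candidate value to $0$ after $T^{PS}$, defines $p^{NR}$ by nonnegativity of the time-$0$ value of the learning plan (not by your myopic condition $p_{0}au_{H}\geq c$, which is stronger than needed at interior dates), and checks viscosity conditions, using monotonicity of $\psi\left(\cdot\right)$ under $p_{0}<\tilde{p}$ where you invoke a single-crossing argument on $U_{R}\left(\cdot\right)$.

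However, the obstacle you flag in your final paragraph is genuine, and the paper's verification does not dispose of it. Step 1 of the paper's argument establishes only $W_{L}\left(t\right)>U_{R}\left(t\right)$ on $\left[0,T^{PS}\right)$; the inequality actually required is $W_{L}\left(t\right)\geq\max\left\{U_{R}\left(t\right),0\right\}$, and the comparison with the safe payoff $0$ is precisely what fails at the top end. Indeed, substituting the paper's own constant $\psi\left(T^{PS}\right)$ into $W_{L}$ and using the definition of $T^{PS}$ (which makes $u_{H}-\left(1-e^{-aT^{PS}}\right)\bar{\triangle}_{H}=0$) yields $W_{L}\left(T^{PS}\right)=\left(1-p_{T^{PS}}\right)u_{L}<0$, so the candidate value function jumps up to $0$ at $T^{PS}$ and lies strictly below the stop-and-take-$S$ payoff on a left neighbourhood of $T^{PS}$. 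That is exactly your deviation: near $T^{PS}$ the only gain from further acquisition is a breakthrough worth $V_{H}\left(t\right)\rightarrow0$, all later payoffs are at most $0$, and the flow cost is $c>0$, so stopping with $S$ strictly before $T^{PS}$ strictly improves on $\left(T^{PS},S\right)$; hence $\left(T^{PS},S\right)$ is not a best reply to itself for any fixed $c>0$. Your proposed repairs are the natural ones: either characterise the equilibrium stopping time $\hat{T}\left(c\right)<T^{PS}$ by smooth pasting, $p_{\hat{T}}aV_{H}\bigl(\hat{T}\bigr)=c$ (note this depends on $p_{0}$, unlike $T^{PS}$), and verify the profile $\bigl(\hat{T}\left(c\right),S\bigr)$, or state the result in the limit $c\rightarrow0$, where $\hat{T}\left(c\right)\rightarrow T^{PS}$. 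So the missing piece in your proposal is not a defect of your route relative to the paper's: your argument reproduces the paper's in flow form and correctly isolates the step that the paper's viscosity verification also leaves open.
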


This proposition says that when the second prize for $R$ is negative, for intermediate priors, there exists an equilibrium where the players acquires information for a maximum of $T^{PS}$ time.
Before time $T^{PS}$, the players stop and take an action only after receiving the revealing signal. 
The maximum time $T^{PS}$ is independent of the prior.

The existence of such pure strategy equilibrium is the main difference between the cases with the positive and negative second prize in state $H$.
The intuition is that when the second $R$ prize in state $H$ is negative, the longer the player acquires information, the less attractive $R$ becomes.
When the player becomes sufficiently certain that the state is $H$, she is also convinced that $R$ has been taken.
At the end of the information acquisition stage, being sufficiently optimistic is associated with taking $S$.
However, in the positive second prize case, being sufficiently optimistic is associated with taking $R$, which generates the preemption motive and hence the equilibrium in the random stopping strategy.
In equilibrium, the maximum duration $T^{PS}$ is independent of the prior.
Given that the opponent takes $R$ after receiving the $H$-state revealing signal, $T^{PS}$ is the time at which the expected payoff from taking $R$ \textit{conditional} on state $H$ decreases to zero.
Since $T^{PS}$ is pinned down by the payoff in state $H$, it is independent of the prior.

Next I discuss the effect of a more intense competition while \Cref{ass:RactionisbetterinstateH} holds.
In this case, when the difference between the first and second prize from taking $R$ increases, the equilibrium where both players acquire information exists for a smaller range of priors. 
This is intuitive because the endogenous cost associated with acquiring information increases as the game becomes more competitive.
Then, the equilibrium where the players acquire information is harder to be sustained for the lower priors.

The discussion above is about the case when competition becomes more intense.
The other limiting case is when there is no competition.
That is, when the payoff difference between the first and second $R$ taker is zero.
The payoff from taking $R$ does not depend on the opponent's action.
Then, there is no strategic interaction between the two players.
The equilibrium in this limiting case is such that both players use the single DM optimal strategy.

\section{Extensions}

\subsection{More than two players}

In this section, I generalise the original model to $N>2$ players.
I assume that the first player to take $R$ gets the first prize and all other $R$ takers get the second prize.
I use this generalisation to show that as the number of players increases, the range of priors where the learning equilibrium can exist shrinks.

Increasing the number of players intensifies the competition in the game.
In \Cref{subsec:competition}, I discussed the competition in terms of the payoff difference between the first and second prize associated with $R$.
This section further discusses the effect of competition in terms of number of players in the game.
I show that the existence of the learning equilibrium requires the competition to be not too intense.
The payoff difference between the first and the second prize must decrease as fast as $\frac{1}{N-1}$ to guarantee the existence of the learning equilibrium.

Since I assume that the first $R$ taker gets the first prize and all other $R$ takers get the second prize, the player only cares about whether the first $R$ prize has been taken or not.
Because of this, from the player's point of view, the remaining $N-1$ players essentially act as one big opponent.
Let
$
\frac{\tilde{p}_{N}}{1-\tilde{p}_{N}}
:=
\frac
{-bu_{L}-c}
{\left(N-1\right)\bar{\triangle}_{H}a+c}
$
be a prior cutoff.
Since $u_{L}<0$, when $c$ is sufficiently small, $\tilde{p}_{N}$ is positive.
The following proposition characterises a necessary condition for the existence of the learning equilibrium where the players use the random stopping strategy.

\begin{proposition}
\label{prop:N-playerequilibrium}
For $N>2$, there exists an equilibrium where the players use the random stopping strategy only if 
$
p_{0}
<
\tilde{p}_{N}.
$
The cutoff $\tilde{p}_{N}$ decreases in $N$.
When $N\rightarrow \infty$, $\tilde{p}_{N} \rightarrow 0$.
\end{proposition}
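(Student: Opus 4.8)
The plan is to derive the bound from the player's incentive constraint at time $0$, which for relatively optimistic priors is the binding one. In any symmetric equilibrium in which all players use the Random Stopping Strategy, each player acquires information with probability one at $t=0$: by \Cref{def:mixedstrategy} and the definition of that strategy, $\rho^{RS}(0)=0$ and $\sigma^{RS}\equiv 0$, so the player puts no atom on either action at time $0$ and either strictly prefers acquiring information to taking $R$ immediately (if the randomization stage starts at some $\hat{T}>0$) or is indifferent while randomizing at a positive instantaneous rate (if it starts at $t=0$). First I would write this preference in the one-shot marginal form of (\ref{eq:marginalcostequalsmarginalbenefit}) — marginal benefit of acquiring information for an instant at $t=0$ at least as large as the marginal cost — and show that it forces $p_{0}<\tilde{p}_{N}$.

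The key computation is the preemption hazard the lone player faces against the other $N-1$, each using the Random Stopping Strategy. Because only the first $R$ prize matters, she tracks only the event that some opponent has taken $R$, and becoming non-first costs $\bar{\triangle}_{H}$ in state $H$ and $\bar{\triangle}_{L}$ in state $L$ regardless of how many opponents preempt her. Using the identity $1-F_{H}(t)=e^{-at}(1-\rho(t))$ from the main text and the conditional independence of signals across players, the probability that none of the $N-1$ opponents has taken $R$ by time $t$ in state $H$ is $e^{-(N-1)at}(1-\rho(t))^{N-1}$; in state $L$, since $\rho^{RS}=0$ before the randomization stage, no opponent takes $R$, so this probability is $1$ near $t=0$ and the aggregate probability of having been preempted is $0$ there. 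Hence at $t=0$ the aggregate rate of being preempted is $(N-1)a$ in state $H$ (plus a non-negative term from the opponents' instantaneous stopping rate, in the case $\hat{T}=0$) and $0$ in state $L$. Substituting these hazards, together with $F_{L}(0)=0$, into (\ref{eq:marginalcostequalsmarginalbenefit}) (or its strict version that holds on the acquisition stage) reduces the time-$0$ constraint to
\[
(1-p_{0})\,b\,(-u_{L})\ \geq\ c+p_{0}(N-1)a\bar{\triangle}_{H},
\]
with the inequality in fact strict. Rearranging gives $p_{0}<\frac{-bu_{L}-c}{(N-1)a\bar{\triangle}_{H}-bu_{L}}$, which is exactly $\tilde{p}_{N}$ after clearing the odds ratio $\frac{\tilde{p}_{N}}{1-\tilde{p}_{N}}=\frac{-bu_{L}-c}{(N-1)\bar{\triangle}_{H}a+c}$.

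The remaining assertions follow from this closed form: with $\tilde{p}_{N}=\frac{-bu_{L}-c}{(N-1)a\bar{\triangle}_{H}-bu_{L}}$, for $c$ small the numerator is a fixed positive constant while the denominator is strictly increasing in $N$ (as $a\bar{\triangle}_{H}>0$), so $\tilde{p}_{N}$ is strictly decreasing in $N$; and since the denominator diverges while the numerator stays bounded, $\tilde{p}_{N}\to 0$ as $N\to\infty$.

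The delicate part is not the algebra but justifying that the time-$0$ marginal comparison is the correct necessary condition. As in the two-player analysis — the analogues of \Cref{lemma:randomisedstoppingequilibriumcondition} and \Cref{lemma:RSTequilibriumexistence} — one must argue that for a would-be Random Stopping equilibrium at an optimistic prior the binding deviation is to stop and take $R$ at (or just after) $t=0$, that the incentive to acquire information genuinely takes the one-shot form of (\ref{eq:marginalcostequalsmarginalbenefit}) there, and that the strict preference on the acquisition stage (respectively a strictly positive instantaneous stopping rate when the randomization stage begins at $t=0$) upgrades the weak inequality to a strict one. A minor point to verify along the way is that the opponents' Random Stopping Strategies carry no atom at $t=0$, so the player wins the first prize for certain if she takes $R$ at time $0$ — precisely what the evaluation of the marginal cost and benefit at $t=0$ uses.
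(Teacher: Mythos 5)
Your proposal is correct and follows essentially the paper's own route: the paper derives the indifference ODE for $\rho$ against the $N-1$ opponents (via $Q_{\omega}(t)=(1-F_{\omega}(t))^{N-1}$) and requires its numerator to be nonnegative, which evaluated at $t=0$ is exactly your marginal comparison $(1-p_{0})b(-u_{L})\geq c+p_{0}(N-1)a\bar{\triangle}_{H}$, i.e. $p_{0}\leq\tilde{p}_{N}$, with monotonicity in $N$ and the limit read off the closed form just as you do. The "delicate part" you flag is treated no more elaborately in the paper, so there is no gap relative to its standard of rigor.
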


The learning equilibrium can exist only if the prior is sufficiently low and this prior cutoff decreases in the number of players.
When the number of players goes to infinity, the cutoff $\tilde{p}_{N}$ approaches zero and the potential learning region vanishes.

\begin{corollary}
When $N\rightarrow \infty$, 
if $\bar{\triangle}_{N}$ decreases as fast as $\frac{1}{N-1}$, then, $\tilde{p}_{N}$ is finite.
\end{corollary}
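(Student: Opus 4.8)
The plan is to reduce the corollary to an elementary limit of the odds-ratio expression that defines $\tilde{p}_{N}$. In the $N$-player model the cutoff satisfies
$$
\frac{\tilde{p}_{N}}{1-\tilde{p}_{N}}
=
\frac{-bu_{L}-c}{(N-1)\,\bar{\triangle}_{N}\,a+c},
$$
where I write $\bar{\triangle}_{N}$ to make the per-opponent first-mover loss depend on $N$. I would read the hypothesis ``$\bar{\triangle}_{N}$ decreases as fast as $\frac{1}{N-1}$'' as: there is a constant $K>0$ with $(N-1)\bar{\triangle}_{N}\to K$ (equivalently $\bar{\triangle}_{N}=\Theta(1/(N-1))$, which only changes the argument by replacing a limit with two-sided bounds), and ``$\tilde{p}_{N}$ is finite'' as ``$\tilde{p}_{N}$ is bounded away from $0$'', i.e.\ the candidate learning region does not collapse as $N$ grows --- in contrast with \Cref{prop:N-playerequilibrium}, where $\bar{\triangle}_{N}\equiv\bar{\triangle}_{H}$ is held fixed.

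First I would record positivity: as in the paragraph preceding \Cref{prop:N-playerequilibrium}, choosing $c$ small enough makes $-bu_{L}-c>0$ (using $u_{L}<0$), while the denominator $(N-1)\bar{\triangle}_{N}a+c$ is positive for every $N$ since $c>0$; hence $\tilde{p}_{N}\in(0,1)$ for all $N$. Next, since $(N-1)\bar{\triangle}_{N}a+c\to Ka+c$ and this limit is strictly positive and finite, the quotient above converges to $\frac{-bu_{L}-c}{Ka+c}\in(0,\infty)$. Finally, applying the continuous, strictly increasing bijection $x\mapsto \frac{x}{1+x}$ from $[0,\infty)$ onto $[0,1)$,
$$
\lim_{N\to\infty}\tilde{p}_{N}
=
\frac{-bu_{L}-c}{(Ka+c)+(-bu_{L}-c)}
=
\frac{-bu_{L}-c}{Ka-bu_{L}}
\in(0,1),
$$
so $\tilde{p}_{N}$ is bounded below by a positive constant for all large $N$, which is the claim. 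If one only assumes $\underline{K}/(N-1)\le\bar{\triangle}_{N}\le\bar{K}/(N-1)$ eventually, the same two steps with $\underline{K}$ and $\bar{K}$ in place of $K$ sandwich $\tilde{p}_{N}$ between the two positive constants $\frac{-bu_{L}-c}{\bar{K}a-bu_{L}}$ and $\frac{-bu_{L}-c}{\underline{K}a-bu_{L}}$.

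There is no genuine obstacle here: the argument is a one-line limit once the informal statement is pinned down, and the only care needed is in fixing the meaning of ``decreases as fast as'' and of ``finite.'' The point worth emphasising is the contrast with \Cref{prop:N-playerequilibrium} itself: with $\bar{\triangle}_{N}$ held fixed, $(N-1)\bar{\triangle}_{N}a\to\infty$ drives the odds ratio, and hence $\tilde{p}_{N}$, to $0$; it is exactly the scaling $\bar{\triangle}_{N}\sim K/(N-1)$ --- which keeps the aggregate first-mover loss $(N-1)\bar{\triangle}_{N}$ bounded --- that prevents the learning region from degenerating.
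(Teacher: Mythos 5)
Your proposal is correct and follows essentially the same route the paper intends: the corollary is an immediate consequence of the formula $\frac{\tilde{p}_{N}}{1-\tilde{p}_{N}}=\frac{-bu_{L}-c}{(N-1)\bar{\triangle}_{N}a+c}$ derived in the proof of \Cref{prop:N-playerequilibrium}, with the scaling $(N-1)\bar{\triangle}_{N}$ bounded keeping the odds ratio (and hence the cutoff) bounded away from zero. Your explicit handling of the limit, the $\Theta(1/(N-1))$ sandwich, and the reading of ``finite'' as ``the learning region does not collapse'' merely make precise what the paper leaves informal.
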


When the number of players increases to infinity, if the prize difference $\bar{\triangle}_{H}$ decreases as fast as $\frac{1}{N-1}$,
then, the potential learning region still exists.
The intuition is that the learning equilibrium can exist only if the competition is not too intense.
If there are a lot of players competing, then, the payoff difference cannot be too big.

\subsection{Observable actions}
\label{sub:observableactions}

The irreversible risky and safe actions are assumed to be private in this paper.
This allows me to focus on the role of payoff externalities.
In this extension, I assume that after a player takes an irreversible action, it is immediately observed by the opponent.
The public actions generate information externalities as observing no action taken is itself informative.
The purpose of this extension is to show that the existence of the learning equilibrium found in \Cref{thm:mainresult} is robust to some exposure to information externalities.
To be more specific, I define \textit{the mimicking and random stopping strategy (MRSS)} and show that there exists a symmetric equilibrium where the players use this strategy.

\begin{definition}
The mimicking and random stopping strategy (MRSS) is a strategy such that
\begin{enumerate}
    \item After receiving the $H$-state revealing signal, the player stops and takes $R$ immediately;
    \item After receiving the $L$-state revealing signal, the player stops and takes $S$ immediately;
    \item After receiving no revealing signal and observing no action taken, the player stops and takes $R$ at each time $t$ with a rate $h\left(t\right)>0;$
    \item After receiving no revealing signal and observing the opponent taking $S$, the player stops and takes $S$;
    \item After receiving no revealing signal and observing the opponent taking $R$, the player uses the single DM optimal strategy.
\end{enumerate}
\end{definition}

The MRSS and the random stopping strategy are similar in the sense that after observing no private revealing signal, the player randomly stops and takes $R$. 
The conditions for the existence of the symmetric equilibrium where the players use MRSS is presented in the following proposition.

\begin{proposition}
\label{prop:observableactionresult}
Suppose
$
\bar{\triangle}_{H}
=
\bar{\triangle}_{L}
$
and $c$ sufficiently small such that 
$
p^{L}<\tilde{p}.
$
If
$p^{L}<p_{0}<\tilde{p}$,
then,
there exists an equilibrium where the players use MRSS.
\end{proposition}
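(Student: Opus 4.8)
The plan is to construct the hazard rate $h(\cdot)$ from an indifference condition and then verify that every branch of MRSS is a best reply to it. Suppose both players play MRSS with a common rate $h(\cdot)$ and write $H(t):=\int_{0}^{t}h(s)\,ds$. First I would compute the belief $q_{t}$ of a player who, up to time $t$, has received no private revealing signal and has observed no action by her opponent. Conditional on such a history the opponent has neither received a revealing signal nor randomly stopped, an event of probability $e^{-at}e^{-H(t)}$ in state $H$ and $e^{-bt}e^{-H(t)}$ in state $L$; multiplying by the likelihoods $e^{-at},e^{-bt}$ of the player's own missing signal and cancelling $e^{-H(t)}$ gives
\[
\frac{q_{t}}{1-q_{t}}=\frac{p_{0}}{1-p_{0}}\,e^{2(b-a)t},
\]
so the belief drifts up at twice the single-DM rate and is independent of $h$. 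Since $p_{0}>p^{L}$, we have $q_{t}>p^{L}$ for all $t$, hence the first-mover payoff $V(t):=q_{t}u_{H}+(1-q_{t})u_{L}$ is strictly positive and strictly increasing.

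Next I would pin down $h$ by value-matching. Because the player randomises, her value on the path equals her stop-now payoff $V(t)$. Over $[t,t+dt]$ she pays $c\,dt$; with probability $q_{t}a\,dt$ a private $H$-signal arrives and she takes $R$ first (payoff $u_{H}$); with probability $(1-q_{t})b\,dt$ an $L$-signal arrives and she takes $S$; with probability $\big(q_{t}a+h(t)\big)dt$ the opponent takes $R$, which she observes, updating to $q_{t}^{R}$ with $\tfrac{q_{t}^{R}}{1-q_{t}^{R}}=\tfrac{q_{t}}{1-q_{t}}\cdot\tfrac{a+h(t)}{h(t)}$ and continuing optimally alone for a value $W\!\big(q_{t}^{R}\big)$, where $W$ is the value function of the single-DM problem of \Cref{prop:singledmresult} with risky payoffs $u_{H}-\bar{\triangle}_{H},u_{L}-\bar{\triangle}_{H}$ (the hypothesis $\bar{\triangle}_{H}=\bar{\triangle}_{L}$ is exactly what makes the second-mover continuation this problem); with probability $(1-q_{t})b\,dt$ the opponent takes $S$, revealing state $L$, so she takes $S$; otherwise she is at $(t+dt,q_{t+dt})$. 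Using $\dot V(t)=2(b-a)(u_{H}-u_{L})q_{t}(1-q_{t})$ and cancelling, value-matching collapses to the scalar equation
\[
0=-c+q_{t}a\,u_{H}+\big(q_{t}a+h(t)\big)W\!\big(q_{t}^{R}(h(t))\big)+\dot V(t)-\big(2q_{t}a+2(1-q_{t})b+h(t)\big)V(t),
\]
which, since $q_{t}$ is explicit, determines $h(t)$ pointwise; note this \emph{decouples} (no $\dot h$, no cumulative term), in contrast with the ODE of the private-action model in \eqref{eq:marginalcostequalsmarginalbenefit}.

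The technical core is to show this equation has a solution $h(t)>0$ on the relevant range. I would rewrite it as $A(h)+h\big[W(q_{t}^{R}(h))-V(t)\big]=0$, where $A(h)$ stays bounded: as $h\downarrow0$ the opponent's $R$ is arbitrarily informative, so $q_{t}^{R}\uparrow1$ and $W(q_{t}^{R})\uparrow u_{H}-\bar{\triangle}_{H}$ while the linear term vanishes, and as $h\uparrow\infty$, $q_{t}^{R}\downarrow q_{t}$ and the linear term dominates. Using the monotonicity of $W$ in the belief and its boundary values from \Cref{prop:singledmresult}, together with the sign of $A(0)$ — this is where $p_{0}<\tilde p$ and $c$ small are used — an intermediate-value and monotonicity argument yields a positive root, with $q_{t}^{R}$ above the second-mover single-DM cutoff so that $W(q_{t}^{R})$ equals the explicit immediate second-mover $R$ value. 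I would also argue the randomisation runs over all of $[0,\infty)$ with $\int_{0}^{\infty}h=\infty$ and no terminal atom: a mass point in the opponent's random stopping is uninformative about the state, so pre-empting just before it is strictly profitable, exactly as in \Cref{lemma:rhocontinuous}.

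It then remains to verify the other branches. On path, $V(t)>0$ makes $S$ strictly dominated and the player content to randomise, taking $R$ is pinned to $V(t)$ by the construction of $h$, and a standard verification argument confirms $V$ is the optimal continuation value. After an $H$-signal, immediate $R$ dominates (the state is known, so acquisition is worthless and delay only invites pre-emption); after an $L$-signal, $S$ dominates since $u_{L}<0$; after observing the opponent take $S$ the state is $L$, so $S$ is optimal; after observing the opponent take $R$ the player is alone facing the second-mover Wald problem, so the single-DM optimal rule is optimal by definition. Symmetry then makes the profile a symmetric perfect Bayesian equilibrium. The main obstacle is the previous step — proving the scalar indifference condition admits a positive, well-behaved solution on the whole support and ruling out a terminal atom — which needs quantitative control of $W$ and is precisely where $\bar{\triangle}_{H}=\bar{\triangle}_{L}$, $p_{0}<\tilde p$, and $c$ small do their work; the verifications just listed are comparatively routine.
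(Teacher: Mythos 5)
Your plan follows the same skeleton as the paper's argument: the doubled belief drift $\tfrac{q_t}{1-q_t}=\tfrac{p_0}{1-p_0}e^{2(b-a)t}$ after ``no signal, no action observed,'' the branch-by-branch verification (immediate action after a revealing signal, $S$ after observing the opponent's $S$, the single-DM policy after observing the opponent's $R$), and the determination of $h$ from an on-path indifference condition; those verifications are fine and match the paper. Where you diverge is inside the indifference condition itself: you use the second-mover single-DM value $W(q_t^{R}(h))$ as the continuation after observing the opponent's $R$, and a zero continuation after observing the opponent's $S$, whereas the paper's derivation substitutes the \emph{immediate} second-mover payoff $p_t(u_H-\bar{\triangle}_{H})+(1-p_t)(u_L-\bar{\triangle}_{L})$ after the opponent's $R$ and does not credit the take-$S$ response to an observed $S$. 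The paper's substitution is what makes its indifference condition affine in $h$, so $h(t)$ is obtained in closed form and its positivity is exactly the condition $\tfrac{p_t}{1-p_t}<\tfrac{b(-u_L)-c}{a\bar{\triangle}_{H}+c}$, i.e.\ the $\tilde{p}$ threshold under $\bar{\triangle}_{H}=\bar{\triangle}_{L}$. Your (more MRSS-faithful) formulation instead makes $h$ the solution of an implicit scalar equation at every $t$, and that existence step --- which is the entire content of the proposition --- is exactly what you leave unproved.

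Concretely, the intermediate-value sketch does not go through as stated. First, the claim that ``as $h\uparrow\infty$ the linear term dominates'' (negatively) needs $W(q_t)<V(t)$; but for small $c$ the second-mover Wald value is approximately $q(u_H-\bar{\triangle}_{H})$, which exceeds $V(q)=qu_H+(1-q)u_L$ whenever $\tfrac{q}{1-q}<\tfrac{-u_L}{\bar{\triangle}_{H}}$ --- a nonempty sub-interval of $(p^{L},\tilde{p})$ since $u_H>\bar{\triangle}_{H}$ and $b>a$ --- so both endpoints of your IVT can be positive and no positive root is guaranteed there. Second, you never connect your equation to $\tilde{p}$: evaluating it at $h\downarrow 0$ (where $W(q_t^{R})\to u_H-\bar{\triangle}_{H}$) gives the sign condition $\tfrac{q_t}{1-q_t}(a\bar{\triangle}_{H}+c)+c<2b(-u_L)$, the factor $2$ coming precisely from the extra observable breakdown channel (the opponent's $S$) that your recursion credits and the paper's does not; so ``this is where $p_0<\tilde p$ and $c$ small are used'' is asserted rather than shown, and the cutoff relevant to your construction is not obviously $\tilde{p}$. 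Third, you assert the randomisation runs over all of $[0,\infty)$ with $\int_0^\infty h=\infty$, but since $q_t\uparrow 1$ the stop-now payoff eventually strictly dominates continuation for every $h\ge 0$, so positivity of $h(t)$ at all dates needs an explicit argument (the paper's closed form makes the analogous condition visible as a restriction on $p_t$, which you would also have to confront). Until the existence, positivity and domain of $h$ are actually established under the stated hypotheses, the proposal is an outline of the right construction rather than a proof.
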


\Cref{prop:observableactionresult} suggests that when there are information externalities,
there still exists the equilibrium where the players randomly stop and take $R$ after no revealing signal.
The reason is twofold. 
First, the observability of the actions does not eliminate the preemption motive.
After the history of no action taken, if the player's strategy prescribes taking $R$ with positive probability mass, 
then, the opponent has an incentive to preempt.
Second, random stopping reduces the information involved in the player's action.
The action taken (or no action taken) contains the player's private information.
It is essentially an additional signal and the informativeness of it depends on the player's strategy.
For example, if the player only stops acquiring information after receiving a revealing signal,
then, her action perfectly reveals her private information.
In this case, this additional signal is very informative for the opponent.
However, if the player stops acquiring information and takes an action randomly, then, her action contains less private information.
Information externalities enhance the player's incentive to stop randomly.

\section{Conclusion}

In this paper,
I study a model in which the players can acquire costly private information before taking an irreversible private action.
Acquiring information takes time which allows the player to take the `correct' action but increases the probability of being preempted.
With the assumption that there is a first-mover advantage associated with the risky action, 
I find that in the equilibrium where the players acquire information, they become more optimistic that the state is high but at the same time, the conditional expected payoff from taking the risky action decreases.
Because of the interaction between the learning motive and the preemption motive, the players randomly stop and take the risky action.
This result is significantly different from the single decision maker case where the optimal strategy is deterministic.

In my model, depending on the prior, the players' decisions on acquiring information can be both strategic substitutes or strategic complements.
The strategic substitutability prevails when players have pessimistic priors and it induces players' initial mix between acquiring information and immediate exit.
The strategic complementarity prevails when players are relatively optimistic and it gives rise to the multiple equilibria where players use the random stopping strategy.
In addition, I find that the equilibrium where the players do not acquire information can exist for not only extreme priors but also a large range of intermediate priors.
This stems from the presence of the endogenous information cost.
Such equilibrium exists for arbitrarily small exogenous information cost.

\bibliography{references}
\bibliographystyle{agsm}

\appendix

\section{Formulation of the player's problem}
\label{sec:playersproblem}

This section formulates one player's problem given opponent's strategies.
Given player $j$'s strategy $\gamma^{j}$, player $i$'s best-reply problem is to choose a time $T^{i}$ with value function 
$
\hat{V}^{i,\gamma^{j}}:\mathbb{R}_{+}\rightarrow \mathbb{R}
$ 
such that
\begin{align}
\hat{V}^{i,\gamma^{j}}\left(0\right)
&:=
\max_{T^{i}}\intop_{0}^{T^{i}}\pi_{t}\left[p_{t}^{i}a\mathbb{E}^{\gamma^{j}}\left[u_{H}^{R}\mid H,t\right]-c\right]dt+\pi_{T^{i}}U^{i,\gamma^{j}}\left(T^{i}\right)
\\
&\text{s.t. }\frac{dp_{t}^{i}}{dt}=(b-a)p_{t}^{i}(1-p_{t}^{i}),
\nonumber
\end{align}
where 
$
\mathbb{E}^{\gamma^{j}}\left[u_{H}^{R}\mid H,t\right]
$
and 
$
U^{i,\gamma^{j}}\left(T^{i}\right)
$
are as defined in \Cref{sub:strategiesandequilibrium}.
The Hamilton-Jacobi-Bellman (HJB) equation for player $i$'s problem is the following differential equation in 
$
V^{i,\gamma^{j}}:\mathbb{R}_{+}\rightarrow \mathbb{R},
$
where
\begin{align}
\label{eq:hjbequation}
\max
\left\{
\overbrace{
p_{t}^{i}a\left[\mathbb{E}^{\gamma^{j}}\left[u_{H}^{R}\mid H,t\right]
-
V^{i,\gamma^{j}}\left(t\right)\right]
+
\left(1-p_{t}^{i}\right)b\left[-V^{i,\gamma^{j}}\left(t\right)\right]
}
^{\text{A}}
+
\overbrace{
\frac{dV^{i,\gamma^{j}}\left(t\right)}{dt}
}
^
{\text{B}}
-c,\right.
\nonumber
\\
\left.
U^{i,\gamma^{j}}\left(t\right)
-
V^{i,\gamma^{j}}\left(t\right)
\right\}  
& =0
.
\end{align}
The interpretation of (\ref{eq:hjbequation}) is that at time $t$, player $i$ chooses between to continue acquiring information or stopping. 
She acquires information if the marginal gain is greater than the marginal cost.
Otherwise, she stops and gets the payoff 
$
U^{i,\gamma^{j}}\left(t\right)
$.
The marginal gain consists of the expected gain from receiving the revealing signal (labelled as A in (\ref{eq:hjbequation})) plus the rate of change of the value (labelled as B in (\ref{eq:hjbequation})).

Given opponent's strategy 
$
\gamma^{j}
$, 
if the player's problem is well-behaved, then, the value function 
$
V^{i,\gamma^{j}}\left(t\right)
$
is a classical solution to the HJB equation (\ref{eq:hjbequation}).
The player's best response can then be characterised correspondingly.
However, in our problem, (\ref{eq:hjbequation}) is not well-behaved because given the opponent's strategy 
$
\gamma^{j}
$,
$
U^{i,\gamma^{j}}\left(t\right)
$
has a kink and may not be continuous.

%This section shows that player $i$'s best response to player $j$'s mixed strategy consists of random stopping.
%This suggests the use of mixed strategies in equilibrium.
%Next section introduces the strategies that appear in equilibrium.

%When prior is in the intermediary range, player $i$'s best response is to acquire information at time $0$.
%When the belief drifts up, player $i$ acquires information at each time $t$ if the marginal gain is greater than the marginal cost.
%Given player $j$'s mixed strategy, the marginal gain and the marginal cost can be the same for a period of time.
%When this happens, player $i$ randomises between immediate stopping and acquiring information for $dt$ longer.
\section{Proof of \Cref{prop:singledmresult}}
\label{appen:proofofsingleDMresult}

When player $i$ is a single DM, her problem is 
\begin{align*}
\hat{V} \left(p_{0}\right)
:=
\max_{T }\intop_{0}^{T }\pi_{t}\left[p_{t} a u_{H} +\left(1-p_{t} \right)bu^{S}-c\right]dt+\pi_{T }U \left(T \right)
.
\end{align*}
The HJB equation is
\begin{align*}
\max\left\{ p_{t} a\left[u_{H} - V \left(t\right)\right]+\left(1-p_{t} \right)b\left[u^{S}-V \left(t\right)\right]+\frac{dV \left(t\right)}{dt}-c,\right.
\nonumber
\\
\left.U \left(t\right)-V \left(t\right)\right\} =0
\end{align*}
Since the argument $t$ only enters the equation via $p_{t}$, I use $p$ instead of $t$ as the state variable.
Then HJB equation becomes
\begin{align}
\label{eq:singleDMHJB}
\max\left\{ p a\left[u_{H}-V \left(p \right)\right]+\left(1-p \right)b\left[u^{S}-V \left(p \right)\right]+\frac{dV \left(p \right)}{dp }\frac{dp }{dt}-c,\right.
\nonumber
\\
\left.U \left(p \right)-V \left(p \right)\right\} =0
\end{align}
To find player $i$'s value function, I construct a candidate value function and show it is a viscosity solution of (\ref{eq:singleDMHJB}).

If the learning region (the range of beliefs at which the DM acquires the signal) exists, the value function is a solution of the ordinary differential equation
\begin{align}
\label{eq:odehjb}
pa\left[u_{H}-V\left(p\right)\right]+\left(1-p\right)b\left[u^{S}-V\left(p\right)\right]+\frac{dV\left(p\right)}{dp}\left(b-a\right)p\left(1-p\right)
=
c
.
\end{align}
The free boundary solution to (\ref{eq:odehjb}) is 
\begin{align*}
V^{L}\left(p\right)
=
p\left[u_{H}-\frac{c}{a}\right]+\left(1-p\right)\left[u_{L}-\frac{c}{b}\right]+K\left(\frac{p}{1-p}\right)^{\frac{b}{b-a}}\left(1-p\right)
\end{align*}
where $K$ is a constant.
Suppose the learning region is $\left(\underline{p},\bar{p}\right)$.
Value matching and smooth pasting pin down the value of $\bar{p}$ and $K$.
Then, value matching pins down $\underline{p}$.
We have
\begin{align*}
\frac{\bar{p}}{1-\bar{p}}
=
\frac{u^{S}-u_{L}-\frac{c}{b}}{\frac{c}{b}}
:=
\bar{L},
\end{align*}
\begin{align*}
K
=
\frac{c}{b}\left(\frac{b}{a}-1\right)\bar{L}^{1-\frac{b}{b-a}}
\end{align*}
and
$\underline{L}:=\frac{\underline{p}}{1-\underline{p}}$ satisfies
\begin{align}
\label{equ:p_lowerbardefinition}
\left[u_{H}-u^{S}-\frac{c}{a}\right]\underline{L}+K\underline{L}^{\frac{b}{b-a}}
=
\frac{c}{b}.
\end{align}
Let 
$\hat{p}$ be a belief cutoff at which the player is indifferent between $R$ and $S$.
The existence of the learning requires $\bar{p}>\hat{p}$.
That is,
$$
c
<
b\frac{\left(-u_{L}\right)u_{H}}{u_{H}-u_{L}}
:=
\bar{c}
.
$$
Outside the learning region, the value function satisfies 
$
V\left(p\right)=U\left(p\right)
$.
When $c<\bar{c}$, the candidate value function is 
\begin{align*}
V\left(p\right)
=
\begin{cases}
p\left[u_{H}-\frac{c}{a}\right]+\left(1-p\right)\left[u_{L}-\frac{c}{b}\right]+K\left(\frac{p}{1-p}\right)^{\frac{b}{b-a}}\left(1-p\right) 
& p\in\left(\underline{p},\bar{p}\right)
\\
U\left(p\right) 
& \text{o.w.}
\end{cases}    
.
\end{align*}
This candidate has a kink at $\underline{p}$ and is differentiable everywhere else.
I next show that it is a viscosity solution of (\ref{eq:singleDMHJB}).
Let 
\begin{align*}
H\left(p,V\left(p\right),V'\left(p\right)\right)
:=
pa\left[u_{H}-V\left(p\right)\right]+\left(1-p\right)b\left[u^{S}-V\left(p\right)\right]+V'\left(p\right)\left(b-a\right)p\left(1-p\right)-c.
\end{align*}
For the points where $V\left(p\right)$ is differentiable,
I show that 
(1) if $p>\bar{p}$, then, $H\left(p,V\left(p\right),V'\left(p\right)\right)\leq0$;
(2) if $p \in\left(\underline{p},\bar{p}\right]$, then,
$V\left(p\right)\geq U\left(p\right)$;
(3) if $p<\underline{p}$, then $H\left(p,V\left(p\right),V'\left(p\right)\right)\leq0$.
At the point where $V\left(p\right)$ is not differentiable, that is, at $p=\underline{p}$,
I show that $H\left(p,V\left(p\right),z\right)\geq0$ for $z\in D^{+}$ where $D^{+}=\emptyset$ (ignore)
and 
(4) $H\left(p,V\left(p\right),z\right)\leq0$ for $z\in D^{-}$ where $D^{-}=\left[\frac{dU_{S}\left(p\right)}{dp}\mid_{p=\underline{p}},\frac{dV^{L}\left(p\right)}{dp}\mid_{p=\underline{p}}\right]$.

Step (1): 
When $p>\bar{p}$, we have $V\left(p\right)=U_{R}\left(p\right)$ and
$
H\left(p,U_{R}\left(p\right),U_{R}'\left(p\right)\right)
<0
$
if and only if $p>\bar{p}$.

Step (2):
It can be shown that $V^{L}\left(p\right)$ is convex.
At $p=\bar{p}$, we have 
$V^{L}\left(p\right)=U_{R}\left(p\right)$.
If we decrease $p$ by a little bit, $U_{R}\left(\cdot\right)$ decreases faster than $V^{L}\left(\cdot\right)$. 
Therefore, we have 
$V^{L}\left(p\right)\geq U_{R}\left(p\right)$ for $p\leq\bar{p}$. 
At 
$p=\underline{p}$, 
we have 
$V^{L}\left(\underline{p}\right)
=
U_{S}\left(\underline{p}\right)$ 
and 
$\frac{dV^{L}\left(p\right)}{dp}
>
\frac{dU_{S}\left(p\right)}{dp}$. 
As a result, we have 
$V^{L}\left(p\right)
\geq
U_{S}\left(p\right)$ 
for
$p>\underline{p}$.

Step (3):
If $p<\underline{p}$, then,
$V\left(p\right)=U_{S}\left(p\right)$.
To have $H\left(p,U_{S}\left(p\right),\frac{dU_{S}\left(p\right)}{dp}\right)\leq 0$,
we need
$p\leq \frac{\frac{c}{a}}{u_{H}-u^{S}}$.
It can be shown that $\underline{p}<\frac{\frac{c}{a}}{u_{H}-u^{S}}$.
As a result, if $p<\underline{p}$, then
$H\left(p,V\left(p\right),V'\left(p\right)\right)\leq 0$.

Step (4):
Since 
$H\left(p,V\left(p\right),z\right)$ is increasing in $z$ and we have $H\left(\underline{p},V\left(\underline{p}\right),\frac{dV^{L}\left(p\right)}{dp}\mid_{p=\underline{p}}\right)=0$, 
it is true that 
$H\left(p,V\left(p\right),z\right)\leq 0$
for
$
\forall z\in
\left[\frac{dU_{S}\left(p\right)}{dp}\mid_{p=\underline{p}},\frac{dV^{L}\left(p\right)}{dp}\mid_{p=\underline{p}}\right]
$.

To conclude,
if $c<\bar{c}$,
the DM's optimal strategy is to acquire the signal if the belief if in the range 
$\left(\underline{p},\bar{p}\right)$, 
to take action $R$ if $p\geq\bar{p}$ 
and to take action $S$ if $p\leq \underline{p}$.
If $c\geq\bar{c}$,
the DM's optimal strategy is to take an action without acquiring the signal.
\section{Proof of \Cref{lemma:rhocontinuous}}

Consider a strategy 
$
\gamma=
\left(
\rho, \sigma
\right)
$
as defined in \Cref{def:mixedstrategy}.
Suppose $\rho$ is discontinuous at some $\tau>0$ and continuous everywhere else.
Since by definition $\rho$ is right-continuous and weakly increasing, this implies that
$
\rho\left(\tau\right)>\lim_{t\rightarrow \tau_{-}}\rho\left(t\right).
$
%Suppose this mass that $\rho$ places on $\tau$ is 
%$
%M.
%$
Then, the induced probability that the player takes $R$ before or at time $t$ in state $\omega$, 
$
F_{\omega}^{\gamma}\left(t\right),
$ 
\footnote{The superscript indicates the strategy that this function is induced from.}
is right continuous such that
$
F_{\omega}^{\gamma}\left(\tau\right)
>
\lim_{t\rightarrow \tau_{-}}F_{\omega}^{\gamma}\left(t\right)
$
for 
$
\forall
\omega\in \{H,L\}
$.
Let 
$
M_{\omega}
$
denote the mass that 
$
F_{\omega}^{\gamma}\left(\tau\right)
$
places on $\tau$.
Consider a deviation 
$
\gamma^{D}=
\left(
\rho^{D}, \sigma
\right)
$ 
such that it is identical to $\gamma$ except that the mass that $\rho$ places on $\tau$ is shifted to $\tau-\triangle$.
There exists a $\triangle>0$ such that given the opponent uses the $\gamma$ strategy, using $\gamma^{D}$ gives the player a higher payoff than $\gamma$.

Suppose the opponent uses the $\gamma$ strategy described above.
Consider the history that the player receives no revealing signal until time $\tau-\triangle$ where $\triangle>0$.
The total gain from acquiring information for $\triangle$ time longer is
\begin{align*}
p_{\tau-\triangle}a\triangle 
\left[
u_{H}-F_{H}^{\gamma}\left(\tau\right)\bar{\triangle}_{H}
\right]
+
\left[
1-p_{\tau-\triangle}a\triangle-\left(1-p_{\tau-\triangle}\right)b\triangle \right]
U_{R}\left(\tau\right)
-
U_{R}\left(\tau-\triangle \right)
\end{align*}
where 
\begin{align*}
U_{R} \left(t\right)
=
\begin{cases}
p_{t}
\left[
u_{H}-F_{H}^{\gamma}\left(t\right)\bar{\triangle}_{H}
\right]
+
\left(1-p_{t}\right)
\left[
u_{L}-F_{L}^{\gamma}\left(t\right)\bar{\triangle}_{L}
\right]
&
t<\tau
\\
p_{t}
\left[
u_{H}-F_{H}^{\gamma}\left(t\right)\bar{\triangle}_{H}-\left(1-F_{H}^{\gamma}\left(t\right)\right)\underline{\triangle}_{H}
\right]
+
\left(1-p_{t}\right)
\left[
u_{L}-F_{L}^{\gamma}\left(t\right)\bar{\triangle}_{L}-\left(1-F_{L}^{\gamma}\left(t\right)\right)\underline{\triangle}_{L}
\right]
& 
t=\tau
\\
\end{cases}
.
\end{align*}
When $\triangle>0$ approaches zero, this gain approaches
$$
p_{\tau}
\left[
-M_{H}\bar{\triangle}_{H}-\left(1-F_{H}^{\gamma}\left(\tau\right)\right)\underline{\triangle}_{H}
\right]
+
\left(1-p_{\tau}\right)
\left[
-M_{L}\bar{\triangle}_{L}-\left(1-F_{L}^{\gamma}\left(\tau\right)\right)\underline{\triangle}_{L}
\right]
,
$$
which is negative due to the mass point $M_{\omega}$ and the tie-breaking at time $\tau$.
Therefore, there is always a deviation to put the mass places on $\tau$ to $\tau-\triangle$.
\section{Proof of \Cref{thm:mainresult}}
\label{appen:proofofmainresult}

Let 
$
\frac{p^{M}}{1-p^{M}}
:=
\frac{-u_{L}+\underline{\triangle}_{L}}{u_{H}-\underline{\triangle}_{H}}
$,
$
\frac{p^{L}}{1-p^{L}}
:=
\frac{-u_{L}}{u_{H}}
$
and 
$
\frac{\tilde{p}}{1-\tilde{p}}
:=\frac{-bu_{L}-c}{a\bar{\triangle}_{H}+c}
$
be three prior cutoffs.
As $u_{L}<0$, when $c<-bu_{L}$, all of the three cutoffs are positive.
The cutoff 
$
\underline{p}
$
is defined in \Cref{prop:singledmresult} and 
$
p^{*}
$
is a fixed point that will be defined later in step 5.

The method to find the symmetric equilibrium is `guess and verify'.
The following outlines the steps of the proof.
\begin{enumerate}
  \item I show that there exists an equilibrium where both players use the pure strategy 
  $
  \left(0,S\right)
  $
  when 
  $
  p_{0}\leq \underline{p}
  $.
  \item I show that there exists an equilibrium where both players use the pure strategy 
  $
  \left(0,R\right)
  $ 
  when 
  $
  p_{0}\geq p^{M}
  $.
  \item I show that there exists an equilibrium where both players use the Immediate Mix with No Learning Strategy when $p^{L}<p_{0}<p^{M}$.
  \item I show that there exists an equilibrium where both players use a Randomised Stopping Time Strategy when
  $
  p^{L} < p_{0} < \tilde{p}.
  $
  \item I show that there exists an equilibrium where both players use a Randomised Stopping Time Strategy
  when 
  $
  p^{*}<p_{0}<p^{L}.
  $
  \item I show that there exists an equilibrium where both players use the Mixed Learning Strategy when 
  $
  \underline{p}<p_{0}<p^{*}
  $.
\end{enumerate}

\paragraph{Step 1}

When $p_{0} \leq \underline{p}$,
\Cref{prop:singledmresult} implies that a single DM takes $S$ immediately.
If a player takes $S$ immediately, the other player is the single DM in the game.
Given the player takes $S$ immediately, the opponent's best response is to take $S$.
Therefore, when $p_{0} \leq \underline{p}$, both players taking $S$ immediately is an equilibrium.
This is summarised in the following lemma.

\begin{lemma}
If $p_{0}\leq\underline{p}$, then, there exists an equilibrium where both players use the strategy $\left(0,S\right)$.
\end{lemma}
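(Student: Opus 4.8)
The plan is to reduce player $i$'s best-reply problem against an opponent who plays $(0,S)$ to the single decision maker problem of \Cref{prop:singledmresult}, and then to read off the conclusion from that proposition together with the hypothesis $p_0 \le \underline{p}$.

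First I would observe that if player $j$ takes $S$ at time $0$, then $j$ never takes $R$, so on every continuation history player $i$ is the first (indeed only) player ever to take $R$. Hence for each $t$ and each $\omega$ the conditional payoff distribution induced by $s^j$ is degenerate at $u_\omega$, so that $\mathbb{E}^{s^j}[u_H^R \mid H, t] = u_H$, $U_R^{i,s^j}(t) = p_t^i u_H + (1-p_t^i) u_L$ and $U_S^{i,s^j}(t) = 0$; none of these depends on $s^j$ beyond the fact of exit, so there is neither a residual preemption risk nor a winner's curse. Substituting into the two optimality conditions of \Cref{def:pbe} and using the normalisation that $S$ pays $0$, player $i$'s problem becomes
\begin{align*}
\max_{T^i}\ \intop_0^{T^i}\pi_t\bigl(p_t^i a\, u_H - c\bigr)\,dt + \pi_{T^i}\max\bigl\{0,\ p_{T^i}^i u_H + (1-p_{T^i}^i) u_L\bigr\},
\end{align*}
with $p_t^i$ governed by \eqref{equ:belief}. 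This is exactly the Wald problem solved in \Cref{appen:proofofsingleDMresult} (with the safe payoff $u^S$ set to $0$).

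Second, because \Cref{thm:mainresult} assumes $c$ sufficiently small --- in particular $c \le \bar c$, so that the cutoff $\underline{p}$ of \Cref{prop:singledmresult} is well defined --- and $p_0 \le \underline{p}$, \Cref{prop:singledmresult} says that in that problem it is optimal to stop at $t=0$ and take $S$. Thus $(0,S)$ is a best reply to $(0,S)$ (and, since taking $R$ at $t=0$ is one of the feasible stopping rules already compared in \Cref{prop:singledmresult}, no deviation to $(0,R)$ is profitable either). By symmetry, the profile in which both players use $(0,S)$, together with beliefs $(p_t^i)$ given by \eqref{equ:belief} --- which agree with Bayes' rule both on path (only $t=0$ is reached, with $p_0^i = p_0$) and after a unilateral deviation to acquire information --- satisfies \Cref{def:pbe}, giving the claimed equilibrium.

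I do not anticipate a real obstacle. The only points needing a line of justification are that the reduction to \Cref{prop:singledmresult} is \emph{exact}, i.e. that once player $j$ has exited the two-player problem retains no strategic coupling, and that the ``$c$ sufficiently small'' hypothesis of \Cref{thm:mainresult} is at least as strong as the condition $c\le\bar c$ of \Cref{prop:singledmresult}; both are immediate from the definitions.
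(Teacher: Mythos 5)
Your proposal is correct and follows essentially the same route as the paper: once the opponent plays $(0,S)$, player $i$ faces exactly the single-DM Wald problem of \Cref{prop:singledmresult}, which for $p_0\le\underline{p}$ (and $c$ small enough that the cutoff is defined) prescribes immediate $S$, so $(0,S)$ is a best reply and symmetry gives the equilibrium. Your write-up merely makes explicit the degeneracy of the payoff distribution and the belief consistency that the paper leaves implicit.
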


\paragraph{Step 2}

I prove the following lemma.

\begin{lemma}
If $p_{0}\geq p^{M}$, there exists an equilibrium where both players use the pure strategy $\left(0,R\right)$
\end{lemma}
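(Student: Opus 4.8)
The plan is to verify directly that the strategy profile in which both players take the risky action immediately at $t=0$ constitutes a perfect Bayesian equilibrium, using the equilibrium conditions in \Cref{def:pbe} adapted to mixed/pure strategies. Fix player $i$ and suppose player $j$ uses the pure strategy $(0,R)$. Then player $j$ takes $R$ at time $0$ with probability one in both states, so from player $i$'s perspective, if she ever takes $R$ (at time $0$ or later), she is the second mover (or, at $t=0$, a simultaneous mover). I would first compute player $i$'s payoff from each available course of action given $p_0$ and $p_t$ along the no-signal belief path: (i) take $S$ at time $0$, yielding $0$; (ii) take $R$ at time $0$, yielding the simultaneous-move payoff $p_0(u_H - \underline{\triangle}_H) + (1-p_0)(u_L - \underline{\triangle}_L)$; (iii) acquire information up to some $T>0$ and then take the optimal action, where now taking $R$ at any $t>0$ gives the second-mover payoff $p_t(u_H - \bar{\triangle}_H) + (1-p_t)(u_L - \bar{\triangle}_L)$ net of the flow cost $\int_0^T \pi_t\big(p_t^i a\,\mathbb{E}[u_H^R\mid H,t] - c\big)\,dt$ and the revealing-signal continuation values.

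The key observations to assemble are: first, by the definition of $p^M$ via $\frac{p^M}{1-p^M} = \frac{-u_L+\underline{\triangle}_L}{u_H-\underline{\triangle}_H}$, the condition $p_0 \geq p^M$ is exactly equivalent to the simultaneous-move payoff in (ii) being nonnegative, so taking immediate $R$ weakly dominates taking immediate $S$. Second, I must show that acquiring any positive amount of information is not profitable. Here the crucial point — flagged in the main text's discussion preceding \Cref{lemma:comparisonoftwoupperbounds} — is that the opponent's atom at $t=0$ induces a \emph{downward jump} in player $i$'s expected payoff from taking $R$: at $t=0^+$ the payoff from $R$ drops from the simultaneous value $u_\omega - \underline{\triangle}_\omega$ to the second-mover value $u_\omega - \bar{\triangle}_\omega$, and since $\bar{\triangle}_\omega > \underline{\triangle}_\omega$ by \Cref{ass:first>simul>second}, this loss is strictly positive and of order $1$, whereas the marginal informational gain from the first $dt$ of learning is only of order $dt$ (it comes only from the possibility of a breakdown correcting the action in state $L$, and is bounded). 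So any deviation to $T>0$ strictly lowers the payoff: the player forfeits a discrete amount (becoming a second mover rather than simultaneous) to gain an infinitesimal learning benefit, and additionally pays strictly positive flow cost $c\,dt>0$. I would make this precise by writing the gain from moving the stopping time from $0$ to $\triangle>0$, taking $\triangle \to 0$, and checking the limit is strictly negative — mirroring the computation in the proof of \Cref{lemma:rhocontinuous}.

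The main obstacle is handling deviations to arbitrary $T>0$ rather than just infinitesimal ones, i.e.\ ruling out that a \emph{large} amount of learning could be worthwhile even though a small amount is not. For this I would argue that along the no-signal path the belief $p_t$ drifts up (since $b>a$), which only makes immediate $R$ look relatively better, while the flow cost accumulates and the opponent's action is already ``spent'' at $t=0$ so the second-mover payoff $p_t(u_H-\bar{\triangle}_H)+(1-p_t)(u_L-\bar{\triangle}_L)$ is the best stopping payoff available at any $t>0$; comparing this (minus accumulated costs, plus the revealing-signal terms weighted by $\pi_t$) against the simultaneous-move payoff available at $t=0$, and using $p_0\geq p^M$ together with \Cref{ass:RactionisbetterinstateH} (which guarantees $u_H - \bar{\triangle}_H > 0$ so the high-state branch is still positive but strictly smaller than $u_H-\underline{\triangle}_H$), one shows the time-$0$ simultaneous payoff dominates. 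I would then note the belief updating condition (part 3 of \Cref{def:pbe}) is vacuous off the equilibrium path in the relevant sense since both players stop at $t=0$, and conclude that $(0,R)$ for both players, together with beliefs $p_t^i = p_0$ at $t=0$, satisfies all equilibrium conditions. A clean way to organize the ``large deviation'' step is to invoke the single-DM value function $V$ from \Cref{prop:singledmresult}: the best payoff from any learning deviation is bounded above by the single-DM value computed with the second-mover payoff $u_\omega - \bar{\triangle}_\omega$ replacing $u_\omega$, and one checks that at $p_0 \geq p^M$ this bound is below the simultaneous-move payoff, which is exactly where \Cref{lemma:comparisonoftwoupperbounds} ($p^M < \bar p$) will be invoked or re-derived.
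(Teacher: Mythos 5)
Your first two steps coincide with the paper's own proof: fix the opponent at $(0,R)$, observe that $p_0\geq p^{M}$ is exactly the condition $U_{R}^{IR}\left(0\right)=p_{0}\left(u_{H}-\underline{\triangle}_{H}\right)+\left(1-p_{0}\right)\left(u_{L}-\underline{\triangle}_{L}\right)\geq 0$ so that immediate $R$ beats immediate $S$, and check that the net gain \eqref{eq:netgainwhenTequalszero} from delaying by $dt$ and then taking $R$ tends to the strictly negative limit $-\left[p_{0}\left(\bar{\triangle}_{H}-\underline{\triangle}_{H}\right)+\left(1-p_{0}\right)\left(\bar{\triangle}_{L}-\underline{\triangle}_{L}\right)\right]$. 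The paper stops at that local comparison and, like you, evaluates the delay under the presumption that the deviator takes $R$ afterwards.

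The genuine gap is in your closing step for non-infinitesimal deviations. Bounding any learning deviation by the single-DM value computed with the second-mover prizes $u_{\omega}-\bar{\triangle}_{\omega}$ is legitimate (once the opponent's atom at $t=0$ is spent, that auxiliary problem is exactly the deviator's continuation problem), but the asserted comparison \emph{that this bound lies below the simultaneous-move payoff whenever $p_{0}\geq p^{M}$} does not hold for small $c$, and \Cref{lemma:comparisonoftwoupperbounds} cuts against you rather than for you: $p^{M}<\bar{p}$, and the upper learning cutoff of the auxiliary problem, $\frac{-\left(u_{L}-\bar{\triangle}_{L}\right)-c/b}{c/b}$ in likelihood-ratio form, is even larger, so for small $c$ the belief $p^{M}$ sits strictly inside the auxiliary learning region. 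Consequently, at $p_{0}=p^{M}$ the auxiliary value is strictly positive — the deviator can always take $S$ for zero, and learning briefly while taking $R$ only on a breakthrough is worth roughly $\left[p^{M}a\left(u_{H}-\bar{\triangle}_{H}\right)-c\right]dt>0$ by \Cref{ass:RactionisbetterinstateH} — whereas the simultaneous-move payoff at $p^{M}$ is exactly zero, so the inequality you need fails on a neighbourhood of $p^{M}$ and your organization would only deliver the conclusion for priors above a strictly higher threshold. Nor can the order-$1$-loss versus order-$dt$-gain argument substitute for the missing global step: the order-$1$ loss is incurred only if the deviator plans to take $R$ after the delay; if she plans to take $S$ absent a breakthrough, what she forfeits is the simultaneous payoff itself, which vanishes at $p^{M}$. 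Be aware that the paper's own proof never attempts the global step — it confines itself to the local deviation followed by $R$ — so your instinct to add it is sound, but the specific comparison you propose is the step that breaks.
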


\begin{proof}
Suppose player $j$ uses the strategy $\left(0,R\right)$.
I check whether player $i$ wants to stop at time $0$ and take action $R$ or to acquire the signal for $dt$ longer. 
If player $i$ takes the immediate $R$ action at time $0$, the payoff is
$$
U_{R}^{IR}\left(0\right)
=
p_{0}\left(u_{H}-\underline{\triangle}_{H}\right)
+
\left(1-p_{0}\right)\left(u_{L}-\underline{\triangle}_{L}\right)
.
$$
If the player takes action $R$ at time $t>0$, 
the payoff is 
$$
U_{R}^{IR}\left(t\right)
=
p_{t}\left(u_{H}-\bar{\triangle}_{H}\right)+\left(1-p_{t}\right)\left(u_{L}-\bar{\triangle}_{L}\right)
$$
At time $0$, the gain from acquiring the signal for $dt$ longer is 
\begin{align}
\label{eq:netgainwhenTequalszero}
p_{0}adt\left(u_{H}-\bar{\triangle}_{H}\right)
+
\left[1-p_{0}adt-\left(1-p_{0}\right)bdt\right]U_{R}^{IR}\left(0+dt\right)
-
U_{R}^{IR}\left(0\right).
\end{align}
When $dt\rightarrow 0$, (\ref{eq:netgainwhenTequalszero}) tends to something negative, which is smaller than the cost of information.
Therefore, if the opponent stops at time $0$, player $i$ prefers taking action $R$ at time $0$ to acquiring the signal for $dt$ longer. 
At time $0$, player $i$ prefers action $R$ to action $S$ at time $0$ if 
$
U_{R}^{IR}\left(0\right)\geq 0.
$ 
Since 
$
p_{0}\geq p^{M}
$,
the inequality 
$
U_{R}^{IR}\left(0\right)\geq 0
$ 
holds.
\end{proof}

\paragraph{Step 3}

I show the following lemma.

\begin{lemma}
If $p^{L}<p_{0}<p^{M}$, then, 
there exists an equilibrium where both players use the Immediate Mix Strategy.
\end{lemma}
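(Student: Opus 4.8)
The plan is guess-and-verify, parallel to Steps 1--2. If the opponent uses the Immediate Mix Strategy she acts at $t=0$ with probability one, playing $R$ with probability $q:=\rho^{IM}(0)$ and $S$ with probability $1-q$, so player $i$ effectively faces the static game described in the footnote of \Cref{subsec:main_result}, enriched only by the option to buy information. I would first pin down $q$ from the static mixed-equilibrium indifference $U_{R}^{i,\gamma^{j}}(0)=0=U_{S}^{i,\gamma^{j}}(0)$, i.e.
\[
q\big[p_{0}(u_{H}-\underline{\triangle}_{H})+(1-p_{0})(u_{L}-\underline{\triangle}_{L})\big]+(1-q)\big[p_{0}u_{H}+(1-p_{0})u_{L}\big]=0,
\]
which yields $q=\dfrac{p_{0}u_{H}+(1-p_{0})u_{L}}{p_{0}\underline{\triangle}_{H}+(1-p_{0})\underline{\triangle}_{L}}$ and sets $\sigma^{IM}(0)=1-q$.

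Second, I would check that $q\in(0,1)$ exactly on the window $p^{L}<p_{0}<p^{M}$: the denominator is positive by \Cref{ass:first>simul>second}; $q>0$ is equivalent to $p_{0}u_{H}+(1-p_{0})u_{L}>0$, which is $p_{0}>p^{L}$; and $q<1$ is equivalent to $p_{0}(u_{H}-\underline{\triangle}_{H})+(1-p_{0})(u_{L}-\underline{\triangle}_{L})<0$, which is $p_{0}<p^{M}$. Hence on the stated prior range the proposed profile is a well-defined non-degenerate symmetric strategy under which each player is indifferent between $R$ and $S$ at time $0$; the only thing left to establish is that buying information is not a profitable deviation.

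That last step is the substance of the proof, and the part I expect to be hard. Against Immediate Mix($q$) the opponent acts at $t=0$, so for $t>0$ we have $\mathbb{E}^{\gamma^{j}}[u_{H}^{R}\mid H,t]=u_{H}-q\bar{\triangle}_{H}$ and $U_{R}^{i,\gamma^{j}}(t)=p_{t}(u_{H}-q\bar{\triangle}_{H})+(1-p_{t})(u_{L}-q\bar{\triangle}_{L})$, which --- because $\bar{\triangle}_{\omega}>\underline{\triangle}_{\omega}$ --- jumps strictly below $U_{R}^{i,\gamma^{j}}(0)=0$ as $t\downarrow 0$; this downward jump is the mechanism described informally in \Cref{subsec:main_result} and is what rules out ``delay and then take $R$'', since immediate $R$ weakly dominates it. I would write down player $i$'s best-reply value function, invoke \Cref{lemma:rhocontinuous} to know the relevant deviations place no atom at any $t>0$, and run the viscosity-solution template used for \Cref{prop:singledmresult} on the candidate that acquiring information is never strictly optimal (equilibrium value $0$): one checks the ``continue'' HJB term $p_{t}a(u_{H}-q\bar{\triangle}_{H})-c\le 0$ and the ``stop'' term $\max\{U_{R}^{i,\gamma^{j}}(t),0\}\le 0$ at smooth points, together with the one-sided superdifferential inequality at the kink at $t=0$. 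The delicate deviation --- and, I expect, the real obstacle --- is ``buy information for a while, stay with $S$ absent a breakthrough, and switch to $R$ only after a breakthrough'': its value equals $\int_{0}^{T}\pi_{t}\big(p_{t}a(u_{H}-q\bar{\triangle}_{H})-c\big)dt$ plus a terminal term, so it is \emph{not} controlled by the jump, and since $U_{R}^{i,\gamma^{j}}(t)$ drifts back up toward $u_{H}-q\bar{\triangle}_{H}>0$ as $p_{t}\to 1$, showing this quantity stays nonpositive uniformly on $(p^{L},p^{M})$ is exactly where the admissible magnitude of $c$ must be controlled and is the crux of the argument.
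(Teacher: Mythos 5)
Your construction of the mixing probability is correct and matches the paper's Step 3 in substance: the paper likewise pins down the mixture from the time-zero indifference $U_{R}(0)=0$ and observes that an interior solution exists exactly when $p^{L}<p_{0}<p^{M}$ (your bookkeeping, with $q$ the probability of $R$ running from $0$ at $p^{L}$ to $1$ at $p^{M}$, is if anything cleaner than the paper's). The divergence is in the second half. The paper disposes of information acquisition in one line, ``by the same reasoning as in Step 2'': the local one-shot-deviation argument that, because $\bar{\triangle}_{\omega}>\underline{\triangle}_{\omega}$, the payoff from taking $R$ jumps down as $t\downarrow 0$, so acquiring a signal for $dt$ and then taking $R$ is dominated by immediate $R$. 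It does not set up a value function, does not invoke a viscosity argument, and --- crucially --- never considers the deviation you single out.

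That deviation is exactly where your proposal has a genuine gap, and the route you sketch for closing it cannot work. Against Immediate Mix the deviation ``acquire information, take $R$ only upon a breakthrough, otherwise take $S$'' has value $\int_{0}^{T}\pi_{t}\bigl(p_{t}a(u_{H}-q\bar{\triangle}_{H})-c\bigr)dt$, as you write. But by \Cref{ass:RactionisbetterinstateH} the post-breakthrough prize satisfies $u_{H}-q\bar{\triangle}_{H}\geq u_{H}-\bar{\triangle}_{H}>0$, and $p_{t}\geq p_{0}>p^{L}>0$, so the integrand is bounded below by $\pi_{t}\bigl(p^{L}a(u_{H}-\bar{\triangle}_{H})-c\bigr)$, which is strictly positive precisely in the regime ``$c$ sufficiently small'' assumed by \Cref{thm:mainresult}. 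Hence the quantity you hope to show is nonpositive is in fact strictly positive and \emph{increases} in attractiveness as $c\to 0$; no tuning of the admissible magnitude of $c$ of the kind you anticipate can rescue the verification, since the candidate equilibrium payoff is exactly $0$. So your proof is incomplete at the step you yourself flag, and the flagged obstacle is real rather than technical: the downward-jump mechanism (and \Cref{lemma:rhocontinuous}) only controls deviations that end in $R$, not the ``wait for a breakthrough, else $S$'' plan, which the paper's own Step-2-style argument is also silent about. To salvage the statement one would need some additional ingredient not present in your sketch (or in the paper's one-line dismissal), e.g.\ an explicit lower bound on $c$ relative to $a\,p_{0}(u_{H}-\bar{\triangle}_{H})$ making the flow gain from this option negative.
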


\begin{proof}

Suppose player $j$ uses the strategy $\left(\rho^{IM},\sigma^{IM}\right)$ such that 
$
\rho^{IM}\left(t\right)=m
$
for 
$
\forall t\in\left[0,\infty\right)
$
and 
$
\sigma^{IM}\left(t\right)=1-m
$
for
$\forall t\in\left[0,\infty\right)$.
Then,
player $i$'s payoff from taking action $R$ at time $0$ is
\begin{align*}
U_{R}^{IM}\left(0\right)
&= m\left[p_{0}u_{H}+\left(1-p_{0}\right)u_{L}\right]+\left(1-m\right)\left[p_{0}\left(u_{H}-\underline{\triangle}_{H}\right)+\left(1-p_{0}\right)\left(u_{L}-\underline{\triangle}_{L}\right)\right]
\\
&= p_{0}\left[m u_{H}+\left(1-m\right)\left(u_{H}-\underline{\triangle}_{H}\right)\right]+\left(1-p_{0}\right)\left[m u_{L}+\left(1-m\right)\left(u_{L}-\underline{\triangle}_{L}\right)\right].
\end{align*}
Player $i$'s payoff from taking action $R$ at time $t>0$ is
\begin{align*}
U_{R}^{IM}\left(t\right)
&= m\left[p_{t} u_{H}+\left(1-p_{t} \right)u_{L}\right]+\left(1-m\right)\left[p_{t} \left(u_{H}-\bar{\triangle}_{H}\right)+\left(1-p_{t} \right)\left(u_{L}-\bar{\triangle}_{L}\right)\right]
\\
&= 
p_{t}\left[u_{H}-\left(1-m\right)\bar{\triangle}_{H}\right]+\left(1-p_{t}\right)\left[u_{L}-\left(1-m\right)\bar{\triangle}_{L}\right].
\end{align*}
At time $t=0$, player $i$ does not want to acquire the information due to the same reasoning as in step 2.

Next I show that for the prior $p_{0}\in \left(p^{L},p^{M}\right)$, there exists $m\in\left(0,1\right)$ such that player $i$ is indifferent between taking action $R$ and action $S$ at time $t=0$.
The indifference requires 
$
0
=
U_{R}\left(0\right)
$.
That is,
\begin{align}
\label{eq:indifferenceconditionattime0}
\frac{p_{0}}{1-p_{0}}
& = -
\frac{  \left[mu_{L}+\left(1-m\right)\left(u_{L}-\underline{\triangle}_{L}\right)\right]}{\left[mu_{H}+\left(1-m\right)\left(u_{H}-\underline{\triangle}_{H}\right)\right]  }
.
\end{align}
For any $p_{0} \in \left(p^{L},p^{M}\right)$, there exists $m \in \left(0,1\right)$ such that (\ref{eq:indifferenceconditionattime0}) holds.
\end{proof}

\paragraph{Step 4}

This step shows the existence of the equilibrium where players use the randomised stopping time strategy.
I show that there exists an equilibrium where both players take action $R$ at each time $t \geq 0$ with positive rate.
Suppose cost $c$ is sufficiently small such that $p^{L}<\tilde{p}$.
Let
$
L_{t} =\frac{p_{t} }{1-p_{t} }
$ 
be the likelihood ratio.
I prove the following lemma.

\begin{lemma}
\label{lemma:randomisedstoppingequilibriumcondition}
Suppose 
$
c
$
is sufficiently small
and 
$
b<2a.
$
If
$
p^{L}
<
p_{0}
<
\tilde{p},
$
then, 
there exists an equilibrium in mixed strategies 
$
\left(\rho, \sigma \right)
$
such that 
$
\sigma\left(t\right)
=
0
$
for 
$
\forall t \in \mathbb{R}_{+}
$
and
$
\rho:\mathbb{R}_{+}\rightarrow\left[0,1\right]
$
satisfies the following conditions:
\begin{enumerate}
    \item \textbf{(Initial condition)} $\rho\left(0\right)=0$.
    \item \textbf{(Increasing condition)} 
    There exists a 
    $
    T>0
    $ 
    such that for $t\in\left[0,T\right]$, $\rho$ is a solution to the following differential equation
    \begin{align}
    \label{eq:rhodifferentialequation}
\frac{d\rho\left(t\right)}{dt}
&=
\frac{b\left(-u_{L}\right)-c-L_{t} c+b\bar{\triangle}_{L}\intop_{0}^{t}e^{-bs}\frac{d\rho\left(s\right)}{ds}ds-L_{t} \bar{\triangle}_{H}ae^{-at}\left(1-\rho\left(t\right)\right)}{L_{t} \bar{\triangle}_{H}e^{-at}+\bar{\triangle}_{L}e^{-bt}}       
    \end{align}
    with the initial condition $\rho \left(0\right)=0$
    where $\frac{d\rho}{dt}>0$ for $\forall t\in \left[0,T\right]$
    and $\rho \left(T\right)=1$.
    \item \textbf{(Terminal condition)}
    $\rho\left(t\right)=1$
    for 
    $
    \forall t>T.
    $
\end{enumerate}
\end{lemma}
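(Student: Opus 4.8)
The plan is guess-and-verify, in the spirit of the rest of the appendix: construct $\rho$ by imposing indifference, check that the resulting object is a well-defined increasing function that reaches $1$ at a finite time, and then check there is no profitable deviation. Since the only stopping action in this candidate is $R$ (the claim fixes $\sigma\equiv 0$), the single equilibrium requirement on the interior of the randomisation phase is that, conditional on no revealing signal, the player is indifferent at every $t\in[0,T]$ between stopping and taking $R$ now and acquiring information for an instant longer. Writing $F_H(t)=1-e^{-at}(1-\rho(t))$ and $F_L(t)=\int_0^t e^{-bs}\rho'(s)\,ds$ as functionals of $\rho$, and using that \eqref{equ:belief} integrates to the closed form $L_t=L_0 e^{(b-a)t}$ for the likelihood ratio, this indifference is precisely \eqref{eq:marginalcostequalsmarginalbenefit}, which rearranges term by term into \eqref{eq:rhodifferentialequation}. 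Thus conditions 1--3 of the claim are nothing but: the randomisation begins at $t=0$ (so $\rho(0)=0$), $\rho$ follows the indifference ODE while $\rho<1$, and past the first time $T$ that $\rho$ reaches $1$ everyone has already stopped, so $\rho\equiv 1$ there.

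The first technical step is existence of such a solution. Equation \eqref{eq:rhodifferentialequation} is integro-differential, but introducing $y(t):=F_L(t)$ (so $y(0)=0$, $y'(t)=e^{-bt}\rho'(t)$) converts it into a planar, time-dependent system in $(\rho,y)$ whose right-hand side is $C^1$ as long as the denominator $L_t e^{-at}\bar{\triangle}_H+e^{-bt}\bar{\triangle}_L$ stays positive -- which it always does -- so the Picard--Lindel\"{o}f theorem gives a unique $C^1$ local solution from $(0,0)$. Evaluating the numerator of \eqref{eq:rhodifferentialequation} at $t=0$ yields $b(-u_L)-c-L_0(c+a\bar{\triangle}_H)$, which is strictly positive precisely when $L_0<\tilde{p}/(1-\tilde{p})$, i.e. when $p_0<\tilde{p}$; hence $\rho'(0)>0$, and the randomisation can indeed start at time $0$.

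The main obstacle is showing that $\rho'$ stays positive until $\rho$ reaches $1$ and that this happens at a finite $T>0$. Let $N(t)$ denote the numerator of \eqref{eq:rhodifferentialequation} and $T:=\inf\{t:\rho(t)=1\}$; the task is to show $N>0$ on $[0,T)$. The negative term $-L_t a\bar{\triangle}_H e^{-at}(1-\rho(t))=-L_0 a\bar{\triangle}_H e^{(b-2a)t}(1-\rho(t))$ is exactly where the hypothesis $b<2a$ enters: it is then bounded in modulus by $L_0 a\bar{\triangle}_H$ and in fact decays; the term $b\bar{\triangle}_L y(t)$ is non-negative and non-decreasing; and smallness of $c$ keeps the drag $-c-L_t c$ from overtaking $b(-u_L)$ before $\rho$ reaches $1$ -- this I would make rigorous by a continuity/bootstrap argument that simultaneously bounds $T$, using that once $N\ge\varepsilon>0$ and the denominator is bounded above, $\rho'$ is bounded below by a positive constant, so $\rho$ is eventually affine and attains $1$ in finite time. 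Extending by $\rho\equiv 1$ past $T$ leaves $\rho$ non-decreasing with values in $[0,1]$, as required.

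Finally I would check that this profile is an equilibrium. On $[0,T]$ the indifference identity makes the player's payoff from stopping constant and equal to her continuation value, so she is indifferent across all stopping times in $[0,T]$ and the prescribed randomisation rate $\rho'(t)/(1-\rho(t))$ is a best reply; the argument of \Cref{lemma:rhocontinuous} rules out a deviation to an atom; along the path one verifies $U_R(t)\ge 0=U_S(t)$, so that the stopping action should be $R$ rather than $S$ -- at $t=0$ this reads $p_0 u_H+(1-p_0)u_L\ge 0$, i.e. $p_0\ge p^L$, which is where the lower cutoff is used -- and after $T$ the opponent has taken $R$ for sure in state $H$, so waiting past $T$ is dominated. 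The remaining incentive checks, namely monotonicity and non-negativity of $U_R$ over the whole horizon, are what \Cref{lemma:RSTequilibriumexistence} supplies.
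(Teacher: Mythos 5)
Your proposal is correct and follows essentially the same route as the paper: the pointwise indifference between stopping with $R$ and waiting $dt$ yields \eqref{eq:rhodifferentialequation}, the integral term is removed by adding a state variable so that standard ODE theory gives a unique solution (the paper's \Cref{lemma:existenceofrho} does the same via a second-order linear equation in $z(t)=\int_0^t e^{-bs}\rho(s)\,ds$), $\rho'(0)>0$ iff $p_0<\tilde{p}$, the hypotheses $b<2a$ and small $c$ keep $\rho'$ positive until $\rho$ hits $1$ at a finite $T$, and $p_0>p^{L}$ gives $U_R(0)>0$ so $\sigma\equiv 0$. The one step you only sketch --- the bootstrap showing the numerator stays positive up to the time $\rho$ reaches $1$ and that this time is finite --- is precisely what the paper's \Cref{lemma:rhoincreasing} and the subsequent finiteness lemma supply, via monotonicity of the ratio $\Phi(t)$ and a contradiction argument as $t\to\infty$.
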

In words, the equilibrium strategy is that the players stop and take action $R$ at each time with positive rate.

\begin{proof}
Fix player $j$'s strategy $\left(\rho,\sigma\right)$,
I first show that at each time $t$, player $i$ is indifferent between taking action $R$ now or $dt$ later if $\rho$ satisfies (\ref{eq:rhodifferentialequation}).
Then \Cref{lemma:existenceofrho} shows that when $c$ is sufficiently small and $b<2a$, if 
$
p_{0}
<
\tilde{p}
$,
we can find a $\rho$ function such that $\frac{d\rho}{dt}>0$.
Last, I show that if 
$
p_{0}
>
p^{L}
$, 
player $i$ prefers acquiring information to taking action $S$.

Suppose player $j$ uses the mixed strategy $\left(\rho,\sigma\right)$.
This strategy induces
\begin{align*}
F_{H} \left(t\right) 
& 
=
\intop_{0}^{t}\left[e^{-as}\left(1-\rho  \left(s\right)\right)\left(a+\frac{\frac{d\rho  \left(s\right)}{ds}}{1-\rho  \left(s\right)}\right)\right]ds
\\
& 
=
1-e^{-at}\left(1-\rho  \left(t\right)\right)
\end{align*}
and
\begin{align*}
F_{L} \left(t\right) 
&
=\intop_{0}^{t}\left[e^{-bs}\left(1-\rho  \left(s\right)\right)\frac{\frac{d\rho  \left(s\right)}{ds}}{1-\rho  \left(s\right)}\right]ds
\\
&
=\intop_{0}^{t}\left[e^{-bs}\frac{d\rho  \left(s\right)}{ds}\right]ds.
\end{align*}
Player $i$'s payoff from taking action $R$ at time $t$ is 
$$
U_{R} \left(t\right)
:=
p_{t} \left[u_{H}- F_{H} \left(t\right)\bar{\triangle}_{H}\right]
+
\left(1-p_{t} \right)\left[u_{L}- F_{L} \left(t\right)\bar{\triangle}_{L}\right].
$$
The equilibrium condition is that given player $j$'s strategy, player $i$ is indifferent between taking action $R$ and acquiring the signal at each time instant.
That is, 
\begin{align*}
p_{t} adt\left[u_{H}- F_{H} \left(t+dt\right)\bar{\triangle}_{H}\right]+\left[1-p_{t} adt-\left(1-p_{t} \right)bdt\right]U_{R}\left(t+dt\right)-U_{R}\left(t\right)
=
cdt.
\end{align*}
The interpretation is that the marginal cost and marginal benefit associated with acquiring information for $dt$ longer are the same.
When $dt\rightarrow 0$, we have the following differential equation
\begin{align*}
c
&=
\left(1-p_{t} \right)b\left[-u_{L}+ F_{L} \left(t\right)\bar{\triangle}_{L}\right]-p_{t} \frac{d F_{H} \left(t\right)}{dt}\bar{\triangle}_{H}-\left(1-p_{t} \right)\frac{d F_{L} \left(t\right)}{dt}\bar{\triangle}_{L}
\end{align*}
After plugging in the expressions of
$
F_{H}\left(t\right)
$
and
$
F_{L}\left(t\right),
$
we have
\begin{align*}
\frac{d\rho\left(t\right)}{dt}
&=
\frac{b\left(-u_{L}\right)-c-L_{t} c+b\bar{\triangle}_{L}\intop_{0}^{t}e^{-bs}\frac{d\rho\left(s\right)}{ds}ds-L_{t} \bar{\triangle}_{H}ae^{-at}\left(1-\rho\left(t\right)\right)}{L_{t} \bar{\triangle}_{H}e^{-at}+\bar{\triangle}_{L}e^{-bt}}.
\end{align*}

\begin{lemma}
\label{lemma:existenceofrho}
The differential equation (\ref{eq:rhodifferentialequation}) with initial condition 
$
\rho\left(0\right)=0
$
has a unique solution defined for all 
$
t\in\left[0, \infty\right)
$.
\end{lemma}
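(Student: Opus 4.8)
The plan is to transform the Volterra integro-differential equation (\ref{eq:rhodifferentialequation}) into an equivalent \emph{linear} system of two first-order ordinary differential equations, and then appeal to the classical global existence--uniqueness theorem for linear systems. First, I would use the belief law (\ref{equ:belief}) to observe that the likelihood ratio $L_t$ solves $\dot L_t=(b-a)L_t$, hence $L_t=\frac{p_0}{1-p_0}e^{(b-a)t}$ is an explicit, strictly positive, $C^\infty$ function of $t$ on $[0,\infty)$; in particular it is not an extra unknown. Next I would introduce the auxiliary state $Q(t):=\intop_0^t e^{-bs}\rho'(s)\,ds$ (which equals $F_L(t)$), so that $Q(0)=0$ and $Q'(t)=e^{-bt}\rho'(t)$. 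Replacing the integral in (\ref{eq:rhodifferentialequation}) by $Q(t)$ and expanding $-L_t\bar{\triangle}_H a e^{-at}(1-\rho(t))$ turns the equation into the system
\begin{equation*}
\rho'(t)=\frac{\big(b(-u_L)-c-L_t c-L_t\bar{\triangle}_H a e^{-at}\big)+L_t\bar{\triangle}_H a e^{-at}\,\rho(t)+b\bar{\triangle}_L\,Q(t)}{L_t\bar{\triangle}_H e^{-at}+\bar{\triangle}_L e^{-bt}},\qquad Q'(t)=e^{-bt}\rho'(t),
\end{equation*}
with initial data $\rho(0)=Q(0)=0$. Conversely, integrating the second equation recovers $Q(t)=\intop_0^t e^{-bs}\rho'(s)\,ds$, so a solution of the system is exactly a solution of (\ref{eq:rhodifferentialequation}); the two formulations are equivalent.

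The key point is that this system is affine in the unknown pair $(\rho,Q)$: writing $D(t):=L_t\bar{\triangle}_H e^{-at}+\bar{\triangle}_L e^{-bt}$, the right-hand sides are $\frac{1}{D(t)}$ and $\frac{e^{-bt}}{D(t)}$ times an expression of the form $\alpha(t)+\beta(t)\rho+\delta(t)Q$ whose coefficients are built only from $t$, $L_t$, the exponentials $e^{-at},e^{-bt}$, and the constants $u_L,\bar{\triangle}_H,\bar{\triangle}_L,a,b,c$. Moreover $D(t)>0$ for every $t\in[0,\infty)$: by \Cref{ass:first>simul>second} both $\bar{\triangle}_H$ and $\bar{\triangle}_L$ are strictly positive, and $L_t,e^{-at},e^{-bt}>0$, so each summand of $D(t)$ is strictly positive and there is no finite-time singularity. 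Hence $1/D(\cdot)$ is continuous (indeed smooth) on $[0,\infty)$, and the system can be written as $y'=A(t)y+g(t)$ with $y=(\rho,Q)^\top$ and $A(\cdot),g(\cdot)$ continuous on all of $[0,\infty)$.

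I would then invoke the standard fact that a linear system $y'=A(t)y+g(t)$ with coefficients continuous on an interval $I$ has, for each initial value, a unique solution defined on the whole of $I$: uniqueness and local existence are Picard--Lindel\"of (the right-hand side is globally Lipschitz in $y$ on each compact time subinterval, with Lipschitz constant $\sup\|A\|$ there), and the linear Gr\"onwall bound $\|y(t)\|\le\big(\|y(0)\|+\intop_0^t\|g\|\big)\exp\intop_0^t\|A\|$ gives an a priori bound on every compact subinterval, ruling out escape to infinity in finite time. Taking $I=[0,\infty)$ and $y(0)=0$ yields the unique $(\rho,Q)$, whose first component is the desired global solution of (\ref{eq:rhodifferentialequation}) with $\rho(0)=0$. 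I would also remark that this argument needs only $\bar{\triangle}_H,\bar{\triangle}_L>0$ and $p_0\in(0,1)$; the hypothesis $b<2a$ plays no role here and enters only later (in \Cref{lemma:randomisedstoppingequilibriumcondition}) to guarantee $\rho'>0$ and the existence of a finite $T$ with $\rho(T)=1$.

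The step I expect to require the most care is the reduction itself: one must check that the auxiliary state $Q$ reproduces the integral term in (\ref{eq:rhodifferentialequation}) \emph{exactly}, so that solutions of the system and of the original equation correspond bijectively, and that $D(t)$ stays bounded away from zero on every compact interval, so that the reformulated problem genuinely has continuous coefficients on $[0,\infty)$ rather than a singularity at some finite time. Everything after that is a routine citation of linear ODE theory.
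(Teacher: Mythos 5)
Your proposal is correct and takes essentially the same route as the paper: both handle the integral term by introducing an auxiliary state so that (\ref{eq:rhodifferentialequation}) becomes a linear problem with coefficients continuous on $[0,\infty)$ (strict positivity of the denominator following from $\bar{\triangle}_H,\bar{\triangle}_L>0$ and $L_t>0$), and then invoke global existence and uniqueness for linear ODEs. The only difference is cosmetic: the paper substitutes $A(t)=e^{-bt}\rho(t)$ and $z(t)=\intop_0^t A(s)\,ds$ to obtain a second-order linear scalar equation, whereas you keep $Q(t)=\intop_0^t e^{-bs}\rho'(s)\,ds$ as a second component of a first-order affine system, which is an equivalent reformulation.
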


\begin{proof}
Let 
$
A\left(t\right)
:=
e^{-bt}\rho\left(t\right)
$. 
Then, we have $\frac{dA}{dt}=-bA+e^{-bt}\frac{d\rho}{dt}$. 
Use the formula that $L_{t}=L_{0}e^{\left(b-a\right)t}$, (\ref{eq:rhodifferentialequation}) can be rewritten as 
\begin{align}
\left(\frac{dA}{dt}+bA\right)e^{bt} 
& 
=
\frac{b\left(-u_{L}\right)-c-cL_{0}e^{\left(b-a\right)t}+b\bar{\triangle}_{L}\left(A+b\intop_{0}^{t}A\left(s\right)ds\right)-L_{0}\bar{\triangle}_{H}ae^{\left(b-2a\right)t}\left(1-e^{bt}A\right)}{L_{0}\bar{\triangle}_{H}e^{\left(b-2a\right)t}+\bar{\triangle}_{L}e^{-bt}}.
\end{align}
Let 
$
z\left(t\right)
:=
\intop_{0}^{t}A\left(s\right)ds
$. 
Then, the differential equation can be written as 
\begin{align}
\label{eq:zdifferentialequation}
z''+g_{1}\left(t\right)z'+g_{2}\left(t\right)z=g_{3}\left(t\right)
\end{align}
where 
\begin{align*}
g_{0}\left(t\right)
& =
\frac{1}{L_{0}\bar{\triangle}_{H}e^{2\left(b-a\right)t}+\bar{\triangle}_{H}},
\\
g_{1}\left(t\right)	
&=
b-\frac{b\bar{\triangle}_{L}+L_{0}\bar{\triangle}_{H}ae^{2\left(b-a\right)t}}{g_{0}\left(t\right)},
\\
g_{2}\left(t\right)	
&= 
-\frac{b^{2}\bar{\triangle}_{L}}{g_{0}\left(t\right)},
\\
g_{3}\left(t\right)	
&=
\frac{b\left(-u_{L}\right)-c-cL_{0}e^{\left(b-a\right)t}-L_{0}\bar{\triangle}_{H}ae^{\left(b-2a\right)t}}{g_{0}\left(t\right)}.
\end{align*}
Since 
$
g_{0}\left(t\right)
$,
$
g_{1}\left(t\right)
$, 
$
g_{2}\left(t\right)$
and
$g_{3}\left(t\right)$
are continuous on 
$\left[0,\infty\right)$, 
(\ref{eq:zdifferentialequation}) with initial conditions 
$z\left(0\right)=0$ 
and
$z'\left(0\right)=0$
has a unique solution defined for all $t$ on $\left[0,\infty\right)$. 
The existence of $z\left(\cdot\right)$ implies the existence of $\rho\left(\cdot\right)$.
\end{proof}

\begin{lemma}
$
\frac{d\rho}{dt}\mid_{t=0}
>
0
$
iff
$
p_{0}<\tilde{p}.
$
\end{lemma}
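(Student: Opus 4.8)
The plan is to simply evaluate the right-hand side of (\ref{eq:rhodifferentialequation}) at $t=0$ and read off the sign. Using the initial condition $\rho\left(0\right)=0$ (so that $1-\rho\left(0\right)=1$), the fact that $\intop_{0}^{0}e^{-bs}\frac{d\rho\left(s\right)}{ds}ds=0$, and $e^{-a\cdot 0}=e^{-b\cdot 0}=1$, the expression collapses to
\begin{align*}
\frac{d\rho}{dt}\Big|_{t=0}
=
\frac{b\left(-u_{L}\right)-c-L_{0}c-L_{0}a\bar{\triangle}_{H}}{L_{0}\bar{\triangle}_{H}+\bar{\triangle}_{L}},
\end{align*}
where $L_{0}=\frac{p_{0}}{1-p_{0}}$.

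Next I would observe that the denominator $L_{0}\bar{\triangle}_{H}+\bar{\triangle}_{L}$ is strictly positive: $L_{0}\geq 0$ is a likelihood ratio, and $\bar{\triangle}_{H},\bar{\triangle}_{L}>0$ by \Cref{ass:first>simul>second}. Hence the sign of $\frac{d\rho}{dt}\big|_{t=0}$ is exactly the sign of the numerator, and $\frac{d\rho}{dt}\big|_{t=0}>0$ if and only if
\begin{align*}
b\left(-u_{L}\right)-c>L_{0}\left(c+a\bar{\triangle}_{H}\right).
\end{align*}
Since $c$ is taken sufficiently small (in particular $c<-bu_{L}$, as noted when the three cutoffs are introduced), both sides' coefficients are positive and $c+a\bar{\triangle}_{H}>0$, so this rearranges to $L_{0}<\frac{-bu_{L}-c}{a\bar{\triangle}_{H}+c}$. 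The right-hand side is by definition $\frac{\tilde{p}}{1-\tilde{p}}$, so the condition reads $\frac{p_{0}}{1-p_{0}}<\frac{\tilde{p}}{1-\tilde{p}}$. Because the map $p\mapsto\frac{p}{1-p}$ is strictly increasing on $[0,1)$, this is equivalent to $p_{0}<\tilde{p}$, which is the claim.

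There is essentially no hard step here: the whole content is substituting $t=0$ and checking the sign of the denominator, then matching the resulting inequality to the definition of $\tilde{p}$. The only points requiring a word of care are (i) citing \Cref{ass:first>simul>second} to guarantee the denominator is positive so that no sign flip occurs when clearing it, and (ii) invoking strict monotonicity of the likelihood-ratio transformation to pass from the inequality on $L_{0}$ to the inequality on $p_{0}$; one should also note that the smallness of $c$ ensures $\tilde{p}\in\left(0,1\right)$ so that the statement is non-vacuous.
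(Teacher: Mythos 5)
Your proposal is correct and follows essentially the same route as the paper: evaluate the right-hand side of (\ref{eq:rhodifferentialequation}) at $t=0$ using $\rho\left(0\right)=0$, note the denominator $L_{0}\bar{\triangle}_{H}+\bar{\triangle}_{L}>0$, and rearrange the resulting numerator inequality into $L_{0}<\frac{\tilde{p}}{1-\tilde{p}}$, which is equivalent to $p_{0}<\tilde{p}$ by monotonicity of the likelihood-ratio map. Your added remarks on citing \Cref{ass:first>simul>second} and on $c<-bu_{L}$ are just slightly more explicit bookkeeping than the paper's terse statement, not a different argument.
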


\begin{proof}
\begin{align*}
\frac{d\rho}{dt}\mid_{t=0}
&=
\frac{b\left(-u_{L}\right)-c-L_{0}c-L_{0}\bar{\triangle}_{H}a}{L_{0}\bar{\triangle}_{H}+\bar{\triangle}_{L}}
\\
&>0
\end{align*}
if and only if
$$
L_{0}=
\frac{p_{0}}{1-p_{0}}
<
\frac{b\left(-u_{L}\right)-c}{a\bar{\triangle}_{H}+c}
=
\frac{\tilde{p}}{1-\tilde{p}}.
$$
\end{proof}

This lemma implies that if
$
p_{0}<\tilde{p},
$
then,
$
\frac{d\rho}{dt}
>
0
$
at the neighbourhood of $0$.
The following lemma and its proof shows that under some conditions, 
if $\rho\left(\cdot\right)$ increases at the neighbourhood of $0$ and the value of $\rho$ is smaller than 1, 
then,
it continues increasing.

\begin{lemma}
\label{lemma:rhoincreasing}
Suppose 
$b<2a$ and $c$ sufficiently small. 
If 
$
\rho\left(t\right) \in \left[0,1\right)
$
for 
$
\forall t \in \left[0,\tau\right]
$
and 
$
\frac{d\rho}{dt}>0
$
$
\forall t \in \left[0,\tau\right),
$
then,
$
\frac{d\rho}{dt}>0
$
at $t=\tau$.
\end{lemma}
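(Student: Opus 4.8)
The plan is to run the argument entirely on the numerator of the defining ODE. Write $\rho'(t)D(t)=N(t)$, where $D(t):=L_t\bar{\triangle}_H e^{-at}+\bar{\triangle}_L e^{-bt}>0$ and
\[
N(t):=b(-u_L)-c-cL_t+b\bar{\triangle}_L\!\!\intop_0^t\! e^{-bs}\rho'(s)\,ds-a\bar{\triangle}_H L_t e^{-at}\bigl(1-\rho(t)\bigr).
\]
Since $\rho'>0$ on $[0,\tau)$ and $D>0$, we have $N>0$ on $[0,\tau)$; by continuity $N(\tau)\ge 0$, and the conclusion $\rho'(\tau)>0$ is exactly $N(\tau)>0$. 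So I would argue by contradiction: suppose $N(\tau)=0$, hence $\rho'(\tau)=0$.

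First I would compute $\rho''(\tau)$. Differentiating $N=\rho'D$ gives $N'=\rho''D+\rho'D'$, so $N'(\tau)=\rho''(\tau)D(\tau)$. Differentiating the explicit formula for $N$ (using $L_t=L_0e^{(b-a)t}$, so $L_te^{-at}=L_0e^{(b-2a)t}$), setting $\rho'(\tau)=0$, and using $1-F_H(\tau)=e^{-a\tau}(1-\rho(\tau))$, one gets after simplification
\[
\rho''(\tau)=\frac{L_\tau}{D(\tau)}\Bigl[a\bar{\triangle}_H(2a-b)\bigl(1-F_H(\tau)\bigr)-c(b-a)\Bigr].
\]
This is where $b<2a$ enters: it makes $a\bar{\triangle}_H(2a-b)>0$, so the bracket is positive exactly when $1-F_H(\tau)>\delta_c:=\frac{c(b-a)}{a\bar{\triangle}_H(2a-b)}$, and $\delta_c\to 0$ as $c\to0$. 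If this strict inequality holds, then $\rho''(\tau)>0$, so $\rho'$ is strictly increasing through $\tau$; together with $\rho'(\tau)=0$ this forces $\rho'(s)<0$ for $s$ slightly below $\tau$, contradicting $\rho'>0$ on $[0,\tau)$. (If the estimate below only gave equality $1-F_H(\tau)=\delta_c$, then $N'(\tau)=0$ and one more differentiation yields $N''(\tau)=\rho'''(\tau)D(\tau)=-\bigl[a\bar{\triangle}_H(2a-b)^2L_0e^{(b-2a)\tau}(1-\rho(\tau))+c(b-a)^2L_\tau\bigr]<0$, which again contradicts $N>0$ on $[0,\tau)$; this fallback needs no smallness of $c$.)

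The remaining, and genuinely delicate, step is the lower bound $1-F_H(\tau)>\delta_c$. The lever is the relation $N(\tau)=0$ itself, which reads $a\bar{\triangle}_H L_\tau(1-F_H(\tau))=b(-u_L)-c-cL_\tau+b\bar{\triangle}_L F_L(\tau)$ with $F_L(\tau):=\intop_0^\tau e^{-bs}\rho'(s)\,ds\in[0,1)$; hence
\[
1-F_H(\tau)=\frac{b(-u_L)-c-cL_\tau+b\bar{\triangle}_L F_L(\tau)}{a\bar{\triangle}_H L_\tau},
\]
and the target inequality is equivalent to the bound $L_\tau<\frac{(2a-b)\,(b(-u_L)-c+b\bar{\triangle}_L F_L(\tau))}{ac}$ on the likelihood ratio at $\tau$. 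Dropping the nonnegative term $a\bar{\triangle}_H L_t(1-F_H(t))$ in $N>0$ on $[0,\tau)$ already gives the crude a priori bound $cL_t<b(-u_L)-c+b\bar{\triangle}_L F_L(t)$, i.e. $L_\tau=O(1/c)$; what is needed is to improve this by the multiplicative factor $(2a-b)/a\in(0,1)$. I expect to obtain this by a finer accounting along $[0,\tau]$ — integrating $N'$, using $e^{(b-2a)s}\le 1$ and $\rho(\tau)<1$ to control the accumulated gain terms, and using that the admissible range of $L_0$ is itself pinned down by $p_0<\tilde p$ — and this is precisely where ``$c$ sufficiently small'' does real work. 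I regard this sharpened bound on $L_\tau$ as the main obstacle.

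Once the bound is established, the lemma follows; combined with $\rho'(0)>0$ (equivalently $p_0<\tilde p$, already shown) it bootstraps to: $\rho$ is strictly increasing on the whole interval on which it is $<1$. A short complementary remark then shows $\rho$ must reach $1$ in finite time $\bar T^{RS}$ — if instead $\rho$ stayed below $1$ forever while increasing, then $cL_t\to\infty$ while $a\bar{\triangle}_H L_t e^{-at}(1-\rho(t))=a\bar{\triangle}_H L_0 e^{(b-2a)t}(1-\rho(t))$ stays bounded, so $N(t)\to-\infty$, contradiction — which is the structure required for \Cref{lemma:randomisedstoppingequilibriumcondition}.
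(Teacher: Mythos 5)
Your computations check out as far as they go: at a putative first zero $\tau$ of $\rho'$, writing the ODE as $\rho'D=N$ with $D(t)=L_t\bar{\triangle}_He^{-at}+\bar{\triangle}_Le^{-bt}$, one indeed gets $N'(\tau)=\rho''(\tau)D(\tau)=L_\tau\left[a\bar{\triangle}_H(2a-b)\left(1-F_H(\tau)\right)-c(b-a)\right]$, your third-order fallback is also correct, and the contradiction scheme (second-order test at the contact point) is sound in principle. But the proposal does not prove the lemma, because the step you yourself call ``the main obstacle'' is never established: you need $1-F_H(\tau)\geq\frac{c(b-a)}{a\bar{\triangle}_H(2a-b)}$, equivalently $acL_\tau\leq(2a-b)\left[b(-u_L)-c+b\bar{\triangle}_LF_L(\tau)\right]$, and for this you only offer a plan (``I expect to obtain this by a finer accounting\dots''). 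Without it the sign of $\rho''(\tau)$ is undetermined and no contradiction follows: if $1-F_H(\tau)$ were below that threshold you would have $N'(\tau)<0$, which is perfectly compatible with $N>0$ on $[0,\tau)$ and $N(\tau)=0$, so the argument simply stops. The crude bound you do extract from $N>0$, namely $cL_\tau\leq b(-u_L)-c+b\bar{\triangle}_LF_L(\tau)$, misses the needed inequality by exactly the factor $\frac{a}{2a-b}>1$, so the gap is not cosmetic — it is the whole content of the ``$c$ sufficiently small, $b<2a$'' hypothesis.

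For comparison, the paper argues at the same point but in first-order form: it rewrites positivity of the numerator as $\Phi(t)>L_0$ with $\Phi(t)=\frac{b\left(-u_L+\bar{\triangle}_L\int_0^te^{-bs}\rho'(s)ds\right)-c-L_tc}{\bar{\triangle}_Hae^{(b-2a)t}\left(1-\rho(t)\right)}$, asserts that $\Phi$ is increasing on $[0,\tau)$ when $b<2a$ and $c$ is small, and gets the contradiction from $\Phi(0)>L_0$ (which is $p_0<\tilde p$) plus continuity. If you differentiate $\log\Phi$ you will find that the monotonicity claim reduces to essentially the same inequality $acL_t\leq(2a-b)\left[b(-u_L)-c+b\bar{\triangle}_LF_L(t)\right]$ that you isolated; so you have correctly located where the smallness of $c$ must do real work (the paper asserts this monotonicity rather than deriving it). As a blind proof, however, your submission leaves that decisive estimate open; to close it you would need an a priori bound on $L_\tau$ along the solution — e.g.\ a bound, uniform for small $c$, on how long $\rho'$ can stay positive, or a direct comparison/Gronwall-type argument for the ODE — and that is precisely what is missing.
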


\begin{proof}
If $b<2a$, $c$ sufficiently small, and 
$
\frac{d\rho}{dt}>0
$
$
\forall t \in \left[0,\tau\right),
$
then,
\begin{align*}
    \Phi\left(t\right)
    :=
    \frac{b\left(-u_{L}+\bar{\triangle}_{L}\intop_{0}^{t}e^{-bs}\frac{d\rho}{ds}ds\right)-c-L_{0}e^{\left(b-a\right)t}c}
    {\bar{\triangle}_{H}ae^{\left(b-2a\right)t}\left(1-\rho\left(t\right)\right)}
\end{align*}
increases in $t$ for 
$
\forall t \in \left[0,\tau\right).
$
Since 
$\rho\left(t\right) \in \left[0,1\right)$
for
$
\forall t \in \left[0,\tau\right),
$
then,
$
\Phi\left(t\right)>L_0
$ 
for 
$
\forall t \in \left[0,\tau\right)
$
iff
$
\frac{d\rho}{dt}>0
$
for
$
\forall t \in \left[0,\tau\right).
$
Next, I show by contradiction that if 
$
\frac{d\rho}{dt}>0
$
for
$
\forall t \in \left[0,\tau\right)
$
and
$
\rho\left(t\right) \in \left[0,1\right)
$
for 
$
\forall t \in \left[0,\tau\right],
$
then, it must be that 
$
\frac{d\rho}{dt}>0
$
at $t=\tau$.
Suppose 
$
\frac{d\rho}{dt}=0
$
at $t=\tau$.
Then, it must be that 
$
\Phi\left(\tau\right)=L_0.
$ 
However, we know that 
$\Phi\left(t\right)$ 
increases in $t$ for 
$\forall t \in \left[0,\tau\right)$ and 
$
\Phi\left(t\right)>L_0
$ 
for 
$\forall t \in \left[0,\tau\right)$.
Then, it cannot be that 
$
\Phi\left(\tau\right)=L_0.
$ 
There is a contradiction and hence 
$
\frac{d\rho}{dt}\neq0
$
at $t=\tau$.
Since $\rho$ is continuous, it cannot be that
$
\frac{d\rho}{dt}>0
$
at $t=\tau$.
\end{proof}

Next I show that given \Cref{lemma:existenceofrho} and \Cref{lemma:rhoincreasing},
$
\rho\left(t\right)=1
$
will be reached in a finite time.

\begin{lemma}
Given \Cref{lemma:existenceofrho} and \Cref{lemma:rhoincreasing},
there exists a 
$
T<\infty
$ such that 
$
\rho\left(T\right)=1.
$
\end{lemma}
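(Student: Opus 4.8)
The plan is to show that the solution $\rho$ of (\ref{eq:rhodifferentialequation}) --- which by \Cref{lemma:existenceofrho} is $C^{1}$ on all of $[0,\infty)$ --- cannot keep increasing while staying strictly below $1$, because the right-hand side of (\ref{eq:rhodifferentialequation}) eventually turns negative. Let $T:=\sup\{t\ge 0:\rho(s)<1\text{ for all }s\in[0,t]\}\in(0,\infty]$ be the first time $\rho$ reaches $1$. If $T<\infty$, then $\rho(T)=1$ by continuity of $\rho$ and we are done; so it suffices to rule out $T=\infty$.

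The first step is to propagate strict monotonicity over $[0,T)$. In the parameter range relevant here we have $p_{0}<\tilde p$, so the computation of $\tfrac{d\rho}{dt}\big|_{t=0}$ above gives $\tfrac{d\rho}{dt}\big|_{t=0}>0$, and $\tfrac{d\rho}{dt}$ is continuous because the right-hand side of (\ref{eq:rhodifferentialequation}) is. Put $\tau^{*}:=\sup\{\tau\in[0,T):\tfrac{d\rho}{dt}>0\text{ on }[0,\tau)\}$, which is positive. If $\tau^{*}<T$, then $\rho\in[0,1)$ on $[0,\tau^{*}]$ and $\tfrac{d\rho}{dt}>0$ on $[0,\tau^{*})$, so \Cref{lemma:rhoincreasing} gives $\tfrac{d\rho}{dt}>0$ at $\tau^{*}$; continuity of $\tfrac{d\rho}{dt}$ then keeps it positive slightly beyond $\tau^{*}$ while still inside $[0,T)$, contradicting the definition of $\tau^{*}$. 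Hence $\tau^{*}=T$, that is, $\rho$ is strictly increasing on $[0,T)$.

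Now suppose $T=\infty$. Then $\rho$ is increasing and bounded above by $1$, so for every $t$ we have $\int_{0}^{t}e^{-bs}\tfrac{d\rho(s)}{ds}\,ds\le\int_{0}^{t}\tfrac{d\rho(s)}{ds}\,ds=\rho(t)<1$, while $L_{t}\bar{\triangle}_{H}\,a\,e^{-at}(1-\rho(t))\ge 0$. Using $L_{t}=L_{0}e^{(b-a)t}$, the numerator of (\ref{eq:rhodifferentialequation}) is therefore bounded above by $b(-u_{L})-c+b\bar{\triangle}_{L}-L_{0}c\,e^{(b-a)t}$, which tends to $-\infty$ because $b>a$ and $c>0$. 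Since the denominator $L_{t}\bar{\triangle}_{H}e^{-at}+\bar{\triangle}_{L}e^{-bt}$ is strictly positive, $\tfrac{d\rho}{dt}<0$ for all sufficiently large $t$, contradicting the strict monotonicity established in the previous step. Therefore $T<\infty$ and $\rho(T)=1$, as claimed. The only genuinely delicate point is the monotonicity-propagation argument of the second paragraph; the asymptotic sign of the numerator is immediate once one notes that the $-L_{t}c$ term eventually dominates, and the possibility of $\rho$ overshooting $1$ before $T$ needs no separate treatment because $T$ is defined as the first hitting time of $1$ and $\rho$ is continuous.
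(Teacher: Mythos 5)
Your proof is correct and follows essentially the same route as the paper: assume $\rho$ stays below $1$ forever, use \Cref{lemma:rhoincreasing} to keep $\tfrac{d\rho}{dt}>0$, and note that the $-L_{0}c\,e^{(b-a)t}$ term forces the numerator of (\ref{eq:rhodifferentialequation}) (the paper phrases this via $\Phi(t)<0$) to become negative, a contradiction. Your explicit hitting-time and monotonicity-propagation bookkeeping just makes rigorous the steps the paper leaves implicit.
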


\begin{proof}
If 
$
\frac{d\rho}{dt}>0
$
and $\rho\left(t\right)\in\left[0,1\right)$,
then
\begin{align*}
    \Phi\left(t\right) 
    & <
    \frac{b\left(-u_{L}+\bar{\triangle}_{L}\rho\left(t\right)\right)-c-L_{0}e^{\left(b-a\right)t}c}{\bar{\triangle}_{H}ae^{\left(b-2a\right)t}\left(1-\rho\left(t\right)\right)}
    \\
    & <
    \frac{b\left(-u_{L}+\bar{\triangle}_{L}\right)-c-L_{0}e^{\left(b-a\right)t}c}{\bar{\triangle}_{H}ae^{\left(b-2a\right)t}\left(1-\rho\left(t\right)\right)}.
\end{align*}
As $t\rightarrow \infty$, we have $\Phi\left(t\right)<0$.
If $\Phi\left(t\right)<0$ and $\rho\left(t\right)<1$ when $t\rightarrow \infty$,
then,
$\frac{d\rho}{dt}<0$ as $t\rightarrow \infty$.
There is a contradiction.
Therefore, it must be that $\rho\left(t\right)>1$ for some $t$. Because of continuity, there exists a $T$ such that $\rho\left(T\right)=1$.
\end{proof}

\begin{lemma}
If 
$
p^{L}
<
p_{0}
$
and $\rho\left(t\right)$ satisfies the conditions in \Cref{lemma:randomisedstoppingequilibriumcondition},
then, $\sigma\left(t\right)=0$
for 
$\forall t\in \left[0,T\right]$.
\end{lemma}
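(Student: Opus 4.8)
The plan is to show that, against the opponent's strategy $\left(\rho,0\right)$, taking the safe action is never strictly optimal on $\left[0,T\right]$. Since the terminal payoff of the safe action is $U_{S}\left(t\right)\equiv 0$ and, by the indifference condition $(\ref{eq:marginalcostequalsmarginalbenefit})$ built into $\rho$, the value of continuing to acquire information at any $t\in\left[0,T\right]$ equals $U_{R}\left(t\right)$, this amounts to proving
\[
U_{R}\left(t\right)\geq 0\qquad\text{for all }t\in\left[0,T\right].
\]
The case $t=0$ is immediate: with $\rho\left(0\right)=0$ we have $U_{R}\left(0\right)=p_{0}u_{H}+\left(1-p_{0}\right)u_{L}$, which is $\geq 0$ (indeed $>0$) precisely because $p_{0}>p^{L}$, by the definition $\frac{p^{L}}{1-p^{L}}=\frac{-u_{L}}{u_{H}}$.

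To handle the remaining times I would reduce the inequality to the single endpoint $t=T$ using the indifference condition. Let $\Pi\left(t\right)$ be the payoff of the pure deviation ``acquire information up to $t$, then take $R$'' and $\Xi\left(t\right)=\int_{0}^{t}\pi_{s}\left(p_{s}a\left(u_{H}-F_{H}\left(s\right)\bar{\triangle}_{H}\right)-c\right)ds$ the payoff of ``acquire information up to $t$, then take $S$'', so that $\Pi\left(t\right)=\Xi\left(t\right)+\pi_{t}U_{R}\left(t\right)$. The ODE $(\ref{eq:rhodifferentialequation})$ says exactly that $\Pi$ is constant on $\left[0,T\right]$ (one checks $\Pi'\left(t\right)/\pi_{t}=0$ is equivalent to $(\ref{eq:marginalcostequalsmarginalbenefit})$), so $\Pi\left(t\right)=\Pi\left(0\right)=U_{R}\left(0\right)$ and
\[
\pi_{t}U_{R}\left(t\right)=U_{R}\left(0\right)-\Xi\left(t\right).
\]
Since $u_{H}-\bar{\triangle}_{H}>0$ by \Cref{ass:RactionisbetterinstateH} and $p_{s}\geq p_{0}>p^{L}>0$, the integrand of $\Xi$ is bounded below by $\pi_{s}\left(p_{0}a\left(u_{H}-\bar{\triangle}_{H}\right)-c\right)>0$ for $c$ small, hence $\Xi$ is strictly increasing and $\pi_{t}U_{R}\left(t\right)$ strictly decreasing on $\left[0,T\right]$. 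As $\pi_{t}>0$, it follows that $U_{R}\geq 0$ on all of $\left[0,T\right]$ as soon as $\pi_{T}U_{R}\left(T\right)\geq 0$, equivalently $\Xi\left(T\right)\leq U_{R}\left(0\right)$.

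The inequality at $t=T$ is the only real work. Here I would use the terminal condition $\rho\left(T\right)=1$, which gives $F_{H}\left(T\right)=1-e^{-aT}\left(1-\rho\left(T\right)\right)=1$ and hence $U_{R}\left(T\right)=p_{T}\left(u_{H}-\bar{\triangle}_{H}\right)+\left(1-p_{T}\right)\left(u_{L}-F_{L}\left(T\right)\bar{\triangle}_{L}\right)$, and combine the formula $p_{T}=p_{0}e^{-aT}/\pi_{T}$ (whence $p_{T}\geq p_{0}$, from the belief drift) with an upper bound on $F_{L}\left(T\right)=\int_{0}^{T}e^{-bs}\rho'\left(s\right)ds$. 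For the latter I would integrate by parts, $F_{L}\left(T\right)=e^{-bT}+b\int_{0}^{T}e^{-bs}\rho\left(s\right)ds$, and use that $\rho$ increases to $\rho\left(T\right)=1$ together with the relation that $(\ref{eq:rhodifferentialequation})$ evaluated at $t=T$ forces $b\left(-u_{L}+\bar{\triangle}_{L}F_{L}\left(T\right)\right)\geq c\left(1+L_{T}\right)$; alternatively one can work directly with the explicit expression $\Xi\left(T\right)=p_{0}\left(u_{H}-\bar{\triangle}_{H}\right)\left(1-e^{-aT}\right)+p_{0}a\bar{\triangle}_{H}\int_{0}^{T}e^{-2as}\left(1-\rho\left(s\right)\right)ds-c\int_{0}^{T}\pi_{s}ds$ and show it does not exceed $U_{R}\left(0\right)$, the point being that $\rho$ rising to $1$ keeps the middle integral small. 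I expect this endpoint estimate --- quantifying how much state-$L$ preemption the opponent has accumulated by the horizon $T$ --- to be the main obstacle; the standing hypotheses $b<2a$ and ``$c$ sufficiently small'' (already in force from \Cref{lemma:randomisedstoppingequilibriumcondition}, which also furnishes $T<\infty$) are what should make it go through, while everything preceding it is routine manipulation of the indifference condition.
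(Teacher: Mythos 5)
Your reduction is sound as far as it goes --- the constancy of $\Pi$ on $\left[0,T\right]$, the positivity of $U_{R}\left(0\right)$ from $p_{0}>p^{L}$, and the monotonicity of $\Xi$ (hence of $\pi_{t}U_{R}\left(t\right)$) are all correct --- but it stops exactly where the content of the lemma lies. Because $\pi_{t}U_{R}\left(t\right)=U_{R}\left(0\right)-\Xi\left(t\right)$ is \emph{decreasing}, your argument needs positivity at the far endpoint, $U_{R}\left(T\right)\geq0$, and this is never established: you list candidate tools and assert that $b<2a$ and small $c$ ``should make it go through,'' which is a conjecture, not a proof. Moreover, the one concrete relation you extract from (\ref{eq:rhodifferentialequation}) at $t=T$, namely $b\left(-u_{L}+\bar{\triangle}_{L}F_{L}\left(T\right)\right)\geq c\left(1+L_{T}\right)$, is a \emph{lower} bound on the state-$L$ loss term, whereas $U_{R}\left(T\right)=p_{T}\left(u_{H}-\bar{\triangle}_{H}\right)+\left(1-p_{T}\right)\left(u_{L}-F_{L}\left(T\right)\bar{\triangle}_{L}\right)\geq0$ requires an \emph{upper} bound on $-u_{L}+F_{L}\left(T\right)\bar{\triangle}_{L}$ relative to $L_{T}\left(u_{H}-\bar{\triangle}_{H}\right)$; the tool points in the wrong direction. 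Note also that the only hypothesis specific to this lemma, $p_{0}>p^{L}$, controls $U_{R}\left(0\right)$ and has no direct purchase at $t=T$, so your reduction concentrates all of the difficulty precisely where the hypothesis is least useful.

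The paper's own proof avoids the endpoint entirely: it uses the indifference built into the Random Stopping Strategy to identify the continuation value on the randomisation region with $U_{R}\left(t\right)$ and writes $V\left(t\right)=U_{R}\left(t\right)-ct=U_{R}\left(0\right)$, i.e.\ the payoff from stopping with $R$, net of accumulated cost, is constant along the path, so $U_{R}\left(t\right)=U_{R}\left(0\right)+ct\geq U_{R}\left(0\right)>0$ for all $t\leq T$ and $\sigma\equiv0$ follows in two lines; positivity at $t=0$ --- exactly what $p_{0}>p^{L}$ delivers --- is all that is needed. To complete your version you must actually prove $\Xi\left(T\right)\leq U_{R}\left(0\right)$, or else recast the comparison with the safe action conditionally on reaching time $t$, as the paper does, so that the problem reduces to the sign of $U_{R}$ at $t=0$ rather than at $T$.
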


\begin{proof}
Let $V\left(t\right)$ be the value associated with the mixed strategy characterised in \Cref{lemma:randomisedstoppingequilibriumcondition}.
Then, 
$$
V\left(t\right)
=
U_{R}\left(t\right)-ct
=
U_{R}\left(0\right)
$$
because the player's payoffs are the same at each time when she is randomising between continuing and stopping (to take action $R$).
Since 
$
p^{L}
<
p_{0},
$
we have 
$U_{R}\left(0\right)>0$.
Hence, at each $t<T$, we have $U_{R}\left(t\right)>0$.
As a result, $\sigma\left(t\right)=0$ for $\forall t \in \left[0,T\right]$.
\end{proof}

This completes the proof of \Cref{lemma:randomisedstoppingequilibriumcondition}.

\end{proof}

\paragraph{Step 5}

In the previous step, 
I consider the situation that the player starts randomisation at time $t=0$. 
However, it is possible that the player strictly prefers to acquire information for a certain time period and then starts randomisation.
This step shows the existence of the equilibrium where players use a mixed strategy
$\left(\hat{\rho},\hat{\sigma}\right)$ such that
(1) 
$\hat{\rho}\left(t\right)=0$ for $t \leq \hat{T}$,
(2) 
$
\frac{d\hat{\rho}\left(t\right)}{dt}\mid_{t\geq \hat{T}}>0,
$
(3)
$
\hat{T} \in \left[\hat{T}_{l}, \hat{T}_{r}\right]
$
and
(4)
$\hat{\sigma}\left(t\right)=0$ for $\forall t\geq 0$.
That is, a randomised stopping time strategy such that the players acquire information with probability one until some time $\hat{T}>0$ and then start randomising between stopping and acquiring information.
I first define three parameters
$
T_{l},
$
$
T_{r},
$
$
p^{*}
$
and show their existence.
Then, I show the conditions that $\hat{\rho}$ satisfies in equilibrium and under what conditions such equilibrium exists.

Let 
\begin{align}
\label{eq:defineTr}
T_{r}
:=
\min 
\left\{
t:
\frac{p_{0}}{1-p_{0}}
  e^{\left(b-a\right)t}
=
\frac{b\left(-u_{L}\right)-c}{c+ae^{-at}\bar{\triangle}_{H}}
\right\}
\end{align}
and
\begin{align}
\label{eq:defineTl}
T_{l}
:=
\min 
\left\{
t:
\frac{p_{0}}{1-p_{0}}
  e^{\left(b-a\right)t} 
 =
\frac{-u_{L}}{u_{H}-\left(1-e^{-at}\right)\bar{\triangle}_{H}}
\right\}
.
\end{align}

\begin{lemma}
\label{lemma:TrandTldefinitionandexistence}
If $p_{0}<\tilde{p}$,
then, $T_{r}>0$ exists.
If 
$
p_{0}< p^{L}
$,
then,
$T_{l}>0$ 
exists.
\end{lemma}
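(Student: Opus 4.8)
The plan is to read each of the two defining equations in (\ref{eq:defineTr}) and (\ref{eq:defineTl}) as the statement that a strictly increasing, unbounded curve meets a bounded curve, and to extract the smallest intersection point from the intermediate value theorem together with closedness of the relevant level set. Write $\phi(t):=\frac{p_{0}}{1-p_{0}}e^{(b-a)t}$ for the common left-hand side. Since $b>a$, the function $\phi$ is continuous and strictly increasing, with $\phi(0)=\frac{p_{0}}{1-p_{0}}$ and $\phi(t)\to\infty$ as $t\to\infty$; this is the only feature of $\phi$ the argument uses.

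For $T_{r}$, I would set $\psi_{r}(t):=\frac{b(-u_{L})-c}{c+ae^{-at}\bar{\triangle}_{H}}$ and first record that it is a well-behaved positive function: the denominator is positive, and the numerator is positive because $c$ is small enough that $\tilde{p}>0$, equivalently $b(-u_{L})>c$ (recall $\frac{\tilde{p}}{1-\tilde{p}}=\frac{-bu_{L}-c}{a\bar{\triangle}_{H}+c}$). Then $\psi_{r}$ is continuous, increasing, and bounded above by $\frac{b(-u_{L})-c}{c}$, and $\psi_{r}(0)=\frac{-bu_{L}-c}{a\bar{\triangle}_{H}+c}=\frac{\tilde{p}}{1-\tilde{p}}$, so the hypothesis $p_{0}<\tilde{p}$ is precisely $\phi(0)<\psi_{r}(0)$. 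Because $\phi$ diverges while $\psi_{r}$ stays bounded, some $t^{\ast}$ satisfies $\phi(t^{\ast})>\psi_{r}(t^{\ast})$, so the continuous function $\phi-\psi_{r}$ changes sign on $(0,t^{\ast})$ and vanishes somewhere there. The set $\{t\ge 0:\phi(t)=\psi_{r}(t)\}$ is closed (a preimage of $\{0\}$), nonempty, and bounded below, hence has a minimum, which is $T_{r}$; and since $\phi(0)-\psi_{r}(0)<0$ strictly, continuity rules out $t=0$ together with a right-neighbourhood of it, so $T_{r}>0$.

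For $T_{l}$, I would set $\psi_{l}(t):=\frac{-u_{L}}{u_{H}-(1-e^{-at})\bar{\triangle}_{H}}$ and invoke \Cref{ass:RactionisbetterinstateH}, which gives $u_{H}-\bar{\triangle}_{H}>0$ (and with \Cref{ass:first>simul>second}, $u_{H}>\bar{\triangle}_{H}>0$): since $1-e^{-at}\in[0,1)$, the denominator lies in $(u_{H}-\bar{\triangle}_{H},u_{H}]$, so $\psi_{l}$ is well defined, continuous, positive, increasing, and bounded above by $\frac{-u_{L}}{u_{H}-\bar{\triangle}_{H}}$. Moreover $\psi_{l}(0)=\frac{-u_{L}}{u_{H}}=\frac{p^{L}}{1-p^{L}}$, so $p_{0}<p^{L}$ is exactly $\phi(0)<\psi_{l}(0)$, and the identical intermediate-value-plus-closedness argument produces a smallest $t$ with $\phi(t)=\psi_{l}(t)$, which is strictly positive because the inequality at $t=0$ is strict; this is $T_{l}$.

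I do not expect a genuine obstacle here. The only points that need care are the sign bookkeeping---deducing $b(-u_{L})>c$ from $\tilde{p}>0$ for the numerator of $\psi_{r}$, and $u_{H}-\bar{\triangle}_{H}>0$ from \Cref{ass:RactionisbetterinstateH} for the denominator of $\psi_{l}$---and the minor remark that writing ``$\min$'' rather than ``$\inf$'' is legitimate because the level sets in (\ref{eq:defineTr}) and (\ref{eq:defineTl}) are closed. Everything else is just the intermediate value theorem applied to $\phi-\psi_{r}$ and $\phi-\psi_{l}$ on a bounded interval on which each changes sign.
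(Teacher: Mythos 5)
Your proof is correct, and it supplies exactly the argument the paper leaves implicit (the paper states \Cref{lemma:TrandTldefinitionandexistence} without a written proof): the strictly increasing, unbounded left-hand side $\frac{p_0}{1-p_0}e^{(b-a)t}$ starts strictly below a continuous, bounded right-hand side whose value at $t=0$ is precisely $\frac{\tilde{p}}{1-\tilde{p}}$ (resp. $\frac{p^{L}}{1-p^{L}}$), so the intermediate value theorem plus closedness of the level set yields a smallest, strictly positive crossing time. Your side conditions are also the right ones: $b(-u_{L})>c$ follows from $c$ small (equivalently $\tilde{p}>0$), and $u_{H}-\bar{\triangle}_{H}>0$ from \Cref{ass:RactionisbetterinstateH} keeps the denominator defining $T_{l}$ positive.
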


Let
\begin{align}
\label{eq:pstar}
\frac{p^{*}\left(t\right)}{1-p^{*}\left(t\right)}
:=
\frac{\frac{c}{b}}{u_{H}-\frac{1}{2}\bar{\triangle}_{H}-\frac{c}{a}+J_{2}\left(t\right)}
\end{align}
where
$$
J_{2}\left(t\right)
=
\left[\frac{1}{2}e^{-at}\bar{\triangle}_{H}+\frac{c}{a}-\left(-u_{L}-\frac{c}{b}\right)\frac{1}{L_{t}}\right]e^{-at}
.
$$
Let
$$
\underline{p}^{*}\left(p_{0}\right)
:=
p^{*}\left(T_{r}\right)
$$
be
$
p^{*}\left(t\right)
$ 
evaluated at $t=T_{r}$.
I denote it as $\underline{p}^{*}\left(p_{0}\right)$ because $T_{r}$ depends on $p_{0}$.
Let $p^{*}$ be the fixed point such that 
$
p^{*}
=
\underline{p}^{*}\left(p^{*}\right).
$

\begin{lemma}
\label{lemma:pstarexistence}
Suppose $c$ is sufficiently small.
There exists a
$
p^{*}
<
p^{L}
$
such that 
$
p^{*}
=
\underline{p}^{*}\left(p^{*}\right).
$
We have
$
p^{*}<p_{0} 
$
if and only if
$
\underline{p}^{*}\left(p_{0}\right)
<
p_{0}.
$
\end{lemma}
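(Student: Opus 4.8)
The plan is to reduce the statement to locating and counting the zeros of the scalar function $f(p_{0}):=\underline{p}^{*}(p_{0})-p_{0}$: existence of a root $p^{*}<p^{L}$ will come from the intermediate value theorem, and the ``if and only if'' will follow once $f$ is shown to be strictly decreasing. The first task is to put $\underline{p}^{*}$ in a workable form. By \Cref{lemma:TrandTldefinitionandexistence}, $T_{r}(p_{0})$ exists for every $p_{0}<\tilde{p}$ (and $p^{L}<\tilde{p}$ for $c$ small, so $\underline p^{*}$ is defined on a neighbourhood of $p^{L}$); since in (\ref{eq:defineTr}) the left side $\tfrac{p_{0}}{1-p_{0}}e^{(b-a)t}$ is strictly increasing in $t$ and the right side is increasing but bounded, the first crossing is transversal, so $T_{r}(\cdot)$ is $C^{1}$ by the implicit function theorem and $\underline{p}^{*}(p_{0})=p^{*}(T_{r}(p_{0}))$ is continuous via (\ref{eq:pstar}). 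Substituting the defining relation $L_{T_{r}}=\tfrac{b(-u_{L})-c}{c+ae^{-aT_{r}}\bar{\triangle}_{H}}$ into $J_{2}$ and using $-u_{L}-\tfrac{c}{b}=\tfrac{b(-u_{L})-c}{b}$ collapses $J_{2}(T_{r})$ to $\big(\tfrac12-\tfrac ab\big)\bar{\triangle}_{H}y^{2}+c\big(\tfrac1a-\tfrac1b\big)y$ with $y:=e^{-aT_{r}}\in(0,1)$, so that $\tfrac{\underline{p}^{*}(p_{0})}{1-\underline{p}^{*}(p_{0})}=\tfrac{c/b}{D(y)}$ where $D(y)=u_{H}-\tfrac12\bar{\triangle}_{H}-\tfrac ca+\big(\tfrac12-\tfrac ab\big)\bar{\triangle}_{H}y^{2}+c\big(\tfrac1a-\tfrac1b\big)y$. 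Using $b<2a$ and $c$ small, $D$ is minimised on $(0,1)$ at $y=1$ with $D(1)\approx u_{H}-\tfrac ab\bar{\triangle}_{H}>u_{H}-\bar{\triangle}_{H}>0$ by \Cref{ass:RactionisbetterinstateH}, so $\underline{p}^{*}$ is a genuine belief and $\underline{p}^{*}(p_{0})\le\tfrac{c/b}{D(1)}=:\bar{q}_{c}\to 0$ as $c\to 0$, uniformly in $p_{0}$.

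Next I would determine the sign of $f$ at the ends of $(0,p^{L}]$. As $p_{0}\to 0^{+}$ the left side of (\ref{eq:defineTr}) tends to $0$ pointwise in $t$ while the right side stays bounded away from $0$, so $T_{r}(p_{0})\to\infty$, hence $y\to 0$, hence $D(y)\to u_{H}-\tfrac12\bar{\triangle}_{H}-\tfrac ca>0$ and $\underline{p}^{*}(p_{0})$ tends to a strictly positive constant; thus $f(p_{0})>0$ for all sufficiently small $p_{0}>0$. At $p_{0}=p^{L}$ the uniform bound gives $\underline{p}^{*}(p^{L})\le\bar{q}_{c}\to 0$ while $p^{L}$ is a fixed positive number, so $f(p^{L})<0$ once $c$ is small. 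Applying the intermediate value theorem to $f$ on $[\varepsilon,p^{L}]$ for small $\varepsilon$ yields $p^{*}\in(0,p^{L})$ with $f(p^{*})=0$, i.e.\ $p^{*}=\underline{p}^{*}(p^{*})$.

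Finally I would establish strict monotonicity of $f$, which gives both uniqueness of $p^{*}$ and the equivalence. Writing $\underline{p}^{*}=\tfrac{c/b}{D(y)+c/b}$ gives $\tfrac{d}{dp_{0}}\underline{p}^{*}=-\tfrac{c/b}{(D(y)+c/b)^{2}}D'(y)\tfrac{dy}{dp_{0}}$; since $D(y)+c/b$ is bounded below and $D'(y)$ is bounded, this is $O(c)$ provided $\tfrac{dy}{dp_{0}}$ stays bounded. Differentiating (\ref{eq:defineTr}) implicitly, and using transversality of the first crossing so that the relevant bracket is positive, gives $\tfrac{dT_{r}}{dp_{0}}<0$ and $\tfrac{dy}{dp_{0}}=-a\,y\,\tfrac{dT_{r}}{dp_{0}}$, in which the a priori unbounded factor $y\,e^{(b-a)T_{r}}$ equals $e^{(b-2a)T_{r}}$ and hence stays bounded — and vanishes as $p_{0}\to 0$ — precisely because $b<2a$. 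Therefore $\tfrac{d}{dp_{0}}\underline{p}^{*}=O(c)<1$ for $c$ small, so $f'=\tfrac{d}{dp_{0}}\underline{p}^{*}-1<0$; $f$ is strictly decreasing, $p^{*}$ is its unique zero, and $p^{*}<p_{0}\iff f(p_{0})<0\iff\underline{p}^{*}(p_{0})<p_{0}$. The main obstacle is this last step: controlling $\tfrac{dy}{dp_{0}}$ (equivalently $\tfrac{dT_{r}}{dp_{0}}$) uniformly near $p_{0}=0$, where $T_{r}\to\infty$; this is exactly where the standing hypothesis $b<2a$ does the work, converting the blowing-up $e^{(b-a)T_{r}}$ into the decaying $e^{(b-2a)T_{r}}$.
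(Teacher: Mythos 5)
Your proposal is correct and follows the same skeleton as the paper's argument: sign the difference $\underline{p}^{*}(p_{0})-p_{0}$ at the two ends ($>0$ near $p_{0}=0$ because $\underline{p}^{*}$ tends to a positive constant there, $<0$ at $p^{L}$ because $\underline{p}^{*}=O(c)$ uniformly), invoke the intermediate value theorem for existence, and get the ``if and only if'' from monotonicity. Where you genuinely depart from the paper is the monotonicity step: the paper simply asserts that $\underline{p}^{*}(\cdot)$ is decreasing in $p_{0}$, whereas you prove the weaker but sufficient statement that $f(p_{0})=\underline{p}^{*}(p_{0})-p_{0}$ is strictly decreasing via a derivative bound $\frac{d}{dp_{0}}\underline{p}^{*}=O(c)<1$, after the useful reduction of $J_{2}(T_{r})$ (using the defining equation of $T_{r}$ in (\ref{eq:defineTr})) to a quadratic in $y=e^{-aT_{r}}$, which also makes transparent why $b<2a$ and small $c$ are needed. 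Two spots deserve tightening rather than being gaps in the idea. First, ``LHS strictly increasing, RHS increasing and bounded'' does not by itself give transversality of the first crossing (two increasing curves can be tangent); it does hold here because $h(t):=L_{0}ce^{(b-a)t}+L_{0}a\bar{\triangle}_{H}e^{(b-2a)t}-(b(-u_{L})-c)$ is convex and negative at $t=0$, so its unique zero is simple, and that is the cleaner justification. Second, your $O(c)$ bound needs the implicit-function denominator $h'(T_{r})$ to be bounded away from zero uniformly in $p_{0}$ (and in small $c$), not merely positive; this is true — at the root one has $e^{-aT_{r}}$ well below the critical value $\frac{(b-a)c}{a(2a-b)\bar{\triangle}_{H}}$ once $c$ is small, which keeps $h'(T_{r})$ of order $(b-a)(b(-u_{L})-c)$ — but it should be stated, since ``positive'' alone does not control $\frac{dy}{dp_{0}}$. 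As a side benefit of your own reduction, that same inequality on $y$ shows $D'(y)>0$ at the root, and combined with $\frac{dy}{dp_{0}}>0$ this yields exactly the paper's claim that $\underline{p}^{*}$ is decreasing, so the two routes reconcile.
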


\begin{proof}

When
$
0<p_{0}<p^{L},
$
$
\underline{p}^{*}\left(\cdot\right)
$
decreases in 
$
p_{0}
$.
When 
$
p_{0} \rightarrow 0,
$
$
p_{0}<\underline{p}^{*}\left(p_{0}\right)<p^{L}.
$
When 
$
p_{0} \rightarrow p^{L}
$
and $c$ sufficiently small, 
we have
$
\underline{p}^{*}\left(p_{L}\right)<\underline{p}^{*}\left(0\right)< p^{L}.
$
Therefore, there exists $p^{*}<p^{L}$ such that 
$
p^{*}
=
\underline{p}^{*}\left(p^{*}\right)
$.
Since 
$
\underline{p}^{*}\left(\cdot\right)
$
decreases in $p_{0}$,
$
p^{*}<p_{0} 
$
if and only if
$
\underline{p}^{*}\left(p_{0}\right)
<
p_{0}.
$
\end{proof}

Next, I show the existence of the equilibrium and the conditions $\hat{\rho}$ satisfies in equilibrium.
Suppose player $j$ uses the mixed strategy 
$\left(\hat{\rho},\hat{\sigma}\right)$ such that
(1) 
$\hat{\rho}\left(t\right)=0$ for $t \leq \hat{T}$,
(2) 
$
\frac{d\hat{\rho}\left(t\right)}{dt}\mid_{t\geq \hat{T}}>0,
$
(3)
$
\hat{T} \in \left[\hat{T}_{l}, \hat{T}_{r}\right]
$
and
(4)
$\hat{\sigma}\left(t\right)=0$ for $\forall t\geq 0$.
Given the assumption that $\hat{\rho}\left(t\right)=0$ for $t \leq \hat{T}$ and 
$\hat{\rho}\left(t\right)>0$ for $t > \hat{T}$,
consider time $\hat{T}$ as a new time $0$ and denote the time line starting from $\hat{T}$ as $\tau$.
Let $\tau:=t-\hat{T}$ and
$\lambda\left(\tau\right):=\hat{\rho}\left(\tau+\hat{T}\right)$.
Let
$q_{\tau}=p_{\tau+\hat{T}}$
be the belief
and 
let $\hat{U}_{R}\left(\tau\right)$ be
player $i$'s payoff from taking action $R$ at time $\tau$, then,
\begin{align*}
\hat{U}_{R}\left(\tau\right)
=
&
q_{\tau} \left[\left(1-\hat{F}_{H} \left(\tau\right)\right)\eta+\hat{F}_{H} \left(\tau\right)\left(u_{H}-\bar{\triangle}_{H}\right)\right]+\left(1-q_{\tau} \right)\left[u_{L}-\hat{F}_{L} \left(\tau\right)\bar{\triangle}_{L}\right]
\end{align*}
where
$$
\hat{F}_{H} \left(\tau\right)
=
1-e^{-a\tau-a\hat{T}}+e^{-a\tau-a\hat{T}}\lambda\left(\tau\right),
$$
$$
\hat{F}_{L} \left(\tau\right)
=
\int_{0}^{\tau+\hat{T}}e^{-bs}\lambda\left(s\right)ds
$$
and
$
\eta
:=
u_{H}-\left(1-e^{-a\hat{T}}\right)\bar{\triangle}_{H}.
$
The intuition is that when considering time $\hat{T}$ as a new time $0$, player $i$'s problem is essentially the same as in step 4 with $\eta$ being the payoff for the first action $R$ taker instead of $u_{H}$.

After time $\hat{T}$, the equilibrium condition requires player $i$ being indifferent between acquiring the information and taking action $R$ at each time instant $\tau>0$.
That is,
\begin{align*}
c
=&
\left(1-q_{\tau} \right)b\left[-\left(u_{L}-\hat{F}_{L}\left(\tau\right)\bar{\triangle}_{L}\right)\right]
\\
& 
-q_{t} \frac{d\hat{F}_{L}\left(\tau\right)}{d\tau}\left(\eta-\left(u_{H}-\bar{\triangle}_{H}\right)\right)-\left(1-q_{\tau} \right)\frac{d\hat{F}_{L}\left(\tau\right)}{d\tau}\bar{\triangle}_{L}.
\end{align*}

After plugging in the expressions of 
$
\hat{F}_{H}\left(t\right)
$
and 
$
\hat{F}_{H}\left(t\right),
$
we have
\begin{align}
\label{eq:lambdadifferentialequation}    
\frac{d\lambda\left(\tau\right)}{d\tau}
=&
\frac{b\left(-u_{L}\right)-c-\hat{L}_{\tau} c+b\bar{\triangle}_{L}\intop_{0}^{\tau}e^{-bs}\frac{d\lambda\left(s\right)}{ds}ds-\hat{L}_{\tau} \bar{\triangle}_{H}ae^{-a\tau}\left(1-\lambda\left(t\right)\right)}{\hat{L}_{\tau} \left(\eta-\left(u_{H}-\bar{\triangle}_{H}\right)\right)e^{-a\tau}+\bar{\triangle}_{L}e^{-b\tau}}
\end{align}
where $\hat{L}_{\tau} =\frac{q_{\tau} }{1-q_{\tau} }$.
To have $\frac{d\lambda\left(s\right)}{ds}>0$, we need
\begin{align*}
\hat{L}_{\tau} 
& <
\frac{b\left(-u_{L}\right)-c+b\bar{\triangle}_{L}\intop_{0}^{\tau}e^{-bs}\frac{d\lambda\left(s\right)}{ds}ds}{c+\left(\eta-\left(u_{H}-\bar{\triangle}_{H}\right)\right)ae^{-a\tau}\left(1-\lambda\left(\tau\right)\right)}.
\end{align*}
That is,
\begin{align}
\label{eq:conditionforrhohat}
L_{0}e^{\left(b-a\right)t} 
& <
\frac{b\left(-u_{L}\right)-c+b\bar{\triangle}_{L}\intop_{0}^{t}e^{-bs}\frac{d\hat{\rho}\left(s\right)}{ds}ds}{c+\bar{\triangle}_{H}ae^{-at}\left(1-\hat{\rho}\left(t\right)\right)}
\end{align}
for all $t>0$.
Equation (\ref{eq:conditionforrhohat}) is derived by substituting in $\tau=t-\hat{T}$, 
$\lambda\left(\tau\right)=\hat{\rho}\left(t\right)$ and 
$
\eta
=
u_{H}-\left(1-e^{-a\hat{T}}\right)\bar{\triangle}_{H}
$.
The existence of an increasing function $\hat{\rho}\left(\cdot\right)$ has been shown in  \Cref{lemma:existenceofrho}.
that satisfies (\ref{eq:conditionforrhohat}).

Next, I characterise the condition that $\hat{T}$ satisfies in equilibrium.
I am going to show that there exists an interval 
$
\left[\hat{T}_{l},\hat{T}_{r}\right]
$
such that an equilibrium exists when $\hat{T} \in \left[\hat{T}_{l},\hat{T}_{r}\right]$.
The idea is that before time $\hat{T}$, player $i$ must strictly prefer to acquire the signal and after time $\hat{T}$, player $i$ is indifferent between acquiring the signal and taking action $R$ at each time instant.
To have the player strictly prefer to acquire the signal before time $\hat{T}$, we need the marginal cost smaller than the marginal benefit associated with acquiring the signal.
That is,
\begin{align}
\label{eq:marginalcost=marginalbenefit}
L_{0} e^{\left(b-a\right)t}
<
\frac{b\left(-u_{L}\right)-c}{c+ae^{-at}\bar{\triangle}_{H}}
\end{align}
for all $t\leq \hat{T}$.
The upperbound $\hat{T}_{r}$ is the first time the marginal cost of acquiring the signal exceeds the marginal benefit.
Given \Cref{lemma:TrandTldefinitionandexistence}, 
if $p_{0}<\tilde{p}$, then,
$
\hat{T}_{r}=T_{r}>0
$
exists.

The lowerbound of $\hat{T}$ is the earliest time point at which the player is willing to start randomising.
That is,
if player $j$ starts randomising at time $\hat{T}$, player $i$ must prefer to start randomising at time $\hat{T}$ instead of taking action $S$.
%if the payoff from randomising is greater than $ $.
At time $\hat{T}$, the value associated with randomisation is the same as the value associated with taking action $R$ because of the opponent's randomisation.
Therefore, in order to have player $i$ prefer randomisation to taking action $S$ at time $\hat{T}$, we need
\begin{align*}
p_{\hat{T}}\left[u_{H}-\left(1-e^{-a\hat{T}}\right)\bar{\triangle}_{H}\right]+\left(1-p_{\hat{T}}\right)u_{L}
& \geq 
0.
\end{align*}
That is,
\begin{align}
\label{eq:conditionforlowerboundThat}
L_0 e^{\left(b-a\right)\hat{T}} 
& \geq
\frac{-u_{L}}{u_{H}-\left(1-e^{-a\hat{T}}\right)\bar{\triangle}_{H}}.
\end{align}
The lowerbound of $\hat{T}$ is the smallest $\hat{T}$ such that inequality (\ref{eq:conditionforlowerboundThat}) holds.
Given \Cref{lemma:TrandTldefinitionandexistence}, 
if $p_{0}< p^{L}$, then,
$
\hat{T}_{l}
=T_{l}
>0.
$
If 
$p_{0}\geq p^{L}$, then,
$
\hat{T}_{l}
=0.
$

The following lemma characterises the equilibrium when $p^{*}<p_{0}<\tilde{p}$.

\begin{lemma}
\label{lemma:RSTequilibriumexistence}
Suppose $b<2a$ and $c$ is sufficiently small.
If $p^{*}<p_{0}<\tilde{p}$,
there exists an equilibrium in mixed strategies
$
\left(\hat{\rho},\hat{\sigma}\right)
$
such that
$\hat{\sigma}\left(t\right)=0$
for $\forall t \in \mathbb{R}_{+}$
and $\hat{\rho}\left(\cdot\right)$ satisfies the following conditions:
\begin{enumerate}
    \item $\hat{\rho}\left(t\right)=0$ for $t \leq \hat{T}$
    \item $\hat{\rho}\left(t\right) \in \left( 0, 1 \right]$ for $t \in \left(\hat{T}, \bar{T}\right]$ 
    and 
    $\hat{\lambda}\left(\tau \right)=\hat{\rho}\left(\tau+\hat{T}\right)$ is a solution to the differential equation (\ref{eq:lambdadifferentialequation})
    with initial condition 
    $\hat{\lambda}\left(0 \right)=0$ where 
    $\frac{d\hat{\rho}}{dt}>0$ for $\forall t \in \left[\hat{T}, \bar{T}\right]$
    \item $\hat{\rho}\left(t\right)=1$ for $\forall t>\bar{T}$
    \item $\hat{T} \in \left[\hat{T}_{l},\hat{T}_{r}\right]$ 
    where 
    $\hat{T}_{r}=T_{r}$, 
    $\hat{T}_{l}=T_{l}$ if $p_{0}< p^{L}$ 
    and
    $\hat{T}_{l}=0$ if $p_{0}\geq p^{L}$.
\end{enumerate}
\end{lemma}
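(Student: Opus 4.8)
The proof is by \emph{guess and verify}, extending Step~4 to allow the randomisation phase to begin at a strictly positive time $\hat T$. The candidate symmetric profile has $\hat\sigma\equiv0$, $\hat\rho\equiv0$ on $[0,\hat T]$, $\hat\rho=\hat\lambda(\cdot-\hat T)$ strictly increasing on $(\hat T,\bar T]$ with $\hat\rho(\hat T)=0$, and $\hat\rho\equiv1$ on $(\bar T,\infty)$, for a suitable $\hat T\in[\hat T_l,\hat T_r]$. When $p_0\in[p^L,\tilde p)$ one takes $\hat T=0$ and the statement is \Cref{lemma:randomisedstoppingequilibriumcondition}; the genuinely new content is the range $p^*<p_0<p^L$, where $\hat T_l=T_l>0$. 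The first task is to check that the admissible window $[\hat T_l,\hat T_r]$ is non-empty. By \Cref{lemma:TrandTldefinitionandexistence}, $p_0<\tilde p$ gives $\hat T_r=T_r>0$ and $p_0<p^L$ gives $\hat T_l=T_l>0$; it remains to show $T_l\le T_r$. Since $T_l$ is the first time \eqref{eq:conditionforlowerboundThat} holds and $T_r$ the first time \eqref{eq:marginalcost=marginalbenefit} fails, $T_l\le T_r$ is equivalent to \eqref{eq:conditionforlowerboundThat} holding at $t=T_r$; substituting the belief $p_{T_r}$ pinned down by \eqref{eq:defineTr} turns this into $\underline p^*(p_0)<p_0$, which by \Cref{lemma:pstarexistence} is exactly $p_0>p^*$. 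I then fix $\hat T:=\hat T_l$, which also yields condition~4 of the lemma (any admissible $\hat T$ works, underlying the multiplicity in \Cref{lemma:multipleequilibria}).

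Next I construct $\hat\rho$ on $(\hat T,\bar T]$. Re-indexing time by $\tau=t-\hat T$, writing $q_\tau=p_{\tau+\hat T}$, $\hat\lambda(\tau)=\hat\rho(\tau+\hat T)$ and $\eta=u_H-(1-e^{-a\hat T})\bar{\triangle}_H$, the requirement that the player be indifferent between taking $R$ and acquiring for $d\tau$ longer at every $\tau>0$ yields the ODE \eqref{eq:lambdadifferentialequation} with $\hat\lambda(0)=0$. Existence of a solution on all of $[0,\infty)$ follows exactly as in \Cref{lemma:existenceofrho}, via the substitution $z(\tau)=\int_0^\tau e^{-bs}\hat\lambda(s)\,ds$ which linearises the equation into a second-order linear ODE with continuous coefficients. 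At $\tau=0$ the right-hand side of \eqref{eq:lambdadifferentialequation} is non-negative, strictly positive when $\hat T<T_r$ and zero only at $\hat T=T_r$ — this is precisely the defining inequality \eqref{eq:marginalcost=marginalbenefit} for $T_r$ read off at the start of randomisation — so $\hat\lambda'>0$ just to the right of $0$. The analogue of \Cref{lemma:rhoincreasing}, for which $b<2a$ and $c$ small are used in exactly the same way, then shows $\hat\lambda'>0$ persists as long as $\hat\lambda<1$, and the finite-hitting argument of Step~4 produces a finite $\bar\tau$ with $\hat\lambda(\bar\tau)=1$; set $\bar T=\hat T+\bar\tau$. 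This delivers conditions~1--3.

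It remains to verify the one-shot deviation principle. On $(\bar T,\infty)$ there is nothing to check, since $\hat\rho\equiv1$ means the player has acted by $\bar T$ absent a revealing signal. On $(\hat T,\bar T]$ the value is constant and equals $U_R(\hat T)=p_{\hat T}\big[u_H-(1-e^{-a\hat T})\bar{\triangle}_H\big]+(1-p_{\hat T})u_L$, which is $\ge0$ because $\hat T=\hat T_l$ makes \eqref{eq:conditionforlowerboundThat} hold; hence randomising weakly beats taking $S$, and by construction it ties with taking $R$, while taking $R$ with positive mass is ruled out by \Cref{lemma:rhocontinuous}. On $[0,\hat T]$ the opponent takes $R$ only after a breakthrough, so $U_R(t)=p_t[u_H-(1-e^{-at})\bar{\triangle}_H]+(1-p_t)u_L$; since $\hat T\le T_r$, inequality \eqref{eq:marginalcost=marginalbenefit} holds strictly for every $t<\hat T$, so acquiring strictly beats stopping at $R$; and since $p_0\ge p^*>\underline p$ keeps $p_t$ bounded away from $0$ and $c$ is small, the flow payoff $p_t a[u_H-(1-e^{-at})\bar{\triangle}_H]-c$ is positive, so the continuation value is at least $U_R(\hat T)\ge0$, whence acquiring also beats stopping at $S$. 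This completes the verification.

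The main obstacle is the non-emptiness step: one has to show that the fixed-point cutoff $p^*$, built from the auxiliary function $p^*(\cdot)$ and the relation $\underline p^*(p_0)=p^*\!\big(T_r(p_0)\big)$, is exactly the threshold at which $T_l(p_0)\le T_r(p_0)$ — i.e.\ at which there is any time at all that is simultaneously late enough for the player to prefer starting the randomisation to taking $S$, yet early enough that she strictly preferred to acquire information while the opponent was still acquiring. Carrying this out means tracking how the belief $p_{T_r}$ from \eqref{eq:defineTr} propagates through \eqref{eq:conditionforlowerboundThat} and reproduces the $J_2$-expression in \eqref{eq:pstar}; everything else (existence and monotonicity of $\hat\lambda$, the deviation checks) is a direct adaptation of Step~4, with $b<2a$ entering only through the analogue of \Cref{lemma:rhoincreasing}.
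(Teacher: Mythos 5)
There is a genuine gap, and it sits exactly where the cutoff $p^{*}$ has to do its work. In the paper, $p^{*}$ is \emph{not} the threshold for non-emptiness of the window $\left[\hat{T}_{l},\hat{T}_{r}\right]$; it is the time-$0$ participation constraint. The paper solves the HJB equation explicitly on $\left[0,\hat{T}\right]$ (the learning region), obtaining $W\left(t\right)=p_{t}\left(\left(u_{H}-\bar{\triangle}_{H}\right)-\frac{c}{a}+\frac{1}{2}\bar{\triangle}_{H}e^{-at}\right)-\left(1-p_{t}\right)\frac{c}{b}+p_{t}e^{at}J_{2}\left(\hat{T}\right)$ with $J_{2}$ pinned down by value matching $W\left(\hat{T}\right)=U_{R}\left(\hat{T}\right)$, and the requirement that the player prefers starting to acquire information at time $0$ over immediate $S$, $W\left(0\right)\geq 0$, is precisely $p_{0}\geq p^{*}\left(\hat{T}\right)$ as in (\ref{eq:priorconditionacquiresignalattimezero}); \Cref{lemma:lowerbound_is_underline_p_star} then shows this constraint is weakest at $\hat{T}=T_{r}$, where it reads $p_{0}\geq\underline{p}^{*}\left(p_{0}\right)$, i.e.\ $p_{0}\geq p^{*}$ by \Cref{lemma:pstarexistence}. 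Your proposal instead (a) asserts that $T_{l}\leq T_{r}$ ``turns into'' $\underline{p}^{*}\left(p_{0}\right)<p_{0}$ — but $T_{l}\leq T_{r}$ is the condition $U_{R}\left(T_{r}\right)\geq 0$, which does not reproduce the $J_{2}$-expression in (\ref{eq:pstar}); the two inequalities are algebraically different (one compares the stopping payoff at $T_{r}$ with $0$, the other compares the \emph{time-$0$ value of the whole plan}, net of expected information costs on $\left[0,T_{r}\right]$, with $0$), and the latter, not the former, is what fails for low priors and motivates the Mixed Learning equilibrium of Step~6; and (b) dismisses the comparison with $S$ on $\left[0,\hat{T}\right]$ by claiming the flow payoff $p_{t}a\left[u_{H}-\left(1-e^{-at}\right)\bar{\triangle}_{H}\right]-c$ is positive because $p_{0}>p^{*}$ and $c$ is small. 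This does not follow: $p^{*}$ is itself of order $c$ (it vanishes as $c\rightarrow 0$, as seen from (\ref{eq:pstar})), so for $p_{0}$ near the cutoff and $b$ close to $2a$ the flow term can be negative, and even where it is positive the relevant check is the sign of the integrated continuation value, which is exactly the computation of $W$ you have skipped.

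A secondary issue is your choice $\hat{T}:=\hat{T}_{l}$. The participation constraint $p_{0}\geq p^{*}\left(\hat{T}\right)$ is \emph{hardest} to satisfy for small $\hat{T}$ (since $J_{2}$ increases in $\hat{T}$ up to $T_{r}$ and $p^{*}\left(\cdot\right)$ decreases in $J_{2}$); for $p_{0}$ just above $p^{*}$ the profile with $\hat{T}=\hat{T}_{l}$ need not be an equilibrium, and the paper establishes existence by going to $\hat{T}=T_{r}$, where the constraint is weakest. Your construction of $\hat{\lambda}$ after $\hat{T}$ (re-indexing, the ODE (\ref{eq:lambdadifferentialequation}), monotonicity via the analogue of \Cref{lemma:rhoincreasing}, and the finite hitting time) does follow the paper, but without the explicit value-function step on $\left[0,\hat{T}\right]$ the role of $p^{*}$ — the very content distinguishing this lemma from \Cref{lemma:randomisedstoppingequilibriumcondition} — is not established.
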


\begin{proof}
I have shown the existence of an increasing $\hat{\rho}$ function, and (\ref{eq:lambdadifferentialequation}) guarantees that the players stop and take action $R$ with a positive rate at each time $t \in \left[\hat{T}, \bar{T}\right]$.
I have also derived the conditions for $\hat{T}$.
What left to show is the equilibrium exists when $p_{0}>p^{*}$.
Given \Cref{lemma:pstarexistence}, 
$
p^{*}<p_{0} 
$
if and only if
$
\underline{p}^{*}\left(p_{0}\right)
<
p_{0}.
$
What left to show is the equilibrium exists when $
\underline{p}^{*}\left(p_{0}\right)
<
p_{0}.
$.
Suppose player $j$ uses
$
\left(\hat{\rho},\hat{\sigma}\right)
$
strategy described in the lemma.
Since I have discussed what happens after time $\hat{T}$, I will characterise player $i$'s value $W\left(\cdot\right)$ at time $t<\hat{T}$.
The HJB equation is
\begin{align*}
\max
\left\{ 
p_{t} a\left[u_{H}-\left(1-e^{-at}\right)\bar{\triangle}_{H}-W\left(t\right)\right]
\right.
\\
+\left(1-p_{t} \right)b\left[-W\left(t\right)\right]-c+W'\left(t\right),
\\
\left.U\left(t\right)-W\left(t\right)\right\}  
& =
0
\end{align*}
If the learning region exists, in the learning region, we have
\begin{align*}
W\left(t\right)
= & 
p_{t} \left(\left(u_{H}-\bar{\triangle}_{H}\right)-\frac{c}{a}+\frac{1}{2}\bar{\triangle}_{H}e^{-at}\right)+\left(1-p_{t} \right)\left(-\frac{c}{b}\right) +p_{t} e^{at}J_{2}
\end{align*}
where $J_{2}$ is a constant. 
At time $\hat{T}$, 
we have 
$W\left(\hat{T}\right)
=
U\left(\hat{T}\right)=U_{R}\left(\hat{T}\right)$. 
This pins down $J_{2}$ as a function of $\hat{T}$, 
where 
$$
J_{2}\left(\hat{T}\right)
=
\left[\frac{1}{2}e^{-a\hat{T}}\bar{\triangle}_{H}+\frac{c}{a}-\left(-u_{L}-\frac{c}{b}\right)\frac{1}{L_{\hat{T}}}\right]e^{-a\hat{T}}
.$$
Player 1 acquires the signal at time $t=0$ if 
$W\left(0\right)\geq 0$, 
which requires 
\begin{align}
\label{eq:priorconditionacquiresignalattimezero}
L_0
\geq
\frac{\frac{c}{b}}{u_{H}-\frac{1}{2}\bar{\triangle}_{H}-\frac{c}{a}+J_{2}\left(\hat{T}\right)}
:=
\underline{L}\left(\hat{T}\right)
:=
\frac{p^{*}\left(\hat{T}\right)}{1-p^{*}\left(\hat{T}\right)}
.
\end{align}
The discussion shows that given the opponent uses the strategy described in the lemma,
player $i$'s best response is to use the Randomised Stopping Time Strategy if 
$
p_{0}\geq p^{*}\left(\hat{T}\right).
$

Since $\hat{T}$ can be any value in the interval $\left[\hat{T}_{l},\hat{T}_{r}\right]$, I use this to characterise the lowerbound of the prior $p_{0}$ such that the equilibrium described in \Cref{lemma:RSTequilibriumexistence} exists.

\begin{lemma}
\label{lemma:lowerbound_is_underline_p_star}
$
\underline{p}^{*}\left(p_{0}\right)
\leq
p^{*}\left(\hat{T}\right)
$
for 
$
\hat{T}\in\left[\hat{T}_{l},\hat{T}_{r}\right]
$.
\end{lemma}

\Cref{lemma:lowerbound_is_underline_p_star} is true because when $\hat{T}\leq\hat{T}_{r}$,
$J_{2}\left(\cdot\right)$ increases in its argument.
Since 
$
\frac{p^{*}\left(\hat{T}\right)}{1-p^{*}\left(\hat{T}\right)}
$
decreases in $J_{2}$,
given $\hat{T} \leq \hat{T}_{r}$,
we have 
$$
\underline{p}^{*}\left(p_{0}\right)
\leq
p^{*}\left(\hat{T}\right).
$$
\end{proof}

\Cref{lemma:lowerbound_is_underline_p_star} implies that for any $p_{0}>\underline{p}^{*}\left(p_{0}\right)$, there exists an equilibrium where the players use the Randomised Stopping Time Strategy as described in \Cref{lemma:RSTequilibriumexistence}.

\paragraph{Step 6}

Suppose player $j$ uses  the Mixed Learning Strategy 
$
\left(\rho^{ML},\sigma^{ML}\right)
$
such that
(1)
$\sigma^{ML}\left(t\right)=\beta>0$ for $\forall t \geq 0$,
(2)
$
\rho^{ML}\left(t\right) \in \left(0,1\right]
$ 
for 
$
t\in \left(\hat{T}^{\beta},\bar{T}^{\beta}\right],
$
(3)
$
\rho^{ML}\left(t\right)=1-\beta$ for $\forall t>\bar{T}^{\beta},
$
and
(4)
$
\hat{T}^{\beta} \in \left[\hat{T}_{l}^{\beta},\hat{T}_{r}^{\beta}\right]
$
where
$$
\hat{T}^\beta_{r}
:=
\min\left\{ t:
L_{0}e^{\left(b-a\right)t}
=
\frac{b\left(-u_{L}\right)-c}{c+\left(1-\beta\right)ae^{-at}\bar{\triangle}_{H}}
\right\}.
$$
and
$$
\hat{T}_{l}^{\beta}
:=
\min\left\{ t:L_{0}e^{\left(b-a\right)t}=\frac{-u_{L}}{e^{-aT}u_{H}-\left(1-\beta\right)\left(e^{-at}\right)\bar{\triangle}_{H}}\right\}.
$$
That is, she uses the Immediate action $S$ Strategy and the Randomised Stopping Time Strategy with probability $\beta \in \left(0,1\right)$ and $1-\beta$ such that $\hat{T}^{\beta}$ is the time at which player $j$ starts randomising conditional on she uses the Randomised Stopping Time Strategy.

Given the assumption that $\rho^{ML}\left(t\right)=0$ for $t \leq \hat{T}^{\beta}$ and 
$\rho^{ML}\left(t\right)>0$ for $t > \hat{T}^{\beta}$,
consider time $\hat{T}^{\beta}$ as a new time $0$ and denote the time line starting from $\hat{T}^{\beta}$ as $\tau$.
Let $\tau:=t-\hat{T}^{\beta}$
and
$\lambda^{ML}\left(\tau\right):=\rho^{ML}\left(\tau+\hat{T}\right)$.
Let
$
q_{\tau}^{ML}:=p_{\tau+\hat{T}}
$
be the belief and let
$
L_{\tau}^{ML}:=\frac{q_{\tau}^{ML}}{1-q_{\tau}^{ML}}
$
be the likelihood ratio.
Following the same discussion as in step 5, at time 
$
t\geq \hat{T}^{\beta},
$
that is,
$\tau\geq 0$,
player $i$ is indifferent between acquiring information and taking action $R$ if
\begin{align}
\label{eq:differentialequationlambda_ML}
\frac{d\lambda^{ML}\left(\tau\right)}{d\tau}
=&
\frac{b\left(-u_{L}\right)-c-L_{\tau}^{ML}c+b\bar{\triangle}_{L}\intop_{0}^{\tau}e^{-bs}\frac{d\lambda^{ML}\left(s\right)}{ds}ds-L_{\tau}^{ML}\bar{\triangle}_{H}ae^{-a\tau}\left(1-\lambda^{ML}\left(t\right)\right)}{L_{\tau}^{ML}\left(\eta^{ML}-\left(u_{H}-\bar{\triangle}_{H}\right)\right)e^{-a\tau}+\bar{\triangle}_{L}e^{-b\tau}}
\end{align}
where
$
\eta^{ML}:=u_{H}-\left(1-\beta\right)\left(1-e^{-a\hat{T}}\right)\bar{\triangle}_{H}.
$
The intuition is that when considering time $\hat{T}^{\beta}$ as a new time $0$, player $i$'s problem is essentially the same as in step 5 with $\eta^{ML}$ being the payoff for the first action $R$ taker instead of $\eta$.
The existence of an increasing function 
$
\rho^{ML}\left(\cdot\right)
$
can be shown following the same logic as in the proof of \Cref{lemma:existenceofrho}.

Next, I characterise the upperbound and lowerbound of $\hat{T}^{\beta}$.
Following similar argument as in step 5,
the upperbound $\hat{T}^{\beta}_{r}$ is the first time the marginal cost of acquiring information exceeds the marginal benefit.
At time $0\leq t<\hat{T}^{\beta}$, given player $j$'s strategy, player $i$'s payoff associated with taking action $R$ at time $t$ is
\begin{align*}
U_{R}^{\beta}\left(t\right)
= & 
\beta\left[p_{t} u_{H}+\left(1-p_{t} \right)u_{L}\right]
\\
& +
\left(1-\beta\right)\left[p_{t}\left(u_{H}-\left(1-e^{-at}\right)\bar{\triangle}_{H}\right)+\left(1-p_{t}\right)u_{L}\right].
\end{align*}
That is,
\begin{align*}
U_{R}^{\beta}\left(t\right)
= & 
p_{t} \left[\beta u_{H}+\left(1-\beta\right)\left(u_{H}-\left(1-e^{-at}\right)\bar{\triangle}_{H}\right)\right]+\left(1-p_{t} \right)u_{L}.
\end{align*}
The marginal cost of acquiring information is smaller than the marginal benefit if
\begin{align*}
\frac{b\left(-u_{L}\right)-c}{c+\left(1-\beta\right)ae^{-at}\bar{\triangle}_{H}}
& >L_{0} e^{\left(b-a\right)t}
%\\
%\frac{b\left( - u_{L} \right)-c}{ce^{\left(b-a\right)t}+\left(1-\beta\right)ae^{\left(b-2a\right)t}\left( u_{H} -\underline{u}_{H}^{R} \right)} 
%& \geq L_0
\end{align*}
Then, if 
$
L_{0}
<
\frac{b\left(-u_{L}\right)-c}{c+\left(1-\beta\right)a\bar{\triangle}_{H}}
$,
there exists a $\bar{T}^\beta_{r}>0$ such that
$$
\hat{T}^\beta_{r}
:=
\min\left\{ t:
L_{0}e^{\left(b-a\right)t}
=
\frac{b\left(-u_{L}\right)-c}{c+\left(1-\beta\right)ae^{-at}\bar{\triangle}_{H}}
\right\}.
$$
The lowerbound of $\hat{T}^{\beta}_{l}$ is the earliest time point at which the player is willing to start randomising instead of taking action $S$ given player $j$'s strategy.
Following similar argument as in step 5,
if $p_{0}<p^{L}$, then, 
there exists a $\bar{T}^\beta_{l}>0$ such that
$$
\hat{T}_{l}^{\beta}
:=
\min\left\{ t:L_{0}e^{\left(b-a\right)t}=\frac{-u_{L}}{e^{-aT}u_{H}-\left(1-\beta\right)\left(e^{-at}\right)\bar{\triangle}_{H}}\right\} .
$$

The following lemma characterises the equilibrium when $\underline{p}<p_{0}<p^{*}$.

tbc

\begin{lemma}
Suppose $b<2a$ and $c$ is sufficiently small.
If $\underline{p}<p_{0}<p^{*}$,
then, 
there exists an equilibrium where both players use the Mixed Learning Strategy 
$
\left(\rho^{ML},\sigma^{ML}\right)
$
such that
\begin{enumerate}
    \item $\sigma^{ML}\left(t\right)=\beta>0$ for $\forall t \geq 0$.
    \item $\rho^{ML}\left(t\right)=0$ for $t<\hat{T}^{\beta}$.
    \item 
    $
    \rho^{ML}\left(t\right) \in \left(0,1-\beta \right]
    $ 
    for 
    $
    t\in \left(\hat{T}^{\beta},\bar{T}^{\beta}\right]
    $
    and
    $\lambda^{ML}\left(\tau\right)=\rho^{ML}\left(\tau+\hat{T}\right)$
    is a solution to the differential equation (\ref{eq:differentialequationlambda_ML})
    with initial condition 
    $\lambda^{ML}\left(0 \right)=0$
    where 
    $
    \frac{d\rho^{ML}}{dt}>0
    $
    for $\forall t \in \left[ \hat{T}^{\beta},\bar{T}^{\beta} \right]$.
    \item 
    $
    \rho^{ML}\left(t\right)=1-\beta
    $ 
    for 
    $
    \forall t>\bar{T}^{\beta}.
    $
    \item 
    $
    \hat{T}^{\beta} \in \left[\hat{T}_{l}^{\beta},\hat{T}_{r}^{\beta}\right].
    $
\end{enumerate}
\end{lemma}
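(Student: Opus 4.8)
The plan is to prove this by the same guess-and-verify logic used in Steps 4 and 5, exploiting the observation already recorded in the text: once we shift the time origin to $\hat{T}^{\beta}$ and replace the first-mover prize $u_H$ by $\eta^{ML}=u_H-(1-\beta)(1-e^{-a\hat{T}})\bar{\triangle}_H$, player $i$'s continuation problem after $\hat{T}^{\beta}$ is formally identical to the Random Stopping problem of Step 5. So the only genuinely new ingredient is the determination of the mixing weight $\beta$ at time $0$ together with the feasible window $[\hat{T}_l^{\beta},\hat{T}_r^{\beta}]$ for the randomisation start.

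First I would establish the post-$\hat{T}^{\beta}$ behaviour. Fixing $\beta\in(0,1)$ and a candidate $\hat{T}^{\beta}$, I would invoke the argument behind \Cref{lemma:existenceofrho} (rewriting (\ref{eq:differentialequationlambda_ML}) as a second-order linear ODE in $z(\tau)=\int_0^\tau e^{-bs}\lambda^{ML}(s)\,ds$ with continuous coefficients) to obtain a unique solution $\lambda^{ML}$ on $[0,\infty)$ with $\lambda^{ML}(0)=0$; then, using $b<2a$ and $c$ small, the monotone-$\Phi$ argument of \Cref{lemma:rhoincreasing} gives $\tfrac{d\lambda^{ML}}{d\tau}>0$ as long as $\lambda^{ML}<1-\beta$, and the subsequent sign argument forces $\lambda^{ML}$ to reach $1-\beta$ at a finite $\bar{T}^{\beta}$, after which we set $\rho^{ML}\equiv 1-\beta$. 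Second, I would pin down the window for $\hat{T}^{\beta}$ exactly as in Step 5: the upper bound $\hat{T}_r^{\beta}$ is the first time the marginal cost of information exceeds its marginal benefit given the opponent's strategy, which yields the displayed formula and exists whenever $L_0<\tfrac{b(-u_L)-c}{c+(1-\beta)a\bar{\triangle}_H}$; the lower bound $\hat{T}_l^{\beta}$ is the earliest time at which randomising (equivalently, taking $R$) weakly dominates taking $S$, giving the displayed formula and existing when $p_0<p^L$ (and $\hat{T}_l^{\beta}=0$ otherwise). One then checks $\hat{T}_l^{\beta}\le \hat{T}_r^{\beta}$ for the relevant priors.

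Third, and this is the crux, I would determine $\beta$ from the time-$0$ mixing (indifference) condition. Repeating the HJB computation of Step 5 with $\eta^{ML}$ in place of $\eta$, the value $W(0)$ of the ``acquire then randomly stop'' branch satisfies $W(0)\ge 0$ iff $L_0\ge \underline{L}^{ML}(\hat{T}^{\beta},\beta)$, where $\underline{L}^{ML}$ is the Step-5 threshold with $\eta^{ML}$ substituted; for mixing we need $W(0)=0$, i.e.\ $L_0=\underline{L}^{ML}(\hat{T}^{\beta},\beta)$, and the Immediate $S$ branch is then a best response once $W$ is known to be the value function. I would run an intermediate-value argument in $\beta$: as $\beta\uparrow 1$ the opponent plays Immediate $S$ almost surely, so the player is essentially the single DM of \Cref{prop:singledmresult}, for whom information is strictly valuable at $p_0>\underline{p}$, i.e.\ $W(0)>0$; as $\beta\downarrow 0$ we are back in the Step-5 problem, where $W(0)<0$ for $p_0<p^*$ by \Cref{lemma:pstarexistence} and the definition of $p^*$. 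Since $W(0)$ varies continuously with $\beta$ (for a suitably selected admissible $\hat{T}^{\beta}$), there is $\beta\in(0,1)$ with $W(0)=0$, and symmetry makes this $\beta$ a genuine equilibrium mixing weight. Finally I would verify the full equilibrium by checking that the constructed value $W$, glued from the linear ODE solution on $[0,\hat{T}^{\beta})$, the value $U_R$ on $(\hat{T}^{\beta},\bar{T}^{\beta}]$, and the Immediate-$R$/$S$ envelope outside, is a viscosity solution of the HJB equation, exactly as in the proof of \Cref{prop:singledmresult} and Step 5, thereby ruling out deviations that stop earlier, stop later, or take $R$ off the prescribed support.

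The main obstacle I anticipate is the joint determination of $\beta$ and $\hat{T}^{\beta}$: $W(0)$ depends on both, so one must show that $\beta\mapsto \underline{L}^{ML}(\hat{T}^{\beta},\beta)$ can be made to decrease from $\tfrac{p^*}{1-p^*}$ to $\tfrac{\underline{p}}{1-\underline{p}}$ as $\beta$ sweeps $(0,1)$; monotonicity of $J_2$ in $\hat{T}^{\beta}$ (already exploited in \Cref{lemma:lowerbound_is_underline_p_star}) together with monotonicity of $\eta^{ML}$ in $\beta$ should deliver this, but the endpoint limits and the continuity of the selection $\hat{T}^{\beta}\mapsto$ value need care. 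A secondary difficulty is confirming that the increasing property of $\lambda^{ML}$ on $[0,\bar{T}^{\beta}]$ survives replacing $u_H$ by the smaller $\eta^{ML}$ in the denominator of (\ref{eq:differentialequationlambda_ML}); this is where $b<2a$ and smallness of $c$ re-enter, and it should follow essentially verbatim from the $\Phi$-monotonicity argument of \Cref{lemma:rhoincreasing} once one checks that the denominator stays positive on $[0,\bar T^\beta]$.
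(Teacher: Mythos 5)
Your proposal is correct and follows essentially the same route as the paper's Step 6: reduce the post-$\hat{T}^{\beta}$ phase to the Step-5 analysis with $\eta^{ML}$ replacing the first prize (existence and monotonicity of $\lambda^{ML}$ via the arguments of \Cref{lemma:existenceofrho} and \Cref{lemma:rhoincreasing}), define the window $\left[\hat{T}_{l}^{\beta},\hat{T}_{r}^{\beta}\right]$ from the marginal-cost/benefit and $R$-versus-$S$ conditions, and pin down $\beta$ by an intermediate-value argument between the $\beta\rightarrow 0$ (Step-5, $W\left(0\right)<0$ for $p_{0}<p^{*}$) and $\beta\rightarrow 1$ (single-DM, $W\left(0\right)>0$ for $p_{0}>\underline{p}$) endpoints, which is exactly the paper's \Cref{lemma:existenceofbeta} evaluated at $\hat{T}_{r}^{\beta}$. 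The only cosmetic difference is that the paper fixes the selection $\hat{T}^{\beta}=\hat{T}_{r}^{\beta}$ and compares threshold likelihood ratios $\underline{L}^{\beta}$ explicitly rather than phrasing the endpoints in terms of $W\left(0\right)$, which is what you do.
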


\begin{proof}
Suppose player $j$ uses  the Mixed Learning Strategy 
$
\left(\rho^{ML},\sigma^{ML}\right)
$.

I have shown that at time $t \geq \hat{T}^\beta$, player $i$ is indifferent between taking action $R$ and acquiring information for $dt$ longer
and that at time $0<t<\hat{T}^{\beta}$, player $i$ prefers acquiring information to taking action $R$.
What left is to show that if $\underline{p}<p_{0}<\underline{p}^{*}$, there exists a $\beta \in \left(0,1\right)$ such that player $i$ is indifferent between the Immediate action $S$ Strategy and the Randomised Stopping Time Strategy at time $0$.
Given \Cref{lemma:pstarexistence},
I show that such $\beta \in \left(0,1\right)$ exists if $\underline{p}<p_{0}<\underline{p}^{*}\left(p_{0}\right)$.

At time $t<\hat{T}^{\beta}$, player $i$'s value associated with acquiring information is
\begin{align*}
W^{\beta}\left(t\right)
=
&
p_{t}\left[u_{H}-\left(1-\beta\right)\bar{\triangle}_{H}+\frac{1}{2}\left(1-\beta\right)\bar{\triangle}_{H}e^{-at}-\frac{c}{a}\right]
\\
&
-\left(1-p_{t}\right)\frac{c}{b}+p_{t}e^{at}H\left(\hat{T}^{\beta}\right)
\end{align*}
where 
$$
H\left(\hat{T}^{\beta}\right)
=
\left[\frac{1}{2}\left(1-\beta\right)\bar{\triangle}_{H}e^{-a\hat{T}^{\beta}}+\frac{c}{a}-\frac{1}{L_{\hat{T}^{\beta}}}\left(-u_{L}-\frac{c}{b}\right)\right]e^{-a\hat{T}^{\beta}}
$$
is a constant.
The value of acquiring the information at time $t=0$ equals the value associated with taking action $S$ without acquiring the signal if 
$
W^{\beta}\left(0\right)
 =
0.
$
That is,
\begin{align*}
L_0
&=
\frac{\frac{c}{b}}{u_{H}-\frac{1}{2}\bar{\triangle}_{H}+\frac{1}{2}\beta\bar{\triangle}_{H}-\frac{c}{a}+H\left(\hat{T}^{\beta}\right)}
:=\underline{L}^{\beta}\left(\hat{T}^{\beta}\right).
\end{align*}
Then, 
$\beta$ can be pinned down by
$L_{0}=\underline{L}^{\beta}\left(\hat{T}^{\beta}\right)$.

\begin{lemma}
\label{lemma:existenceofbeta}
When 
$
\underline{p}
<
p_{0}
<
\underline{p}^{*}\left(p_{0}\right)
$, 
there exists a 
$
\beta\in\left(0,1\right)
$ 
such that 
$
L_{0}
=
\underline{L}^{\beta}\left(\hat{T}^{\beta}_{r}\right)
$. 
\end{lemma}

\begin{proof}
The proof uses the intermediate value theorem.
When $\beta=0$, we have 
$
\underline{L}^{\beta=0}\left(\hat{T}^{\beta=0}_{r}\right)
=
\underline{L}\left(\hat{T}_{r}\right)
.
$
Since 
$
p_{0}
<
\underline{p}^{*}\left(p_{0}\right),
$
we have 
$
\underline{L}^{\beta=0}\left(\hat{T}^{\beta=0}_{r}\right)
=
\underline{L}\left(\hat{T}_{r}\right)
>
L_{0}.
$

Next, I show that when $\beta=1$, we have 
$
\underline{L}^{\beta=1}\left(\hat{T}^{\beta=1}_{r}\right)
<
\underline{L}.
$
When
$\beta=1$, $\hat{T}^{\beta=1}_{r}$ satisfies
$$
L_{0}e^{\left(b-a\right)\hat{T}^{\beta=1}_{r}}
=
\frac{b\left( - u_{L} \right)-c}{c}
=
\bar{L}.
$$
Then,
$$
\underline{L}^{\beta=1}\left(\hat{T}_{r}^{\beta=1}\right)
=
\frac{\frac{c}{b}}{u_{H}-\frac{c}{a}+H\left(\hat{T}_{r}^{\beta=1}\right)}
$$
where 
$
H\left(\hat{T}_{r}^{\beta=1}\right)
=
\left(\frac{c}{a}-\frac{c}{b}\right)e^{-a\hat{T}_{r}^{\beta}}.
$
Then, we have
$$
\underline{L}^{\beta=1}\left(\hat{T}_{r}^{\beta=1}\right)\left[ u_{H} -\frac{c}{a} \right]+\left(\frac{c}{a}-\frac{c}{b}\right)e^{-a\hat{T}_{r}^{\beta}}\underline{L}^{\beta=1}\left(\hat{T}_{r}^{\beta=1}\right)
=
\frac{c}{b}.
$$
Since 
$
L_{0}e^{\left(b-a\right)\hat{T}^{\beta=1}_{r}}
=
\bar{L},
$
from the proof of \cref{prop:singledmresult} we know that 
$$
\underline{L}\left[ u_{H} -\frac{c}{a} \right]+\left(\frac{c}{a}-\frac{c}{b}\right)e^{-a\hat{T}_{r}^{\beta}}\left(\frac{\underline{L}}{L_{0}}\right)^{\frac{a}{b-a}}\underline{L}
=
\frac{c}{b}.
$$
Since $\underline{p}<p_{0}$,
we have 
$
\underline{L}^{\beta=1}\left(\hat{T}^{\beta=1}_{r}\right)
<
\underline{L}
$
and thus
$
\underline{L}^{\beta=1}\left(\hat{T}^{\beta=1}_{r}\right)
<
L_{0}.
$

Since 
$
\underline{L}^{\beta=1}\left(\hat{T}^{\beta=1}_{r}\right)
<
L_{0}
<
\underline{L}^{\beta=0}\left(\hat{T}^{\beta=0}_{r}\right)
$
there exists a $\beta \in \left(0,1\right)$ such that $L_{0}=\underline{L}^{\beta}\left(\hat{T}^{\beta}_{r}\right)$.
\end{proof}

Given \Cref{lemma:existenceofbeta}, we know that when 
$
\underline{p}
<
p_{0}
<
\underline{p}^{*}\left(p_{0}\right)
$,
there exists at least a pair $(\beta \in (0,1), \hat{T}_{r}^{\beta})$ such that the players play the Immediate action $S$ strategy with probability $\beta$ and player the Randomised Stopping Time Strategy with probability $1-\beta$.
\end{proof}

\section{Proof of \Cref{lemma:comparisonoftwoupperbounds} and \Cref{prop:howcutoffschangeswithcost}}

By definition, 
$
\frac{p^{M}}{1-p^{M}}
:=
\frac{-\left(u_{L}-\underline{\triangle}_{L}\right)}{u_{H}-\underline{\triangle}_{H}}
$
and
$
\frac{\bar{p}}{1-\bar{p}}
=
\frac{-u_{L}-\frac{c}{b}}{\frac{c}{b}}.
$
When 
$
\frac{c}{b}
<
\frac{-u_{L}\left(u_{H}-\underline{\triangle}_{H}\right)}
{u_{H}-\underline{\triangle}_{H}-\left(u_{L}-\underline{\triangle}_{L}\right)},
$
we have
$
p^{M}<\bar{p}.
$
\Cref{lemma:comparisonoftwoupperbounds} is shown.

The cutoff $\underline{p}$ satisfies
\begin{align}
\label{equ:p_lowerbardefinition2}
\left[u_{H}-\frac{c}{a}\right]\frac{\underline{p}}{1-\underline{p}}
+
K
\left(\frac{\underline{p}}{1-\underline{p}}\right)^{\frac{b}{b-a}}
=
\frac{c}{b}
\end{align}
where
$
K
=
\frac{c}{b}\left(\frac{b}{a}-1\right)
\left(\frac{\bar{p}}{1-\bar{p}}\right)^{1-\frac{b}{b-a}}.
$
When $c \rightarrow 0$, $\bar{p} \rightarrow 1$ and $\underline{p} \rightarrow 0$.

When $c \rightarrow 0$,
the cutoff
$
\frac{\tilde{p}}{1-\tilde{p}}
=\frac{-bu_{L}-c}{a\bar{\triangle}_{H}+c}
\rightarrow
\frac{-bu_{L}}{a\bar{\triangle}_{H}}
$
and hence 
$
\tilde{p}
\rightarrow
\frac{\frac{b}{a}\frac{-u_{L}}{\bar{\triangle}_{H}}}{1+\frac{b}{a}\frac{-u_{L}}{\bar{\triangle}_{H}}}
\in
\left(0,1\right).
$

\section{Proof of \Cref{prop:competition}}
\label{appen:competitionproof}

Suppose player $j$ uses the pure strategy
$
\left(
T^{PS}, S
\right)
$
where
$
T^{PS}
=
\frac{1}{a}
\log
\frac{\bar{\triangle}_{H}}{-\left(u_{H}-\bar{\triangle}_{H}\right)}.
$
I show that when the conditions in \Cref{prop:competition} are satisfied, player $i$'s best respond is to use the same pure strategy.
The method is `guess and verify'.
Given that player $j$ uses the pure strategy 
$
\left(
T^{PS}, S
\right)
$,
player $i$'s time $t$ payoff from taking $R$ is
\begin{align*}
U_{R}\left(t\right)
:=\begin{cases}
p_{t}\left[u_{H}-\left(1-e^{-at}\right)\bar{\triangle}_{H}\right]+\left(1-p_{t}\right)u_{L} 
&
t<T^{PS}
\\
p_{t}\left[u_{H}-\left(1-e^{-aT^{PS}}\right)\bar{\triangle}_{H}\right]+\left(1-p_{t}\right)u_{L} 
&
t\geq T^{PS}
\end{cases}
.
\end{align*}
Let 
$$
U
\left(
t
\right)
:=
\max
\left\{
U_{R}
\left(
t
\right),
0
\right\}
$$
be player $i$'s time $t$ payoff if she stops and takes an irreversible action.
Let $W\left(\cdot\right)$ be the value function associated with player $i$'s best response.
Then, the value function satisfies the following HJB equation
\begin{align}
\label{eq:HJBcompetition}
\max
\left\{ 
p_{t}a
\left[
\max
\left\{ 
u_{H}-\left(1-e^{-at}\right)\bar{\triangle}_{H}
,
0
\right\} 
-
W\left(t\right)
\right]
-
\left(1-p_{t}\right)b
W\left(t\right)
+W'\left(t\right)-c,
\right.
\nonumber
\\
\left.
U\left(t\right)-W\left(t\right)
\right\}  
& =0
.
\end{align}
Let
\begin{align*}
\frac{p^{NR}}{1-p^{NR}}
:=
\frac{\frac{c}{b}}{\frac{1}{2}\left(2u_{H}+\bar{\triangle}_{H}\right)-\frac{c}{a}-u^{S}+\psi\left(T^{PS}\right)}
\end{align*}
where
\begin{align*}
\psi\left(t\right)
:=
\left[\frac{1}{2}e^{-at}\bar{\triangle}_{H}+\frac{c}{a}-\left(-u_{L}-\frac{c}{b}\right)\frac{1-p_{t}}{p_{t}}\right]e^{-at}
\end{align*}
I prove the following lemma.

\begin{lemma}
When $c$ sufficiently small,
if 
$
p_{0}
\in
\left(
p^{NR},\tilde{p}
\right)
$,
the value function 
$
W\left(\cdot\right)
$
is
\begin{align*}
W\left(t\right)
=
\begin{cases}
W_{L}\left(t\right) 
&
t<T^{PS}
\\
0
&
t\geq T^{PS}
\end{cases}
\end{align*}
where
\begin{align*}
W_{L}\left(t\right)
=
p_{t}
\left(
\left(u_{H}^{R}-\bar{\triangle}_{H}\right)
-\frac{c}{a}+\frac{1}{2}\bar{\triangle}_{H}e^{-at}
\right)
-
\left(1-p_{t}\right)
\frac{c}{b}+p_{t}e^{at}
\psi
\left(T^{PS}\right).
\end{align*}
\end{lemma}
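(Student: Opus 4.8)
The plan is a guess-and-verify argument on the HJB equation~(\ref{eq:HJBcompetition}), in the style of the proofs of \Cref{prop:singledmresult} and of Step~5 in \Cref{appen:proofofmainresult}. The organising observation is that $T^{PS}$ is defined precisely so that $u_{H}-\bigl(1-e^{-aT^{PS}}\bigr)\bar{\triangle}_{H}=0$. First I would dispose of the region $t\ge T^{PS}$: there the opponent (playing $(T^{PS},S)$) never takes $R$ again, so $U_{R}(t)=(1-p_{t})u_{L}<0$ and hence $U(t)=0$; moreover even an $H$-revealing signal now delivers only $u_{H}-(1-e^{-aT^{PS}})\bar{\triangle}_{H}=0$, the safe payoff, so acquiring information is strictly dominated by taking $S$ because of the flow cost $c>0$. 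Hence $W\equiv 0$ on $[T^{PS},\infty)$, and on that range the first argument of the maximum in~(\ref{eq:HJBcompetition}) equals $-c<0$ while the stopping term vanishes, so the HJB holds.

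Next I would conjecture that the whole interval $[0,T^{PS})$ is the learning region. On it $u_{H}-(1-e^{-at})\bar{\triangle}_{H}\ge 0$ (which uses $u_{H}\ge 0$, forced by $T^{PS}\ge 0$), so the HJB collapses to the linear ODE
\begin{align*}
p_{t}a\bigl[u_{H}-(1-e^{-at})\bar{\triangle}_{H}-W(t)\bigr]-(1-p_{t})bW(t)+W'(t)-c=0 .
\end{align*}
Since $p_{t}a+(1-p_{t})b=-\pi_{t}'/\pi_{t}$, the integrating factor is $\pi_{t}$; integrating $\bigl(\pi_{t}W\bigr)'=c\,\pi_{t}-p_{0}a(u_{H}-\bar{\triangle}_{H})e^{-at}-p_{0}a\bar{\triangle}_{H}e^{-2at}$ and dividing by $\pi_{t}$ gives the displayed closed form up to one free constant, which the value-matching condition $W(T^{PS})=0$ pins down as $\psi(T^{PS})$. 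No smooth pasting is imposed, since $T^{PS}$ is not a free boundary but is handed to player $i$ by the opponent's strategy; consequently $W$ generically has a convex kink at $T^{PS}$.

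I would then verify that the candidate $W$ --- equal to $W_{L}$ on $[0,T^{PS})$ and to $0$ afterwards --- is a viscosity solution of~(\ref{eq:HJBcompetition}), hence the value function. On the smooth part $(0,T^{PS})$ the continuation branch vanishes by construction, so the only surviving requirement is $W_{L}(t)\ge U(t)=\max\{U_{R}(t),0\}$; on $(T^{PS},\infty)$ the continuation branch is $-c<0$ and $U-W=0$; at the kink $t=T^{PS}$ the superdifferential is empty, and for any slope $z\le 0$ in the subdifferential the continuation expression evaluated at $(W,W')=(0,z)$ equals $z-c<0$, so the supersolution inequality holds. Everything thus reduces to the inequalities $W_{L}(t)\ge 0$ and $W_{L}(t)\ge U_{R}(t)$ on $[0,T^{PS})$. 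At $t=0$, after substituting $\psi(T^{PS})$, the first is exactly $p_{0}\ge p^{NR}$, and the second is, via the marginal analysis at $t=0$, the inequality $(1-p_{0})b(-u_{L})> c+p_{0}a\bar{\triangle}_{H}$, i.e.\ $p_{0}<\tilde p$; these two bounds are precisely the prior window in the statement.

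The step I expect to be the real obstacle is showing that these inequalities propagate from $t=0$ to all of $[0,T^{PS})$ when $c$ is small. For $W_{L}\ge U_{R}$ the relevant comparison at time $t$ is $(1-p_{t})b(-u_{L})\ge c+p_{t}e^{-at}a\bar{\triangle}_{H}$; rewritten in the variables $e^{-at},e^{-bt}$ and with $c\to 0$ it reads $(1-p_{0})b(-u_{L})\ge p_{0}a\bar{\triangle}_{H}e^{(b-2a)t}$, whose right-hand side is monotone in $t$, so once it holds at $t=0$ it persists on $[0,T^{PS})$ for $c$ small --- a monotonicity argument in the spirit of \Cref{lemma:rhoincreasing}. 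The inequality $W_{L}\ge 0$ follows from a curvature comparison with the stopping envelope exactly as in Steps~(1)--(2) of the proof of \Cref{prop:singledmresult}: $W_{L}$ is convex, coincides with the stopping value at $T^{PS}$, and just to the left of $T^{PS}$ the stopping value falls faster than $W_{L}$. Assembling these pieces delivers the stated form of $W(\cdot)$ for $p_{0}\in(p^{NR},\tilde p)$.
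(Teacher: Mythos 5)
Your overall architecture is the same as the paper's: guess the candidate (learning region $[0,T^{PS})$, value $0$ afterwards, exploiting the identity $u_{H}-\left(1-e^{-aT^{PS}}\right)\bar{\triangle}_{H}=0$), obtain $W_{L}$ as the solution of the linear ODE in the learning region, and verify a viscosity solution of (\ref{eq:HJBcompetition}) region by region with a kink check at $T^{PS}$; your treatment of $t\geq T^{PS}$ and of the smooth stopping region coincides with the paper's verification.

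There is, however, a concrete gap at the boundary condition and in the left neighbourhood of $T^{PS}$. Direct substitution shows that the constant $\psi\left(T^{PS}\right)$ in the stated $W_{L}$ corresponds to value matching with the risky payoff, $W_{L}\left(T^{PS}\right)=U_{R}\left(T^{PS}\right)=\left(1-p_{T^{PS}}\right)u_{L}$ --- exactly the condition that pins down $J_{2}\left(\hat{T}\right)$ in Step 5 of \Cref{appen:proofofmainresult} --- and \emph{not} to your condition $W\left(T^{PS}\right)=0$; the constant implied by $W\left(T^{PS}\right)=0$ exceeds $\psi\left(T^{PS}\right)$ by $e^{-aT^{PS}}\frac{1-p_{T^{PS}}}{p_{T^{PS}}}\left(-u_{L}\right)$, so your derivation does not reproduce the displayed formula. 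This is not mere bookkeeping. With your constant, the ODE gives $\frac{d}{dt}\left[\pi_{t}W_{L}\left(t\right)\right]=c\pi_{t}-p_{0}ae^{-at}\left[u_{H}-\left(1-e^{-at}\right)\bar{\triangle}_{H}\right]$, and since the bracket vanishes at $T^{PS}$ and $W_{L}\left(T^{PS}\right)=0$, one gets $W_{L}'\left(T^{PS}\right)=c>0$; hence $W_{L}<0$ immediately to the left of $T^{PS}$ for every $c>0$, so the inequality $W_{L}\geq0$ that you rightly identify as necessary (the obstacle is $U=\max\left\{ U_{R},0\right\} $, a point the paper's own Step 1 does not address, as it compares only with $U_{R}$) cannot be verified on all of $[0,T^{PS})$, and taking $c$ small shrinks but does not remove the offending neighbourhood. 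The positive left slope also invalidates your kink bookkeeping: with left slope $c>0$ and right slope $0$ the nonempty set is the superdifferential, not the subdifferential, whereas the paper works with its own constant, computes $W_{L}'\left(T^{PS}\right)<0$ for small $c$ (its Step 3), and never checks $W\geq0$. To complete your proof you must either adopt the paper's boundary condition and redo its Steps 1 and 3, or explain explicitly how the safe-action obstacle is met near $T^{PS}$; as written, the argument has a hole precisely there.
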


\begin{proof}
Let 
\begin{align*}
H\left(t,W\left(t\right),W'\left(t\right)\right)
:=
&
p_{t}a\left[\max\left\{ u_{H}-\left(1-e^{-at}\right)\bar{\triangle}_{H},0\right\} -W\left(t\right)\right]
\\
&-
\left(1-p_{t}\right)bW\left(t\right)+W'\left(t\right)-c.
\end{align*}
I show that 
$
W\left(t\right)
$
is a viscosity solution of the HJB equation (\ref{eq:HJBcompetition}).
For all the points where 
$
W\left(t\right)
$
is differentiable:
if $t<T^{PS}$,
then $W\left(t\right)\geq U\left(t\right)$;
if $t>T^{PS}$,
then 
$H\left(t,W\left(t\right),W'\left(t\right)\right)\leq0$.
At the point where $W\left(t\right)$ is not differentiable, that is, $t=T^{PS}$, I show that
$H\left(t,W\left(t\right),z\right)\geq0$
for $z\in D^{+}$
where $D^{+}=\left[W_{L}'\left(T^{PS}\right),0\right]$
and 
$H\left(t,W\left(t\right),z\right)\leq0$
for $z\in D^{-}$
where $D^{-}=\emptyset$.

\paragraph{Step 1}
For $t<T^{PS}$,
\begin{align*}
W\left(t\right)-U\left(t\right)
&=
W_{L}\left(t\right)-U_{R}\left(t\right)
\\
&=
p_{t}\left[-\frac{1}{2}\left(\bar{\triangle}_{H}\right)e^{-at}+e^{at}\psi\left(T^{PS}\right)-\frac{c}{a}\right]+\left(1-p_{t}\right)\left(-u_{L}-\frac{c}{b}\right).
\end{align*}
When 
$p_{0}<\tilde{p}$, $\psi\left(t\right)$ increases in $t$.
Then,
\begin{align*}
W\left(t\right)-U\left(t\right)
&>
p_{t}\left[-\frac{1}{2}\left(\bar{\triangle}_{H}\right)e^{-at}+e^{at}\psi\left(t\right)-\frac{c}{a}\right]+\left(1-p_{t}\right)\left(-u_{L}-\frac{c}{b}\right)
=0
\end{align*}

\paragraph{Step 2}
When $t>T^{PS}$, 
$
W\left(t\right)=0
$ 
and
$
W'\left(t\right)=0.
$
Therefore,
$
H\left(t,W\left(t\right),W'\left(t\right)\right)
=
-c
<
0.
$

\paragraph{Step 3}
I first show that for sufficiently small $c$,
$
W_{L}'\left(T^{PS}\right)
<
0.
$
We have
\begin{align*}
W_{L}'\left(t\right)
=
&
\frac{dp_{t}}{dt}\left[u_{H}-\bar{\triangle}_{H}-\frac{c}{a}+\frac{1}{2}\bar{\triangle}_{H}e^{-at}\right]
-
p_{t}\frac{1}{2}a\bar{\triangle}_{H}e^{-at}
\\
&
+\frac{dp_{t}}{dt}\frac{c}{b}+\frac{dp_{t}}{dt}e^{at}\psi\left(T^{PS}\right)+p_{t}ae^{at}\psi\left(T^{PS}\right)
\end{align*}
and
if
$
\frac{c}{a}
<
\left(-u_{L}-\frac{c}{b}\right)\frac{1-p_{T^{PS}}}{p_{T^{PS}}}
$,
\begin{align*}
W_{L}'\left(T^{PS}\right)
=
&
\frac{dp_{t}}{dt}\mid_{t=T^{PS}}\left[\frac{c}{b}-\left(-u_{L}-\frac{c}{b}\right)\frac{1-p_{T^{PS}}}{p_{T^{PS}}}\right]+p_{T^{PS}}a\left[\frac{c}{a}-\left(u^{S}-u_{L}-\frac{c}{b}\right)\frac{1-p_{T^{PS}}}{p_{T^{PS}}}\right]
\\
&<0.
\end{align*}
When 
$
t=T^{PS}
$, 
$
H\left(t,W\left(t\right),W'\left(t\right)\right)=0
$.
Since 
$
H\left(t,W\left(t\right),z\right)
$
increases in 
$z$,
we have 
$
H\left(t,W\left(t\right),z\right)\geq0
$ 
for 
$
z\in D^{+}
$.

\end{proof}

\section{Multi-player extension: proof of \Cref{prop:N-playerequilibrium}}
\label{appen:multi-playerextension}

In this extension, I generalise the two-player model to a multi-player model.
There are $N>2$ players in this model.
The first $R$ taker gets the first prize $u_{\omega}$ in state $\omega \in \left\{H,L\right\}$ and all other $R$ takers get the second prize $u_{\omega}-\bar{\triangle}_{\omega}$.
In case of the simultaneous move, the payoff is assumed to be the convex combination of the first and second prizes.
I focus on discussing the conditions for the existence of the learning equilibrium where the players use the random stopping strategy.

Suppose the players use the mixed strategy 
$
\left(\rho,\sigma\right)
$
defined as in \Cref{def:mixedstrategy},
where $\rho$ is the probability that the player stops and takes $R$ before or at time $t$ conditional on no revealing signal.
By definition, $\rho$ must be weakly increasing.
I derive a necessary condition for
$
\frac{d\rho}{dt}\geq0.
$
Let 
$
Q_{\omega}\left(t\right)
$
be the probability that no one has taken $R$ before or at time $t$ in state $\omega$.
Since I focus on symmetric equilibrium, we have
\begin{align}
\label{equ:Qequation}
Q_{\omega}\left(t\right)
=
\left(
1-
F_{\omega}\left(t\right)
\right)
^{N-1}
\end{align}
where
$
F_{\omega}\left(t\right)
$
is the probability that a player takes $R$ before or at time $t$ in state $\omega$.
We have
\begin{align}
\label{equ:FHequation}
F_{H} \left(t\right) 
=
1-e^{-at}\left(1-\rho  \left(t\right)\right)
\end{align}
and
\begin{align}
\label{equ:FLequation}
F_{L} \left(t\right) 
=\intop_{0}^{t}\left[e^{-bs}\frac{d\rho  \left(s\right)}{ds}\right]ds.
\end{align}
Let
$$
U_{R}\left(t\right)
:=
p_{t}\left[u_{H}-\left(1-Q_{H}\left(t\right)\right)\bar{\triangle}_{H}\right]
+
\left(1-p_{t}\right)\left[u_{L}-\left(1-Q_{L}\left(t\right)\right)\bar{\triangle}_{L}\right]
$$
be the time $t$ payoff from taking $R$.
When the players randomise between taking $R$ and acquiring information for $dt$ longer, the indifference condition in equilibrium is
$$
p_{t}adt\left[u_{H}-\left(1-Q_{H}\left(t+dt\right)\right)\bar{\triangle}_{H}\right]
+
\left(1-p_{t}adt-\left(1-p_{t}\right)bdt\right)U_{R}\left(t+dt\right)
-
cdt
=
U_{R}\left(t\right).
$$
When $dt\rightarrow 0$, we have the following indifference condition
\begin{align}
\label{eq:indifferenceconditionwrtQ}
c
=
\left(1-p_{t}\right)b\left[-\left(u_{L}-\left(1-Q_{L}\left(t\right)\right)\bar{\triangle}_{L}\right)\right]
+
\left(1-p_{t}\right)\frac{dQ_{L}}{dt}\left(t\right)\bar{\triangle}_{L}
+
p_{t}\frac{dQ_{H}}{dt}\left(t\right)\bar{\triangle}_{H}.
\end{align}
Given (\ref{equ:Qequation}), (\ref{equ:FHequation}) and (\ref{equ:FLequation}),
(\ref{eq:indifferenceconditionwrtQ}) can be written as a differential equation that involves $\rho$ and its derivatives.
That is,
\begin{align*}
\frac{d\rho}{dt}\left(t\right)
=
\frac{b\left(-u_{L}\right)-c-L_{t}c-b\bar{\triangle}_{L}\left(1-F_{L}\left(t\right)\right)^{N-1}-L_{t}\left(N-1\right)\bar{\triangle}_{H}\left(1-F_{H}\left(t\right)\right)^{N-2}ae^{-at}\left(1-\rho\left(t\right)\right)}{\left(N-1\right)\bar{\triangle}_{L}\left(1-F_{L}\left(t\right)\right)^{N-2}e^{-bt}+L_{t}\left(N-1\right)\bar{\triangle}_{H}\left(1-F_{H}\left(t\right)\right)^{N-2}e^{-at}}.
\end{align*}
To have 
$
\frac{d\rho}{dt}
\geq
0
$,
we need the numerator to be positive (as the denominator is positive).
A necessary condition for 
$
\frac{d\rho}{dt}
\geq
0
$
is
$$
\frac{p_{0}}{1-p_{0}}
\leq
\frac{b\left(-u_{L}\right)-c}{\left(N-1\right)\bar{\triangle}_{H}a+c}
=:
\frac{\tilde{p}_{N}}{1-\tilde{p}_{N}}.
$$
This shows \Cref{prop:N-playerequilibrium}.

\section{Observable actions: proof of \Cref{prop:observableactionresult}}

%In order to discuss the equilibrium, I first define a strategy that is similar to the random stopping strategy as in the unobservable action case.
When actions are observable, the history contains both the public component and the private component.
The public component is the action taken or not taken by the opponent and
the private component is the signal received by the player herself.
The following observation eliminates the histories that are not interesting.

\begin{observation}
After receiving an $H$-state ($L$-state, resp) revealing signal, the player takes $R$ ($S$, resp) immediately.
\end{observation}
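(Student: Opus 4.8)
The plan is to establish this by a dominance argument exploiting the fact that a revealing signal makes the posterior degenerate, so the player has nothing further to learn about $\omega$ and only chooses the timing and the identity of her irreversible action; the observability of the opponent's action merely enlarges the player's information and can therefore only reinforce the conclusion. I would treat the two signals separately.

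\textbf{$H$-state revealing signal.} Here the posterior is $p_{t}=1$. Stopping now and taking $R$ yields $u_{H}$ if the opponent has not yet taken $R$ — an event the player observes — and $u_{H}-\bar{\triangle}_{H}$ if the opponent already has; by \Cref{ass:RactionisbetterinstateH} both are strictly positive, while stopping now and taking $S$ yields $0$, so $R$ now strictly beats $S$ now. To rule out delay, note that acquiring information over an interval $[t,t+\varepsilon]$ strictly costs $c\varepsilon>0$ and does not move the belief; whatever the opponent does in that interval (takes $R$, takes $S$, or does nothing), the supremum of the player's continuation payoff from time $t+\varepsilon$ on equals $u_{H}$, and it is attained only when the opponent has still not taken $R$ — precisely the situation in which the player could already have locked in $u_{H}$ at time $t$. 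Hence the value of any strictly positive delay is at most ``(payoff from $R$ at $t$)$\,-\,c\varepsilon$'', which is strictly smaller than stopping immediately. (The simultaneous-move payoff $u_{H}-\underline{\triangle}_{H}$ lies strictly between $0$ and $u_{H}$ by \Cref{ass:first>simul>second}, so including it does not change any inequality above.)

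\textbf{$L$-state revealing signal.} Here the posterior is $p_{t}=0$. Stopping now and taking $S$ yields $0$, a payoff that does not depend on the opponent or on timing. Stopping now and taking $R$ yields one of $u_{L}$, $u_{L}-\underline{\triangle}_{L}$, $u_{L}-\bar{\triangle}_{L}$, all strictly negative since $u_{L}<0$ and $\bar{\triangle}_{L}>\underline{\triangle}_{L}>0$ (\Cref{ass:first>simul>second}), so $S$ now strictly beats $R$ now. Delaying for $\varepsilon>0$ strictly costs $c\varepsilon$ while changing nothing that matters: the state is already known, so there is no further information to collect, and the best available continuation — taking $S$ — pays $0$ whenever it is taken. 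Hence stopping immediately and taking $S$ is optimal.

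I do not expect a genuine obstacle here. The only point requiring care is the continuous-time formalization of ``any strictly positive delay is strictly worse'', which I would phrase as the continuation-value comparison sketched above rather than as a limit of one-shot deviations, together with the bookkeeping that, once actions are observable, the opponent moving during $[t,t+\varepsilon]$ can only extinguish, never create, a first-mover advantage for the player. This mirrors the argument for private actions in \Cref{sub:strategiesandequilibrium}; observability of actions makes it, if anything, easier.
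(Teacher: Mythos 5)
Your argument is correct and follows essentially the same route as the paper, which justifies the observation informally via the first-mover advantage: once the posterior is degenerate there is nothing left to learn, delay only incurs the flow cost and risks losing the first prize (or, after an $L$-signal, gains nothing over the safe payoff of zero), and the one-off nature of the payoff means the informativeness of the player's own action to the opponent is irrelevant once she acts. Your write-up simply makes the paper's two-sentence dominance intuition explicit, including the correct use of \Cref{ass:first>simul>second} and \Cref{ass:RactionisbetterinstateH}.
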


This observation says that after receiving a revealing signal, the player does not have incentives to postpone the action.
Specifically, she does not have incentive to conceal the fact that she has learned the state even though she knows that her action is informative.
This is because there is the first-mover advantage. 
The player gains nothing from postponing an action after she has learned the state.
Given this observation, the interesting histories are the ones associated with no revealing signal.
It is thus sufficient to check the player's strategy conditional on no arrival of a revealing signal.

Suppose the opponent uses MRSS.
I first consider the history after observing the opponent taking $S$.
Given the opponent's strategy, she only stops acquiring information and takes $S$ after observing the $L$-state revealing signal.
Therefore, after observing the opponent taking $S$, the player infers that the state is $L$. 
Her best response is hence to take $S$ immediately.

Then consider the history after observing no action taken.
First notice that the player's belief evolve in a different way.
Given the opponent's strategy, not observing any action taken indicates that the opponent has not received any revealing signal. 
As a result, the belief after observing no revealing signal and no action taken up to time $t$ is
\begin{align*}
\frac{p_{t}}{1-p_{t}}
=
e^{2\left(b-a\right)t}\frac{p_{0}}{1-p_{0}}.
\end{align*}
Let
\begin{align*}
\bar{U}_{R}\left(t\right)
:=
p_{t}u_{H}+\left(1-p_{t}\right)u_{L}
\end{align*}
be the player's time $t$ payoff from taking $R$ if she is the first $R$ taker and let
\begin{align*}
\underline{U}_{R}\left(t\right)
:=
p_{t}\left(u_{H}-\bar{\triangle}_{H}\right)+\left(1-p_{t}\right)\left(u_{L}-\bar{\triangle}_{L}\right)
\end{align*}
be the player's time $t$ payoff from taking $R$ if she is the second $R$ taker.
Since 
$
p_{0}>p^{L}
$, 
we have 
$
p_{t}
>
p^{L}
$ 
and hence 
$
\bar{U}_{R}\left(t\right)
>
0
$.
The player is indifferent between taking $R$ and acquiring information for $dt$ longer if
\begin{align*}
& 
p_{t}\left\{ adt\left[u_{H}+\left(1-\left(1-adt\right)\left(1-h\left(t\right)dt\right)\right)\bar{\triangle}_{H}\right]\right.
\\
& 
\left.+\left(1-adt\right)\left[\left(1-adt\right)\left(1-h\left(t\right)dt\right)\bar{U}_{R}\left(t+dt\right)+\left(1-\left(1-adt\right)\left(1-h\left(t\right)dt\right)\right)\underline{U}_{R}\left(t+dt\right)\right]\right\} 
\\
& +\left(1-p_{t}\right)\left\{ \left(1-bdt\right)\left[\left(bdt+\left(1-bdt\right)\left(1-h\left(t\right)dt\right)\right)\bar{U}_{R}\left(t+dt\right)+\left(1-bdt\right)h\left(t\right)dt\underline{U}_{R}\left(t+dt\right)\right]\right\} 
\\
&
-cdt
\\
&
=
\\
& \bar{U}_{R}\left(t\right).
\end{align*}
When $dt \rightarrow 0$,
the equation above is equivalent to
\begin{align*}
& 
p_{t}a\left(u_{H}-\bar{U}_{R}\left(t\right)\right)+\left(1-p_{t}\right)b\left(-\bar{U}_{R}\left(t\right)\right)-p_{t}a\left(p_{t}\bar{\triangle}_{H}+\left(1-p_{t}\right)\bar{\triangle}_{L}\right)
\\
&
-h\left(t\right)\left(p_{t}\bar{\triangle}_{H}+\left(1-p_{t}\right)\bar{\triangle}_{L}\right)+\frac{d\bar{U}_{R}\left(t\right)}{dt}
=
c.
\end{align*}
The hazard rate
\begin{align*}
h\left(t\right)
=
\frac{b\left(-u_{L}\right)-c-\frac{p_{t}}{1-p_{t}}\left[a\left(p_{t}\left(\bar{\triangle}_{H}-\bar{\triangle}_{L}\right)+\bar{\triangle}_{L}\right)+c\right]}{\frac{p_{t}}{1-p_{t}}\bar{\triangle}_{H}+\bar{\triangle}_{L}}
\end{align*}
is positive if and only if
\begin{align*}
\frac{p_{t}}{1-p_{t}}
<
\frac{b\left(-u_{L}\right)-c}{a\left[p_{t}\left(\bar{\triangle}_{H}-\bar{\triangle}_{L}\right)+\bar{\triangle}_{L}\right]+c}.
\end{align*}
Since \Cref{prop:observableactionresult} assumes
$
\bar{\triangle}_{H}
=
\bar{\triangle}_{L}
$
and
$
p_{0}
<
\tilde{p},
$
the hazard rate
$
h\left(t\right)
>
0.
$

Last,
consider the history after the opponent taking $R$.
Let $p_{t}^{R}$ be the player's belief after observing the opponent taking $R$ at time $t$.
Since the opponent takes $R$ with positive rate after no revealing signal, the player's belief $p_{t}^{R}$ is smaller than one.
After the opponent has taken $R$, the player becomes the only player in this game. 
The payoff associated with $R$ now is
$
u_{\omega}-\bar{\triangle}_{\omega}
$
in state $\omega \in \{H,L\}$.
The player's best response is to use the single DM optimal strategy.

\end{document}